\providecommand{\U}[1]{\protect\rule{.1in}{.1in}}
\newtheorem{theorem}{Theorem}
\newtheorem{corollary}[theorem]{Corollary}
\newtheorem{definition}[theorem]{Definition}
\newtheorem{lemma}[theorem]{Lemma}
\newtheorem{proposition}[theorem]{Proposition}
\newtheorem{remark}[theorem]{Remark}
\newenvironment{proof}[1][Proof]{\noindent\textbf{#1}\ \ }{\hfill\rule{0.5em}{0.5em}}
\def\Tr{\operatorname{Tr}}
\def\tr{\operatorname{Tr}}
\def\id{\operatorname{id}}
\def\str{\operatorname{str}}
\def\1{\openone}
\def\ox{\otimes}
\newcommand{\ket}[1]{|#1 \rangle}
\newcommand{\proj}[1]{|#1 \rangle \! \langle #1 |}
\newcommand{\SWAP}{\operatorname{SWAP}}
\let\originalleft\left
\let\originalright\right
\renewcommand{\left}{\mathopen{}\mathclose\bgroup\originalleft}
\renewcommand{\right}{\aftergroup\egroup\originalright}
\begin{document}

\title{Capacities of Gaussian Quantum Channels \protect\\
       with Passive Environment Assistance}

\author{Samad Khabbazi Oskouei\footnote{{\tt kh{\_}oskuei@yahoo.com}; ${}^\circ${\tt stefano.mancini@unicam.it}; ${}^\times${\tt andreas.winter@uab.cat}}\\
\textit{{\small {Department of Mathematics}}}\\
\textit{{\small {Varamin-Pishva Branch}}}\\
\textit{{\small {Islamic Azad University}}}\\
\textit{{\small {Varamin, 33817-7489, Iran}}}\\
\and Stefano Mancini${}^\circ$\\
\textit{{\small {School of Science and Technology}}}\\
\textit{{\small {University of Camerino}}}\\
\textit{{\small {Via Madonna delle Carceri 9, 62032 Camerino, Italy}}}\\
\textit{{\small {\&{} INFN--Sezione Perugia}}}\\
\textit{{\small {Via A. Pascoli, 06123 Perugia, Italy}}}
\and Andreas Winter${}^\times$\\
\textit{{\small {ICREA--Instituci\'{o} Catalana de Recerca i Estudis Avan\c{c}ats}}}\\
\textit{{\small {Pg. Lluis Companys, 23, 08010 Barcelona, Spain}}}\\
\textit{{\small {\&{} Grup d'Informaci\'{o} Qu\`{a}ntica, Departament de F\'isica}}}\\
\textit{{\small {Universitat Aut\`{o}noma de Barcelona, 08193 Bellaterra (Barcelona), Spain}}}
}

\date{1 January 2021}

\maketitle

\thispagestyle{empty}

\begin{abstract}
Passive environment assisted communication takes place via a quantum channel
modeled as a unitary interaction between the information carrying system and
an environment, where the latter is controlled by a passive helper, who can
set its initial state such as to assist sender and receiver, but not help actively
by adjusting her behaviour depending on the message.
Here we investigate the information transmission capabilities in this framework by
considering Gaussian unitaries acting on Bosonic systems.

We consider both quantum communication and classical communication with helper,
as well as classical communication with free classical coordination between sender
and helper (conferencing encoders).

Concerning quantum communication, we prove general coding theorems with and
without energy constraints, yielding multi-letter (regularized) expressions.

In the search for cases where the capacity formula is computable, we look for
Gaussian unitaries that are universally degradable or anti-degradable. However, we
show that no Gaussian unitary yields either a degradable or anti-degradable
channel for all environment states. On the other hand, restricting to Gaussian
environment states, results in universally degradable unitaries, for which we
thus can give single-letter quantum capacity formulas.

Concerning classical communication, we prove a general coding theorem for the
classical capacity under and energy constraint, given by a multi-letter expression.
Furthermore, we derive an uncertainty-type relation between the classical capacities
of the sender and the helper, helped respectively by the other party, showing a
lower bound on the sum of the two capacities.
Then, this is used to lower bound the classical information transmission rate in
the scenario of classical communication between sender and helper.
\end{abstract}

\section{Introduction}
\label{sec:intro}
In quantum mechanics, every noisy channel (completely positive and
trace preserving -- cptp -- linear map) is the marginal of a reversible
(i.e.~unitary) interaction with an environment initially in a pure
state; this is the content of Stinespring's dilation theorem
\cite{Stinespring}, and of the subsequent structure theorems of
Choi \cite{Choi}, Jamio{\l}kowski \cite{Jam} and Kraus \cite{Kraus}.
This feature, which distinguishes quantum communication fundamentally
from its classical counterpart, is at the bottom of the possibility to
perform unconditional secret key agreement over a channel, since the
channel essentially uniquely determines the action on the environment.
In this picture, noise in the channel is entirely due to loss of information
into the environment, more precisely the build-up of correlations between
the system and the environment. A series of prior work, starting with
\cite{GW:lostnfound1,GW:lostnfound2} have asked how much one can counteract
the noise if one had access to the environment output state and could feed classical
information back into the channel output system \cite{HaydenKing,SVW,W:lostnfound,MCM,MMM}.

Somewhat dually, two of the present authors have asked previously,
of what benefit can be access to the \emph{initial} state of the
environment \cite{KSWY,KSWY2}. In contrast to the (active) interventions
in the environment of the aforementioned works, we call this passive
environment assistance, since the role of the helper is restricted
to choosing a suitable initial state.
These previous results were obtained in the finite-dimensional setting.
Here, we extend the model and results to infinite-dimensional systems,
with special attention to Gaussian channels and their Gaussian unitary
dilations. Additional motivations for the model of passive environment assistance
comes  from the notion of \emph{environment-parametrized quantum channels}, which
are used to describe quantum memory cells \cite{DW19}.

\medskip
The present paper is structured as follows:
In Section \ref{sec:models} we define the system mode and establish
basic notation. In Section \ref{sec:quantum} we treat quantum communication
capacities both without and with energy constraints; we show that
two-mode Gaussian unitaries are never universally degradable or
anti-degradable, but restricting to Gaussian helper there are
families of either type, allowing us to explicitly calculate
the passive environment-assisted quantum capacity under this restriction.
In Section \ref{sec:classical} we analyze the classical capacity with
a helper under energy constraints both for sender and helper; we show
that the capacity of the sender assisted by the helper and of the
helper assisted by the sender cannot both be small, and apply this
insight to the case of conferencing encoder.
In Section \ref{sec:continuity} we prove that the passive environment-assisted
quantum and classical capacities with energy constraints are continuous
in the unitary interaction, and indeed uniformly so with respect to the
energy-constrained diamond norm. In Section \ref{sec:conclusion} we conclude.
Two appendices provide additional proofs: In Appendix \ref{app:thmproof} we prove
Theorem \ref{thm:nongaussian}, stating that non-trivial two-mode Gaussian unitaries
are neither universally degradable nor universally anti-degradable;
Appendix \ref{app:OMB} proves tighter lower bounds on the sum of classical
capacities for two-mode Gaussian unitaries.


\section{System model and notation}
\label{sec:models}

Let ${\cal L}(X)$ denote the space of linear operators on a (separable) Hilbert space $X$.
We denote the identity operator in ${\cal L}(X)$ as $\1_X$ and the identity map
(ideal channel) $\id : {\cal L}(X) \to {\cal L}(X)$ is denoted by $\id_X$.
For any linear operator $\Lambda : A \to B$ between Hilbert spaces we
denote the trace norm
\begin{equation}
  \|\Lambda\|_1 := \tr\sqrt{\Lambda^\dag\Lambda} = \tr|\Lambda|,
\end{equation}
and the operator norm
\begin{equation}
 \|\Lambda\|_\infty := \sup \left\{ \bigl|\Lambda \ket{\psi}\bigr|
                                     : \ket{\psi} \in A, \bigl|\ket{\psi}\bigr|=1 \right\},
\end{equation}
where $|\cdot|$ denotes the Hilbert space norm.
Let ${\cal T}(X)\subset {\cal L}(X)$ denote the set of trace class operators
whose trace norm, defined above, is finite; likewise, ${\cal B}(X)\subset {\cal L}(X)$
is the set of bounded operators, whose operator norm is finite.
Any positive semidefinite element $\rho\in {\cal T}(X)$ with $\tr\rho=1$ is called a
density operator. Obviously the set ${\cal S}(X)$ of such operators is a proper subset of
${\cal T}(X)$.
A quantum channel $\mathcal{N}$ from system $A$ to a system $B$ is a completely
positive and trace preserving  (CPTP) linear map  from  ${\cal T}(A)$ to ${\cal T}(B)$.
Furthermore, a linear map $\mathcal{N}: {\cal T}(A) \to {\cal T}(B)$ is called Hermitian
preserving if for any bounded Hermitian operator $O$, also $\mathcal{N}(O)$ results Hermitian.

For a density operator $\alpha$, the von Neumann entropy is defined as
\begin{equation}\label{eq:SvonN}
  S(\alpha) := -\tr \alpha\ln\alpha .
\end{equation}
Throughout the paper we use natural logarithms $\ln$, as is customary
in settings of continuous alphabets, resulting in the entropy and capacity
be counted in units of \emph{nats}.
For two density operators $\alpha$ and $\beta$
such that $supp(\alpha)\subseteq supp(\beta)$, the quantum relative entropy of
$\alpha$ with respect to $\beta$ is defined as
\begin{equation}\label{eq:qrelent}
  D(\alpha\|\beta) := \tr \alpha (\ln\alpha-\ln\beta);
\end{equation}
otherwise, $D(\alpha\|\beta) := \infty$.

A Hamiltonian $H_A$ is a densely defined self-adjoint operator on the Hilbert
space of a quantum system $A$, that is bounded from below.
One way of defining such an operator is
to let $\{|e_j\rangle\}$ be an orthonormal basis for the Hilbert
space under consideration (e.g. Fock basis), and $\{a_j\}$, a sequence of real numbers bounded from below.
Then,
\begin{equation}
  H_A |\psi\rangle := \sum_{j=1}^{\infty} a_j |e_j\rangle\langle e_j |\psi\rangle,
\end{equation}
defines $H_A$ on the dense subspace
$\mathcal{I}=\left\{ |\psi\rangle : \sum_{j=1}^\infty a_j^2 |\langle e_j |\psi\rangle|^2 < +\infty\right\}$,
with $\{a_j\}$ the eigenvalues corresponding to the eigenvectors $\{|e_j\rangle\}$.
All Hamiltonians with discrete spectrum arise in this way.

For an arbitrary state $\rho$, the expectation of $H_A$ is given by
\begin{equation}
  \tr \rho H_A = \sum_{j=1}^\infty a_j \langle e_j|\rho|e_j\rangle.
\end{equation}
The $n$-th extension $H_{A^n}$ of the energy observable $H_A$ to
the system $A^n=A^{\otimes n}$ is defined in an i.i.d. fashion as follows:
\begin{equation}
  H_{A^n}:= H_A\ox \1\ox \cdots \ox \1
            +\1 \ox H_A\ox \cdots \ox \1
            + \ldots + \1\ox\cdots\ox\1\ox H_A .
\end{equation}

In the present paper, we
consider a communication model between Alice and Bob that involves
also a third party (helper) controlling the environment input system,
whose aim is to enhance the communication between Alice and Bob.
We assume that the helper sets the initial state of the environment to
enhance the communication from Alice to Bob, and then has no role in the
coding protocol
(thus we refer to this model as \emph{passive environment-assisted model}).

Consider an isometry $W : A \otimes E \to B \otimes F$ which defines a channel
${\cal N} : {\cal L}(A \otimes E) \to {\cal L}(B)$, whose action
on the input state $\sigma$ on $A \otimes E$ is
\begin{equation}
  \label{eq:Ntot}
  {\cal N}^{AE\to B}(\sigma) = \tr_F W\sigma W^\dag .
\end{equation}
Then an effective channel ${\cal N}_\eta : {\cal L}(A) \to {\cal L}(B)$ is
established between Alice and Bob once the initial state $\eta$ on $E$ is set:
\begin{equation}
  \label{eq:Neta}
  {\cal N}^{A\to B}_\eta (\rho) := {\cal N}^{AE\to B}(\rho \otimes \eta).
\end{equation}
The complementary channel is
\begin{equation}
  \label{eq:Ntildeeta}
  {\widetilde{\cal N}}^{A\to F}_\eta (\rho) :=\tr_B W (\rho\otimes \eta) W^\dag,
\end{equation}
while the adjoint channel ${\cal N}^{* \, B\to A}_\eta$ acts on the
bounded operator $b \in {\cal B}(B)$ such that
\begin{equation}
  \label{eq:Nstareta}
  \tr\left[  {\cal N}^{* \, B\to A}_\eta (b) \, \rho\right]
     = \tr \left[ b \, {\cal N}^{A\to B}_\eta(\rho)\right] \qquad b\in {\cal B}(A).
\end{equation}
It can be written in the explicit form
\begin{equation}
  \mathcal{N}^{* \, B\to A}_\eta(b) = \tr_E W^\dag (b\otimes\1_F) W (\1_A\otimes\eta),
\end{equation}
using the isometry $W$ and the state $\eta$.


\subsection{Gaussian states}

Let us recall some basic facts about Gaussian states,
which also serves the purpose of fixing the notations used in following
sections. The canonical observables
$\hat{\boldsymbol{r}} =\left(\hat q_1,\hat p_1,\ldots, \hat q_N, \hat p_N \right)^\top$
describe a Bosonic system of $N$ harmonic modes in a Hilbert space
$X = \bigotimes_{k=1}^N X_k$.
On such a system, we consider by default a quadratic Hamiltonian,
whose most general form is
\begin{equation}\label{eq:HX}
  H_X = \hat{\boldsymbol{r}}_X \boldsymbol{\Omega}_X \hat{\boldsymbol{r}}_X^\top,
\end{equation}
where $\boldsymbol{\Omega}_X$ is positive symmetric matrix,
assumed for the sake of simplicity to have a unique $N$-fold degenerate
eigenvalue $\omega^X$. Hence in normal form,
$H_X = \omega^X \sum_{j=1}^{N} (\hat{q}_j^2 + \hat{p}_j^2)/2$.

Hereafter we denote vectors (resp. matrices) by lower (resp. upper) case bold symbols.
The Heisenberg canonical commutation relations satisfied by the canonical observables
can be compactly represented as
\begin{equation}\label{eq:13}
\left[\hat r_j,\hat r_k\right]=i\Sigma_{jk},
\quad  \forall j,k\in\{1,\ldots,2N\},
\end{equation}
with
\begin{equation}
\boldsymbol{\Sigma}:=\bigoplus_1^N \begin{pmatrix}
                                                    0 & 1 \\
                                                    -1 & 0
                                                  \end{pmatrix},
\end{equation}
and $\hat r_{2k-1}=\hat q_k$, $\hat r_{2k}=\hat p_k$.
For any density operator $\rho$ acting on $X$, the vector mean
(or first moment) is the vector ${\boldsymbol{d}}\in \mathbb{R}^{2N}$,
whose components are given by
\begin{equation}
  d_k:=\tr \rho \hat r_k .
\end{equation}
The $2N\times 2N$ covariance matrix (CM) $\boldsymbol V$ is given by
\begin{equation}
  {V}_{jk}:=\tr \rho
                \left\{\left(\hat r_j-d_j\right)\left(\hat r_k-d_k\right)
                      + \left(\hat r_k-d_k\right)\left(\hat r_j-d_j\right) \right\},
\end{equation}
which is real, symmetric and positive definite.
Furthermore, for the CM to correspond a bona fide quantum
state it has to satisfies the following Heisenberg-Robertson uncertainty relation
\begin{equation}\label{eq:UR}
  {\boldsymbol V}+i{\boldsymbol{\Sigma}} \geq 0.
\end{equation}
Conversely, if the uncertainty relation is satisfied, there exists a quantum
state with CM ${\boldsymbol V}$, in fact a Gaussian state $\rho$. It is uniquely defined
by its associated a Gaussian characteristic function
\begin{equation}\label{eq:20}
  \chi_\rho({\boldsymbol\zeta})
    = \exp\left(-i \left(\boldsymbol{\Sigma} {\boldsymbol{ d}}\right)^\top {\boldsymbol\zeta}
                -\frac{1}{4} {\boldsymbol\zeta}^\top \boldsymbol{\Sigma}
                             {\boldsymbol{V}} \boldsymbol{\Sigma}^\top{\boldsymbol\zeta}
           \right),
\end{equation}
where ${\boldsymbol\zeta}\in\mathbb{R}^{2N}$.
Recall that the (zero-ordered) characteristic function is defined as
\begin{equation}
  \chi_\rho({\boldsymbol\zeta}):=\tr\left(\rho W_{\boldsymbol\zeta}\right),
\end{equation}
with the Weyl displacement operator given by
\begin{equation}
  W_{\boldsymbol\zeta}:=\exp\left(-i{\hat{\boldsymbol{r}}}^\top\boldsymbol{\Sigma} {\boldsymbol\zeta}\right).
\end{equation}

Thus Gaussian states are completely characterized by $\boldsymbol{d}$ and $\boldsymbol{V}$.

The von Neumann entropy \eqref{eq:SvonN} of an $N$-mode Gaussian state $\rho$ can be evaluated
through its covariance matrix as
\begin{equation}
  S(\rho) = S(\boldsymbol{V}) = \sum_{i=1}^{N} g(\nu_i),
\end{equation}
where $\nu_1, \ldots, \nu_N$ are the symplectic eigenvalues of $\mathbf{V}$.
Note that for Gaussian states, the entropy is a function entirely
of the CM, and so we slightly abuse notation writing $S(\boldsymbol{V})$.
Here the function $g$ is defined by
\begin{equation}
  \label{eq:gfunction}
  g(x) := \left(x+\frac{1}{2}\right)\ln\left(x+\frac{1}{2}\right)
                 -\left(x-\frac{1}{2}\right)\ln\left(x-\frac{1}{2}\right),
\end{equation}
and as such $g(x)$ is an increasing and concave function.


\subsection{Gaussian unitaries}\label{subsec:GU}

Consider $N$ Bosonic modes.
A Gaussian unitary on them $\exp(-iH)$ with $H$ as in Eq.~\eqref{eq:HX},
can be simply described by an affine map
\begin{equation}
  (\boldsymbol{S},\boldsymbol{\zeta}) :\hat{\boldsymbol{r}}\to \boldsymbol{S}\, \hat{\boldsymbol{r}} + \boldsymbol{\zeta},
\end{equation}
where $\boldsymbol{\zeta}\in \mathbb{R}^{2 N}$ and $\boldsymbol{S}\in Sp(2N,\mathbb{R})$ because the  transformation must preserve the commutation relations \eqref{eq:13}.
Clearly the eigenvalues $\boldsymbol{r}$ of the quadrature operators $\hat{\boldsymbol{r}}$ must follow the same rule, i.e.,
\begin{equation}
  (\boldsymbol{S},\boldsymbol{\zeta}) :\boldsymbol{r}\to \boldsymbol{S}\, \boldsymbol{r}+ \boldsymbol{\zeta}.
\end{equation}
Thus, a Gaussian unitary is equivalent to an affine symplectic map $(\boldsymbol{S},\boldsymbol{\zeta})$ acting on the phase space, and can be denoted by $\boldsymbol{U}_{\boldsymbol{S}, \boldsymbol{\zeta}}$. In particular, we can write
 \begin{equation}\label{eq:Ucanonical}
 \boldsymbol{U}_{\boldsymbol{S},\boldsymbol{\zeta}}=W_{\boldsymbol{\zeta}}
 \boldsymbol{U}_{\boldsymbol{S}},
\end{equation}
where the canonical unitary $\boldsymbol{U}_{\boldsymbol{S}}$ corresponds to a linear
symplectic map  $\boldsymbol{r}\to \boldsymbol{S}\, \boldsymbol{r}$, and the Weyl operator
$W_{\boldsymbol{\zeta}}$ to a phase-space
translation $\boldsymbol{r}\to \boldsymbol{r}+ \boldsymbol{\zeta}$.

In terms of the statistical moments, ${\boldsymbol{d}}$ and $\boldsymbol{V}$, the action of
$\boldsymbol{U}_{\boldsymbol{S}, \boldsymbol{\zeta}}$ is characterized by the following transformations
\begin{equation} \label{eq:22-1}
 {\boldsymbol{d}}\to \boldsymbol{S} {\boldsymbol{d}}+ \boldsymbol{\zeta},\quad{\boldsymbol{V}}\to \boldsymbol{S} \boldsymbol{V} \boldsymbol{S}^\top.
\end{equation}
Therefore, the action of a Gaussian unitary $\boldsymbol{U}_{\boldsymbol{S}, \boldsymbol{\zeta}}$
over a Gaussian state $\rho({\boldsymbol{d}}, \boldsymbol{V})$ will be completely described by Eq.~\eqref{eq:22-1}.

Note that the above arguments also apply if we replace the vector of quadrature operators
$\hat{\boldsymbol{r}}$ by the vector of ladder operators (also known as annihilation
and creation operators)
$\hat{\boldsymbol{\upsilon}}=(\hat{a}_1, \hat{a}_1^\dagger, \cdots, \hat{a}_n, \hat{a}_n^\dagger)^\top$,
where
\begin{equation}
\hat{a}_j=\frac{\hat{q}_j+i\hat{p}_j}{\sqrt{2}}.
\end{equation}

In such a case however, it will be $\boldsymbol{S}\in Sp(2N,\mathbb{C})$.
Let us now focus on two-mode Gaussian unitaries. Consider
$\hat{\boldsymbol{\upsilon}}=(\hat{a}, \hat{a}^\dagger, \hat{b}, \hat{b}^\dagger)^\top$
with
\begin{equation}
\hat{a}=\frac{\hat{q}_a+i\hat{p}_a}{\sqrt{2}},
\quad
\hat{b}=\frac{\hat{q}_b+i\hat{p}_b}{\sqrt{2}}.
\end{equation}
Then, the canonical unitary of Eq.~\eqref{eq:Ucanonical}, named here ${\boldsymbol U}_{ab}$,
satisfies
\begin{equation}
 {\boldsymbol U}_{ab} \, \hat{\boldsymbol{\upsilon}} \, {\boldsymbol U}_{ab}^\dagger=  {\boldsymbol  S}\cdot \hat{\boldsymbol{\upsilon}},
\end{equation}
with ${\boldsymbol S}\in Sp(4,\mathbb{C})$.
Define
\begin{equation}
 q=\vert S_{11}\vert^2 -\vert S_{12}\vert^2 ,
 \end{equation}
where $S_{11}$ and $S_{12}$ are matrix elements of  $ {\boldsymbol S}$.
In \cite[App.~A]{CG06}, it is shown that for $0<q, q\neq 1$
\begin{equation}
 {\boldsymbol U}_{ab}= \left( {\boldsymbol S}_a \otimes  {\boldsymbol S}_b\right)
  {\boldsymbol U}_{ab}^{(q)}  \left(I_a\otimes  {\boldsymbol S}'_b\right),
\end{equation}
where $ {\boldsymbol S}_a$, $ {\boldsymbol S}_b$ and $ {\boldsymbol S}'_b$ are one-mode squeezing transformations.
For $q\in(0, 1)$, $ {\boldsymbol U}_{ab}^{(q)}$ is characterized by the symplectic matrix
\begin{equation}\label{Scan1}
 {\boldsymbol S}_{ab}^{(q)}=\left(\begin{array}{cccc}
\sqrt{q} & 0 & -\sqrt{1-q} & 0  \\
0 & \sqrt{q} & 0 & -\sqrt{1-q} \\
\sqrt{1-q} & 0 & \sqrt{q} & 0 \\
0 & \sqrt{1-q} & 0 & \sqrt{q}
\end{array}\right),
\end{equation}
while for $q>1$, by
\begin{equation}\label{Scan2}
 {\boldsymbol S}_{ab}^{(q)}=\left(\begin{array}{cccc}
\sqrt{q} & 0 & 0 & -\sqrt{q-1}  \\
0 & \sqrt{q} & -\sqrt{q-1} & 0 \\
0 & -\sqrt{q-1} & \sqrt{q} & 0 \\
-\sqrt{q-1} & 0 & 0 & \sqrt{q}
\end{array}\right).
\end{equation}
The case $q<0$ can be traced back to the case $q>0$ by the following argument.
Consider the transformation $\SWAP_{ab}$ swapping (exchanging) the two modes, defined by
\begin{equation}
 \SWAP_{ab} = \SWAP_{ab}^\dagger,\quad
 \SWAP_{ab} \hat{a} \SWAP_{ab}^\dagger=\hat{b}, \quad
 \SWAP_{ab} \hat{b} \SWAP_{ab}^\dagger=\hat{a}.
\end{equation}
Therefore, one gets the following relation:
\begin{equation}
 {\boldsymbol\SWAP}_{ab}  {\boldsymbol U}_{ab} \hat{\boldsymbol{\upsilon}}
 {\boldsymbol U}_{ab}^\dagger  {\boldsymbol\SWAP}_{ab}
 = \widetilde{{\boldsymbol S}}\cdot \hat{\boldsymbol{\upsilon}},
\end{equation}
where $\widetilde{ {\boldsymbol S}}$ is a $4\times 4$ matrix obtained by shifting by $2$
the columns of the symplectic matrix ${\boldsymbol S}$ describing the unitary ${\boldsymbol U}_{ab}$.
In other words,
\begin{equation}
  \widetilde{S}_{ij}=S_{i, j\oplus 2} ,
\end{equation}
where $\oplus$ denotes the sum modulo $4$. In this way we have
\begin{equation}
 {\boldsymbol U}_{ab}=   {\boldsymbol\SWAP}_{ab} \left( {\boldsymbol S}_a \otimes  {\boldsymbol S}_b\right)
  {\boldsymbol U}_{ab}^{(1-q)}  {\boldsymbol S}'_b.
\end{equation}


\subsection{Gaussian quantum channels}

A Bosonic Gaussian channel (BGC) ${\cal N}^{A\to B}$ is a linear
completely positive and trace preserving map defined on ${\cal T}(A)$
and taking values in ${\cal T}(B)$, that maps every Gaussian state
to a Gaussian state.
As Gaussian states span all states and are completely characterized by
their first and second moments, the BGC ${\cal N}^{A\to B}$ can be completely
characterized by the rule of transformations on the vector mean and
the covariance matrix. On the level of vector mean and covariance matrices,
the action of ${\cal N}^{A\to B}$ is as follows:
\begin{equation}
  \label{dVtransforms}
  \begin{split}
  \boldsymbol{d}_A\mapsto \boldsymbol{d}_B
                  &= \boldsymbol{X} \boldsymbol{d}_A + \boldsymbol{d}_E,\\
  \boldsymbol{V}_A\mapsto \boldsymbol{V}_B
                  &={\boldsymbol{ X}}{\boldsymbol{ V}}_A{\boldsymbol{ X}}^\top+{\boldsymbol{ Y}},
  \end{split}
\end{equation}
where ${\boldsymbol{X}}$ and ${\boldsymbol{Y}}$ are real matrices with
${\boldsymbol{ Y}} = {\boldsymbol{ Y}}^\top$ and ${\boldsymbol{ Y}}\geq 0$.
For this transformation to represent a bona fide quantum channel, in other
words taking into account the complete positivity condition, we must have
\begin{equation}
  {\boldsymbol{ Y}}+i\boldsymbol{\Sigma} \geq i {\boldsymbol{ X}}\boldsymbol{\Sigma}{\boldsymbol{X}}^\top.
\end{equation}
In particular, when ${\boldsymbol{ Y}} = 0$, the channel ${\cal N}^{A\to B}$
represents a unitary evolution of the system and from Eq. \eqref{eq:UR},
it follows that $\boldsymbol{ X}$ is a symplectic matrix. Thus, the action
of a Gaussian unitary $U^{A\to B}$ on the state $\rho^A$ with $N_A$ modes can
be described by a symplectic matrix of size $2N_A\times 2N_A$ as follows:
\begin{equation}
  \rho^B = U\rho^A U^\dag \leftrightarrow {\boldsymbol{ V}}_B
         = {\boldsymbol{ S}}{\boldsymbol{ V}}_A {\boldsymbol{ S}}^\top.
\end{equation}
It is furthermore well-known that a quantum channel can be seen as part of a
unitary evolution on a larger system whose ancillary parts are not under our control.
Actually, every BGC acting on $N_A$ modes can be represented by a unitary operation
$U^{AE\to BF}$ on the system and a minimal environment of $N_E$ modes, where
$N_E\leq 2N_A$. This unitary interaction,
extending the argument of Subsection \ref{subsec:GU}) to multimodes,
 can be described by a symplectic matrix
$\boldsymbol{S}$ , written in block form as follows:
\begin{equation}
  \label{sympl}
  \boldsymbol{S} = \begin{pmatrix}
                     {\boldsymbol{ M}} & {\boldsymbol{ N}} \\
                     {\boldsymbol{ O}} & {\boldsymbol{ P}}
                   \end{pmatrix}.
\end{equation}
When the input state in the environment is ${\boldsymbol{ V}}_E$, the effective
channel ${\cal N}^{A\to B}_{\boldsymbol{ V}_E}$ can be described as
\begin{equation}\label{eq:22}
  {\boldsymbol{V}}_A \mapsto {\boldsymbol{ V}}_B
                     =       {\boldsymbol{ M}} {\boldsymbol{ V}}_A {\boldsymbol{ M}}^\top
                             + {\boldsymbol{ N}} {\boldsymbol{ V}}_E {\boldsymbol{ N}}^\top.
\end{equation}
In turn, the complementary channel $\widetilde{\cal N}^{A\to F}_{\boldsymbol{ V}_E}$
acts on the CM as
\begin{equation}\label{eq:23}
  {\boldsymbol{ V}}_A \mapsto {\boldsymbol{ V}}_F
                      =       {\boldsymbol{ O}}{\boldsymbol{ V}}_A {\boldsymbol{ O}}^\top
                               +{\boldsymbol{ P}} {\boldsymbol{ V}}_E {\boldsymbol{ P}}^\top.
\end{equation}

\begin{lemma}
\label{lem:lemma2}
Let $\mathcal{N}^{AE\rightarrow B}$ be a Gaussian channel from system $AE$ to system
$B$ with input Gaussian states subject to the conditions $\Tr \rho H_A \leq P_A$
and $\Tr \eta H_E \leq P_E$, for density operators  $\rho$ and $\eta$ on systems $A$ and $E$,
respectively.
Then, there exists a quadratic Hamiltonian $H_B$ on system $B$ such that
\begin{equation}
  \label{eq:cohamiltonian}
  \Tr \mathcal{N}(\rho\ox\eta) H_B \leq 2P_A+ 2P_E, \quad\text{and}\quad \Tr e^{-\beta H_B}< \infty , \ \forall \beta>0.
\end{equation}
Furthermore, it holds
\begin{equation}\label{eq:coentropy}
 \sup_{\eta : \Tr\left(\eta  H_{E} \right)\leq P_E}\,\, \sup_{\rho : \Tr\left(\rho (H_{A}) \right)\leq P_A}
                         S(\mathcal{N}(\rho\ox\eta)) < \infty.
\end{equation}
\end{lemma}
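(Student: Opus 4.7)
My plan is to exploit the fact that the Gaussian dilation of $\mathcal{N}$ acts affinely on the first and second moments, so if the inputs carry bounded energy, the output moments are bounded linear combinations of the input moments; these can then be absorbed into the expectation of a suitably chosen quadratic output Hamiltonian.

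First I would define the canonical oscillator Hamiltonian
\[
  H_B := \omega^B \sum_{k=1}^{N_B} \tfrac{1}{2}(\hat{q}_k^2 + \hat{p}_k^2),
\]
for a frequency $\omega^B > 0$ to be fixed below. Its Fock spectrum is discrete and of the form $\omega^B(n_1+\cdots+n_{N_B}+N_B/2)$ with nonnegative integers $n_k$, so that $\Tr e^{-\beta H_B} = e^{-\beta \omega^B N_B/2}(1-e^{-\beta \omega^B})^{-N_B}$ is finite for every $\beta>0$, yielding the integrability condition in \eqref{eq:cohamiltonian}. Moreover, for any state $\sigma$ on $B$ with finite second moments, direct computation gives $\Tr\sigma H_B = (\omega^B/4)\Tr\boldsymbol{V}_\sigma + (\omega^B/2)|\boldsymbol{d}_\sigma|^2$, a positive linear combination of the trace of the covariance matrix and the squared norm of the mean vector.

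Now applying the transformation laws from \eqref{dVtransforms} and \eqref{eq:22}, namely $\boldsymbol{V}_B = \boldsymbol{M}\boldsymbol{V}_A\boldsymbol{M}^\top + \boldsymbol{N}\boldsymbol{V}_E\boldsymbol{N}^\top$ (and the analogous $\boldsymbol{d}_B = \boldsymbol{M}\boldsymbol{d}_A + \boldsymbol{N}\boldsymbol{d}_E$ obtained from the isometric dilation \eqref{sympl}), together with the elementary bound $\Tr(\boldsymbol{M}\boldsymbol{V}_A\boldsymbol{M}^\top)\leq \|\boldsymbol{M}\|_\infty^2\,\Tr\boldsymbol{V}_A$, I would estimate $\Tr \mathcal{N}(\rho\otimes\eta)H_B$ by a fixed constant---depending on the matrix norms of $\boldsymbol{M}$ and $\boldsymbol{N}$---times $(\Tr\boldsymbol{V}_A + |\boldsymbol{d}_A|^2 + \Tr\boldsymbol{V}_E + |\boldsymbol{d}_E|^2)$. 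The identity $\Tr\rho H_A = (\omega^A/4)\Tr\boldsymbol{V}_A + (\omega^A/2)|\boldsymbol{d}_A|^2 \leq P_A$ (and its $E$-analogue) shows both summands to be individually bounded by $P_A$ (respectively $P_E$), so the whole expression is at most a constant times $P_A/\omega^A + P_E/\omega^E$. Choosing $\omega^B$ sufficiently small then forces $\Tr\mathcal{N}(\rho\otimes\eta)H_B \leq 2P_A + 2P_E$, completing \eqref{eq:cohamiltonian}.

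Finally, for the entropy bound \eqref{eq:coentropy}, I would invoke the standard maximum-entropy argument: nonnegativity of $D(\sigma \| e^{-\beta H_B}/\Tr e^{-\beta H_B})$ gives $S(\sigma) \leq \beta\,\Tr\sigma H_B + \ln\Tr e^{-\beta H_B}$ for every state $\sigma$ and every $\beta>0$. Applied to $\sigma=\mathcal{N}(\rho\otimes\eta)$ using the first part of the lemma, this yields $S(\mathcal{N}(\rho\otimes\eta)) \leq \beta(2P_A+2P_E) + \ln\Tr e^{-\beta H_B} < \infty$, uniformly in $\rho,\eta$ satisfying the energy constraints. I anticipate no conceptual obstacle; the main nuisance is bookkeeping the affine displacements and the various factors of $1/2$ relating covariance-matrix traces to energies when pinning down the explicit $\omega^B$.
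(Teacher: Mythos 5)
Your proposal is correct and follows essentially the same route as the paper: both choose $H_B$ proportional to $\hat{\boldsymbol{r}}_B\hat{\boldsymbol{r}}_B^\top$ with the proportionality constant tuned small enough (the paper's $c$ with $cc_E\le 1$, your $\omega^B$), bound $\Tr\mathcal{N}(\rho\ox\eta)H_B$ via the affine transformation of first and second moments under the symplectic dilation together with a triangle-type inequality on the displacements, and rely on $\Tr e^{-\beta H_B}<\infty$. Your version is slightly more streamlined (it skips the paper's auxiliary $\eta$-dependent Hamiltonian $H_{B,\eta}$) and usefully makes explicit the max-entropy step $S(\sigma)\le\beta\Tr\sigma H_B+\ln\Tr e^{-\beta H_B}$ that the paper leaves implicit for \eqref{eq:coentropy}, but these are presentational rather than substantive differences.
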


\begin{proof}
Let us generically consider each system $A$, $E$, $B$ to be composed of $N$ modes,
and recall from Eq.~\eqref{eq:HX}, that
\begin{equation}
  H_A = \hat{\boldsymbol{r}}_A \boldsymbol{\Omega}_A \hat{\boldsymbol{r}}_A^\top,
\end{equation}
as well as
\begin{equation}
   H_E=\hat{\boldsymbol{r}}_E \boldsymbol{\Omega}_E \hat{\boldsymbol{r}}_E^\top,
\end{equation}
to be quadratic Hamiltonians, where $\boldsymbol{\Omega}_A$ and $\boldsymbol{\Omega}_E$
are positive matrices with eigenvalues $\omega^A$ and $\omega^E$.

On the system $A$ (resp. $E$), for a given state $\rho$ (resp. $\eta$) with covariance matrix
$\boldsymbol{V}_\rho$ (resp. $\boldsymbol{V}_\eta$) the constrained energy is given by
$\Tr \rho H_A
  =\Tr \boldsymbol{\Omega}_A \boldsymbol{V}_\rho + \boldsymbol{d}_A \boldsymbol{\Omega}_A \boldsymbol{d}^\top_A
  \leq P_{A}$,
and similarly
$\Tr \rho H_E
  =\Tr \boldsymbol{\Omega}_E \boldsymbol{V}_\eta + \boldsymbol{d}_E \boldsymbol{\Omega}_E \boldsymbol{d}^\top_E
  \leq P_{E}$.
Let us define
\begin{equation}\label{eq:conHB}
  H_{B, \eta} := c \left(\hat{\boldsymbol{r}}_B \hat{\boldsymbol{r}}_B^\top - (\Tr \boldsymbol{V}_\eta \boldsymbol{N}^\top \boldsymbol{N}) \1_B - \boldsymbol{d}_\eta  \boldsymbol{N}^\top \boldsymbol{N} \boldsymbol{d}_\eta^\top \1_B \right),
\end{equation}
where $c$ is a positive real constant.
We know that $\boldsymbol{N}$ and $\boldsymbol{\Omega}_E$ are finite dimensional matrices.
Therefore, it is possible to choose a constant $c_E>0$ such that
$\boldsymbol{N}^\top \boldsymbol{N}\leq c_E \boldsymbol{\Omega}_E$.
As a consequence,
\begin{equation}
   \Tr \boldsymbol{V}_\eta \boldsymbol{N}^\top \boldsymbol{N}
      + \boldsymbol{d}_\eta  \boldsymbol{N}^\top \boldsymbol{N} \boldsymbol{d}_\eta^\top
   \leq c_E\Tr(\boldsymbol{\Omega}_E \boldsymbol{V}_\eta)
      + c_E \boldsymbol{d}_\eta  \boldsymbol{\Omega}_E \boldsymbol{d}_\eta^\top
    = c_E\Tr \eta H_E \leq c_E P_{E},
\end{equation}
hence
\begin{equation}
      H_{B, \eta} \geq c(\hat{\boldsymbol{r}}_B\hat{\boldsymbol{r}}_B^\top -c_E P_{E}\1_B).
\end{equation}
In other words, the eigenvalues of $H_{B, \eta}$ are bounded from below. Therefore, we have
\begin{align}\label{re:gibes}
 \Tr \exp(-\beta H_{B, \eta})
   &\leq \Tr \exp(-\beta c\hat{\boldsymbol{r}}_B \hat{\boldsymbol{r}}_B^\top)
          \exp(\beta c c_E P_{E}) < \infty.
\end{align}
On the other hand, we have
\begin{equation}
  \Tr \rho \mathcal{N}_\eta^*(H_{B,\eta})
    = \Tr \mathcal{N}_\eta(\rho) H_{B, \eta}
    = c\Tr \mathcal{N}_\eta(\rho) \hat{\boldsymbol{r}}_B \hat{\boldsymbol{r}}_B^\top
      - c\Tr \boldsymbol{V}_\eta \boldsymbol{N}^\top \boldsymbol{N}
      - c\boldsymbol{d}_\eta  \boldsymbol{N}^\top \boldsymbol{N} \boldsymbol{d}_\eta^\top,
\end{equation}
hence
\begin{equation}\begin{split}
\Tr \rho \mathcal{N}_\eta^*(H_{B, \eta})
  &= c\Tr(\boldsymbol{M}^\top \boldsymbol{V}_\rho \boldsymbol{M} + \boldsymbol{N}^\top \boldsymbol{V}_\eta \boldsymbol{N})
      + c\left(\boldsymbol{d}_\rho \boldsymbol{M}^\top+\boldsymbol{d}_\eta \boldsymbol{N}^\top\right)
         \left(\boldsymbol{d}_\rho \boldsymbol{M}^\top+\boldsymbol{d}_\eta \boldsymbol{N}^\top\right)^\top \\
  &\phantom{=}
   - c\Tr \boldsymbol{V}_\eta \boldsymbol{N}^\top \boldsymbol{N}
   - c\boldsymbol{d}_\eta  \boldsymbol{N}^\top \boldsymbol{N} \boldsymbol{d}_\eta^\top \\
  &= c\Tr \boldsymbol{M}^\top \boldsymbol{V}_\rho \boldsymbol{M}
     + c\left(\boldsymbol{d}_\rho \boldsymbol{M}^\top
             +\boldsymbol{d}_\eta \boldsymbol{N}^\top\right)
        \left(\boldsymbol{d}_\rho \boldsymbol{M}^\top+\boldsymbol{d}_\eta \boldsymbol{N}^\top\right)^\top
     - c\boldsymbol{d}_\eta  \boldsymbol{N}^\top \boldsymbol{N} \boldsymbol{d}_\eta^\top.
\end{split}\end{equation}
From the triangle inequality, we have
\begin{equation}
 \left(\boldsymbol{d}_\rho \boldsymbol{M}^\top+\boldsymbol{d}_\eta \boldsymbol{N}^\top\right) \left(\boldsymbol{d}_\rho \boldsymbol{M}^\top+\boldsymbol{d}_\eta \boldsymbol{N}^\top\right)^\top\leq  2\boldsymbol{d}_\rho \boldsymbol{M}^\top \boldsymbol{M}\boldsymbol{d}_\rho^\top   +2\boldsymbol{d}_\eta \boldsymbol{N}^\top\boldsymbol{N}\boldsymbol{d}_\eta^\top.
\end{equation}
From the above inequality, one gets
\begin{equation}\begin{split}
  \Tr \rho \mathcal{N}_\eta^*(H_{B, \eta})
    &\leq c\Tr \rho \, \hat{\boldsymbol{r}}_A \boldsymbol{M}^\top \boldsymbol{M}\hat{\boldsymbol{r}}_A^\top
          + 2c\boldsymbol{d}_\rho \boldsymbol{M}^\top \boldsymbol{M}\boldsymbol{d}_\rho^\top
          + 2c\boldsymbol{d}_\eta \boldsymbol{N}^\top\boldsymbol{N}\boldsymbol{d}_\eta^\top
          - c\boldsymbol{d}_\eta  \boldsymbol{N}^\top \boldsymbol{N} \boldsymbol{d}_\eta^\top \\
    &\leq c\Tr \rho \, \hat{\boldsymbol{r}}_A \boldsymbol{M}^\top \boldsymbol{M}\hat{\boldsymbol{r}}_A^\top
          + 2c\boldsymbol{d}_\rho \boldsymbol{M}^\top \boldsymbol{M}\boldsymbol{d}_\rho^\top
          + c\boldsymbol{d}_\eta \boldsymbol{N}^\top\boldsymbol{N}\boldsymbol{d}_\eta^\top  \\
    &\leq c\Tr \rho \, \hat{\boldsymbol{r}}_A \boldsymbol{M}^\top \boldsymbol{M}\hat{\boldsymbol{r}}_A^\top
          + 2c\boldsymbol{d}_\rho \boldsymbol{M}^\top \boldsymbol{M}\boldsymbol{d}_\rho^\top
          + c c_E\boldsymbol{d}_\eta    \boldsymbol{\Omega}_E   \boldsymbol{d}_\eta^\top \\
    &\leq \Tr \rho \, \hat{\boldsymbol{r}}_A \boldsymbol{\Omega}_A\hat{\boldsymbol{r}}_A^\top
           + 2\boldsymbol{d}_\rho \boldsymbol{\Omega}_A \boldsymbol{d}_\rho^\top
           + \boldsymbol{d}_\eta    \boldsymbol{\Omega}_E   \boldsymbol{d}_\eta^\top  \\
    &\leq \Tr \rho H_A + \boldsymbol{d}_\rho \boldsymbol{\Omega}_A \boldsymbol{d}_\rho^\top
                       + \boldsymbol{d}_\eta \boldsymbol{\Omega}_E \boldsymbol{d}_\eta^\top  \\
    &\leq 2 P_{A} +  P_E . \label{eq:NstarHA}
\end{split}\end{equation}
Now, we choose $c$ such that $cc_E\leq 1$ and set
\begin{equation}
    H_B:=c \, \hat{\boldsymbol{r}}_B \hat{\boldsymbol{r}}^\top_B,
\end{equation}
which evidently is a positive self-adjoint operator independent of $\eta$ and $\rho$.
It trivially satisfies
\begin{equation}
      \Tr \mathcal{N}(\rho\ox\eta) H_B
        =\Tr \mathcal{N}_\eta(\rho) H_B
        =\Tr \rho   \mathcal{N}_\eta^*(H_B)
        =\Tr \rho   \mathcal{N}_\eta^*(H_B- c c_E  P_{E}\1)
            + \Tr \rho\mathcal{N}_\eta^*( c c_E   P_{E}\1 ),
\end{equation}
and thanks to Eq.~\eqref{eq:NstarHA}, we have
\begin{equation}
    \Tr \mathcal{N}(\rho\ox\eta) H_B
        \leq  \Tr \rho \mathcal{N}_\eta^*(H_{B, \eta}) + c c_E P_{E}
        \leq 2 P_{A}+ P_E + c c_E P_{E} \leq 2P_A+ 2P_E,
\end{equation}
concluding the proof.
\end{proof}


\section{Quantum communication}
\label{sec:quantum}

In this section we discuss the model of quantum communication with environment assistance.
We first focus on the unconstrained quantum capacity, for which we refer to isometries
giving rise to BGC for each choice of Gaussian initial environment state, and then move
one to energy-constrained quantum capacities.

Given a Gaussian isometry $W : AE \to BF$,
to send quantum information down the channel ${\cal N}_\eta(\rho)
=\tr_F W(\rho\otimes \eta)W^\dag$
from Alice to Bob, we need an encoding CPTP map ${\cal E}:{\cal T}(A_0)\to{\cal T}(A^n)$ and a
decoding CPTP map ${\cal D}:{\cal T}(B^n)\to{\cal T}(B_0)$,
where the number of qubits of $A_0$ is equal to that of $B_0$.
The output, upon inputting a maximally entangled state $\Phi^{RA_0}$ with
$R$ being an inaccessible reference system, reads
$\sigma^{RB_0} ={\cal D}\left({\cal N}^{\otimes n}\left({\cal E}\left(\Phi^{RA_0}\right)\otimes \eta^{E^n}\right)\right)$.

\begin{definition}\label{qcode}
A passive environment-assisted quantum code of block length $n$ is given by a triple
$\left( {\cal E}^{A_0\to A^n}, \eta^{E^n}, {\cal D}^{B^n\to B_0}\right)$ of an encoding
map, a helper state and a decoding map. Its fidelity is given by
$\tr \Phi^{RA_0}\sigma^{RB_0}$ and its rate by the number of qubits of $A_0$ divided by $n$.

A rate $\tt R$ is called achievable if there are codes for all block
lengths $n$ with fidelity converging to 1 and rate converging to $\tt R$.
The \emph{passive environment-assisted quantum capacity} of $W$, denoted $Q_H(W)$,
is the supremum of all achievable rates.

If the helper is restricted to fully separable states $\eta^{E^n}$,
i.e. convex combinations of tensor products
$\eta^{E^n} = \eta^{E^1}\otimes \ldots \otimes \eta^{E^n}$,
the supremum of all achievable rates is called
\emph{separable passive environment-assisted quantum capacity} and denoted $Q_{H\otimes}(W)$.

If in addition the helper is restricted to Gaussian states, we get
the \emph{Gaussian separable passive environment-assisted quantum capacity},
which we denote $Q_{GH\otimes}(W)$.
\end{definition}

\begin{theorem}
For a Gaussian isometry $W : AE \to BF$, the passive environment-assisted
quantum capacity is given by
\begin{equation}\label{eq:25}
\begin{split}
  Q_H(W) &= \sup_n\max_{\eta^{(n)} }\frac{1}{n} Q\left({\cal N}^{\ox n}_{\eta^{(n)}}\right)\\
         &= \sup_n\max_{\rho^{(n)},\eta^{(n)}} \frac{1}{n}
                                              I_c\left(\rho^{(n)}; {\cal N}^{\ox n}_{\eta^{(n)}}\right),
\end{split}
\end{equation}
where the maximization is over states $\rho^{(n)}$ on $A^n$ and states $\eta^{(n)}$ on $E^n$.

Similarly, the capacity with separable helper is given by the same formula,
\begin{equation}\label{eq:26}
\begin{split}
 Q_{H\ox}(W) &= \sup_n \max_{\eta_1\ox\ldots\eta_n}\frac{1}{n}
                                 Q\left({\cal N}_{\eta_1}\ox\cdots\ox {\cal N}_{\eta_n}\right)\\
             &= \sup_n\max_{\rho^{(n)},\eta^{(n)}}\frac{1}{n}
                                 I_c\left(\rho^{(n)}; {\cal N}^{\ox n}_{\eta^{(n)}}\right),
\end{split}
\end{equation}
but now varying only over product states
$\eta^{(n)} = \eta_1\ox\ldots\ox\eta_n$.
Consequently,
\begin{equation}
  Q_H(W) = \lim_{n\to \infty}\frac{1}{n}Q_{H\ox}(W^{\ox n}).
\end{equation}
\end{theorem}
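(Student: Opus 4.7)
The plan is to reduce everything to the standard Lloyd--Shor--Devetak (LSD) quantum coding theorem applied to the effective channels ${\cal N}^{\otimes n}_{\eta^{(n)}}$. The key observation is that, once the helper state $\eta^{(n)}$ on $E^n$ is fixed, the map ${\cal N}^{\otimes n}_{\eta^{(n)}}:{\cal T}(A^n)\to{\cal T}(B^n)$ is an ordinary CPTP channel, and a passive environment-assisted code $({\cal E},\eta^{(n)},{\cal D})$ is precisely a single-block code for this channel, and vice versa. Both capacities therefore reduce to the ordinary quantum capacity of ${\cal N}^{\otimes n}_{\eta^{(n)}}$, further optimised over the choice of $\eta^{(n)}$ (and, in the separable case, restricted to product $\eta^{(n)}=\eta_1\otimes\cdots\otimes\eta_n$).

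For the direct part, I would fix $n$ and $\eta^{(n)}$, set $\Phi:={\cal N}^{\otimes n}_{\eta^{(n)}}$, and use $(\eta^{(n)})^{\otimes m}\in{\cal S}(E^{nm})$ as the joint helper: this realises $\Phi^{\otimes m}$, so any LSD code of block length $m$ for $\Phi$ lifts to a passive environment-assisted code of block length $nm$ for $W$ at a rate converging, as $m\to\infty$, to $\frac{1}{n}Q(\Phi)$. Optimising over $n$ and $\eta^{(n)}$ gives $Q_H(W)\ge\sup_n\max_{\eta^{(n)}}\frac{1}{n}Q({\cal N}^{\otimes n}_{\eta^{(n)}})$, and since $Q(\Phi)\ge I_c(\rho^{(n)};\Phi)$ for every input $\rho^{(n)}$, also $Q_H(W)\ge\sup_n\max_{\rho^{(n)},\eta^{(n)}}\frac{1}{n}I_c(\rho^{(n)};{\cal N}^{\otimes n}_{\eta^{(n)}})$.

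For the converse, any rate-$R$ code $({\cal E},\eta^{(n)},{\cal D})$ with fidelity $1-\epsilon$ is a single-block code for $\Phi={\cal N}^{\otimes n}_{\eta^{(n)}}$, and the standard coherent-information converse (Barnum--Knill--Nielsen, Schumacher--Nielsen, Devetak) yields $nR\le\max_{\rho^{(n)}}I_c(\rho^{(n)};{\cal N}^{\otimes n}_{\eta^{(n)}}) + o(n)$ as $\epsilon\to0$. Dividing by $n$, taking the supremum over $n$, $\eta^{(n)}$ and $\rho^{(n)}$, and sending $\epsilon\to 0$ yields $Q_H(W)\le\sup_n\max_{\rho^{(n)},\eta^{(n)}}\frac{1}{n}I_c(\rho^{(n)};{\cal N}^{\otimes n}_{\eta^{(n)}})$, which sandwiched with the two lower bounds from the direct part proves both equalities in \eqref{eq:25}. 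Repeating the argument with the helper restricted to products $\eta_1\otimes\cdots\otimes\eta_n$ (so that $\Phi={\cal N}_{\eta_1}\otimes\cdots\otimes{\cal N}_{\eta_n}$) gives \eqref{eq:26} verbatim.

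For the final identity $Q_H(W)=\lim_n\frac{1}{n}Q_{H\otimes}(W^{\otimes n})$, observe that any $m$-block separable code for $W^{\otimes n}$ uses a helper $\eta_1\otimes\cdots\otimes\eta_m$ whose factors are arbitrary states on $E^n$, hence constitutes a general passive environment-assisted code for $W$ of block length $nm$, so $\frac{1}{n}Q_{H\otimes}(W^{\otimes n})\le Q_H(W)$. Conversely, taking $m=1$ with $\eta_1=\eta^{(n)}$ arbitrary gives $\frac{1}{n}Q_{H\otimes}(W^{\otimes n})\ge\frac{1}{n}Q({\cal N}^{\otimes n}_{\eta^{(n)}})$, whose supremum over $n$ and $\eta^{(n)}$ equals $Q_H(W)$ by \eqref{eq:25}. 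Superadditivity of $n\mapsto Q_{H\otimes}(W^{\otimes n})$ (by concatenation of codes from different block lengths with tensor-product helpers), combined with Fekete's lemma, then ensures that the limit exists and equals the supremum. The main technical obstacle will be making the LSD arguments rigorous in the infinite-dimensional Bosonic setting without energy constraint, where coherent information may be infinite and Fannes-type continuity fails; however, since we target the unconstrained capacity and each effective channel maps to a finite-mode output, the infinite-dimensional LSD theorem (e.g.\ Holevo--Shirokov, Wilde) can be invoked directly, its converse using only entropy differences that remain meaningful on the finite-mode output.
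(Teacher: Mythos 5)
Your proposal is correct and follows essentially the same route as the paper: achievability via the Lloyd--Shor--Devetak theorem applied to asymptotically many copies of the fixed-helper block channel ${\cal N}^{\otimes n}_{\eta^{(n)}}$, and the converse via the standard coherent-information upper bound of Barnum--Nielsen--Schumacher type, exactly as the paper does by reference to \cite{KSWY,BNS98,Sch96,SchNi96}. Your added detail on the final limit identity (superadditivity plus Fekete) and your closing remark on the infinite-dimensional subtleties only flesh out what the paper leaves implicit.
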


\begin{proof}
It is known that the coherent information for nontrivial Gaussian channels without
constrained energy is finite \cite{B14}. However, relations \eqref{eq:25}
and \eqref{eq:26} without energy constraint may be infinite. To guarantee
their finiteness, one has to exploit energy constraints together with
subadditivity and concavity of von Neumann entropy.

The direct part, i.e. the "$\geq$" inequality, follows from the Lloyd-Shor-Devetak
theorem applied to the channel $\left(\mathcal{N}\right)_{\eta^{(n)}}$, to be
precise asymptotically many copies of this block channel, so that the i.i.d.
theorems apply (cf. \cite{W13}).

For the converse part, i.e. "$\leq$", the proof is like \cite{KSWY}, which is based on
the argument of \cite{BNS98,Sch96,SchNi96}. In other words, the coherent information
of a code of block length $n$ as input state together with helper uses of an arbitrary
state $\eta^{(n)}$ is smaller than the expression in \eqref{eq:25}.
\end{proof}


\subsection{Universal (anti-)degradability properties}
One of the main problems in  quantum information theory is to express the quantum
capacity by a single-letter formula. This can be done when the channel possesses
the (anti-)degradability property, which guarantees the additivity of the coherent
information \cite{DS05}. Here we want to understand, for a given two-mode Gaussian
unitary, whether or not this property can hold true irrespective of the environment state.

Recall that degradability of ${\cal N}^{A\to B}_{\eta_E}$ is defined by the
existence of a CPTP map $\Gamma^{B\to F}$ such that
\begin{equation}
  \label{degcond}
  \widetilde{\cal N}^{A\to F}_{\eta_E}= \Gamma^{B\to F} \circ {\cal N}^{A\to B}_{\eta_E}.
\end{equation}
Analogously, anti-degradability is defined by the existence of a map
$\bar{\Gamma}^{F\to B}$ such that
\begin{equation}
  \label{antidegcond}
  \bar{\Gamma}^{F\to B} \circ \widetilde{\cal N}^{A\to F}_{\eta_E} = {\cal N}^{A\to B}_{\eta_E}.
\end{equation}

\begin{remark}
\label{obs1}
By looking at the discussion in Subsection \ref{subsec:GU}, we can see
that any two-mode unitary $U^{(q)}$ with $q\geq 1/2$ is
degradable with respect to all Gaussian environment pure states;
we say that the unitary is \emph{Gaussian universally degradable}.
\end{remark}

This comes from the fact that for the Gaussian quantum channel
${\cal N}^{A\to B}_{\boldsymbol{ V}_E,q}$
we can find the required channel $\Gamma^{B\to F}$ in Eq.~\eqref{degcond} as
$\widetilde{{\cal N}}^{F\to B}_{\boldsymbol{ V}_E, \frac{2q-1}{q}}$, because
\begin{equation}
  \widetilde{{\cal N}}^{A\to F}_{\boldsymbol{ V}_E, q}
     = \widetilde{{\cal N}}^{B\to F}_{\boldsymbol{ V}_E, \frac{2q-1}{q}}\circ  {\cal N}^{A\to B}_{\boldsymbol{ V}_E, q}.
\end{equation}

\begin{remark}
\label{obs2}
By looking at the discussion in Subsection \ref{subsec:GU}, we can see
that any two-mode unitary $U^{(q)}$ with $0\leq q \leq 1/2$ is anti-degradable with
respect to all Gaussian environment pure states;
we say that the unitary is \emph{Gaussian universally anti-degradable}.
\end{remark}

This comes from the fact that for the Gaussian quantum channel
${\cal N}^{A\to B}_{\boldsymbol{ V}_E,q}$
we can find the required channel $\bar{\Gamma}^{F\to B}$ in Eq.~\eqref{antidegcond} as
$\widetilde{{\cal N}}^{F\to B}_{\boldsymbol{ V}_E, \frac{1-2q}{1-q}}$, because
\begin{equation}
  {\cal N}^{A\to B}_{\boldsymbol{ V}_E, q}
   =\widetilde{{\cal N}}^{F\to B}_{\boldsymbol{V}_E,\frac{1-2q}{1-q}}\circ \widetilde{\cal N}^{A\to F}_{\boldsymbol{V}_E,q}.
\end{equation}

\begin{definition}
A two-mode Gaussian unitary $U$ is said to be \emph{universally degradable}
(resp. \emph{universally anti-degradable}) if
Eq.~\eqref{degcond} (resp. \eqref{antidegcond})
holds true for all environment states $\eta_E$.
\end{definition}

\begin{theorem}
\label{thm:nongaussian}
Any two-mode Gaussian unitary $U^{(q)}$
is neither universally degradable, nor universally anti-degradable, unless $q=1$.
\end{theorem}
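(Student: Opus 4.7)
When $q=1$, the symplectic matrix \eqref{Scan1} (equivalently \eqref{Scan2}) collapses to the identity, so $U^{(1)}$ does not couple $A$ with $E$, the channel $\mathcal{N}_\eta$ is $\id_A$ for every $\eta$, and universal degradability holds trivially through the constant map that prepares $\eta$ on $F$; hence the claim is vacuous at $q=1$. For every other $q$ the plan is to produce, for each property, one explicit non-Gaussian environment state that refutes it: an $\eta$ violating \eqref{degcond} and a (possibly different) one violating \eqref{antidegcond}. The negative-$q$ range can be handled together with the positive one by the identity $U^{(q)}=\SWAP_{ab}\,(\boldsymbol{S}_a\otimes\boldsymbol{S}_b)\,U^{(1-q)}\,\boldsymbol{S}'_b$ derived at the end of Subsection~\ref{subsec:GU}, because the outer local unitaries only relabel $B$ and $F$ and therefore preserve (anti-)degradability.

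The natural test states are low-photon pure non-Gaussian ones on the single environment mode, for instance $|\eta\rangle=|1\rangle$ or $|\eta\rangle\propto|0\rangle+|2\rangle$. Feeding Fock inputs $|n\rangle_A$ through $U^{(q)}$ yields closed-form outputs in the photon-number basis via the standard beam-splitter / two-mode-squeezer matrix elements dictated by \eqref{Scan1}--\eqref{Scan2}, so both $\mathcal{N}_\eta(\rho)$ and $\widetilde{\mathcal{N}}_\eta(\rho)$ are explicit density matrices. To refute degradability I would exhibit a pair $(\rho_1,\rho_2)$ of inputs violating the data-processing bound
\begin{equation}
  D\bigl(\widetilde{\mathcal{N}}_\eta(\rho_1)\,\|\,\widetilde{\mathcal{N}}_\eta(\rho_2)\bigr) \leq D\bigl(\mathcal{N}_\eta(\rho_1)\,\|\,\mathcal{N}_\eta(\rho_2)\bigr)
\end{equation}
that any CPTP degrading map $\Gamma$ would force; a frequently handier surrogate is to locate two orthogonal outputs on $B$ whose images on $F$ are not orthogonal, which directly forbids a CPTP copy from $B$ to $F$. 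The symmetric construction (swapping the roles of $\mathcal{N}_\eta$ and $\widetilde{\mathcal{N}}_\eta$) rules out anti-degradability.

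The main obstacle I foresee is purely computational, because the outputs live in infinite-dimensional Fock space and certifying a strict inequality between relative entropies of non-Gaussian states is delicate. The mitigation is to pick $|\eta\rangle=|1\rangle$: then $U^{(q)}$ maps any $n$-photon input into a subspace supported on at most the lowest few photon numbers, so each instance of the test reduces to a finite-dimensional positivity check whose dependence on $q$ is algebraic. The set of $q$'s for which a fixed witness fails is then a proper algebraic subset of $(0,\infty)\setminus\{1\}$, and it can be covered by adjoining further non-Gaussian test states such as $|\eta\rangle\propto|0\rangle+|2\rangle$ and invoking the swap identity above for $q>1$, which together close the theorem.
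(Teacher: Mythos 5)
Your overall strategy---take the non-Gaussian environment state $\ket{1}$, compute the Fock-basis action of $U^{(q)}$ explicitly, and contradict the existence of a CPTP degrading map, using the monotonicity of relative entropy as the quantitative tool---is essentially the route of the paper (Appendix \ref{app:thmproof} uses exactly $\eta=\proj{1}$, and the data-processing inequality you write is the engine of the $q>1$ case). But two steps of your plan are genuinely broken. First, your ``handier surrogate'' runs in the wrong direction: finding two \emph{orthogonal} outputs on $B$ whose images on $F$ are \emph{not} orthogonal forbids nothing, because a CPTP map can always send two orthogonal states to any prescribed pair of states (measure the support projector, then prepare). The obstruction must go the other way---you need outputs that are strictly \emph{more} distinguishable on $F$ than on $B$, e.g.\ a violation of $D(\widetilde{\mathcal{N}}_\eta(\rho_1)\Vert\widetilde{\mathcal{N}}_\eta(\rho_2)) \leq D(\mathcal{N}_\eta(\rho_1)\Vert\mathcal{N}_\eta(\rho_2))$. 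This is what the paper engineers for $q>1$: by Eq.~\eqref{eq:cmndefs} a coefficient $c^q_{m_2 n'}$ vanishes at $n'=m_2(q-1)-1$ while $c^q_{m_1 n'}\neq 0$, so the support condition fails on the $F$ side and the required inequality \eqref{eq:61} is driven to $-\infty$.

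Second, your claim that with $\ket{\eta}=\ket{1}$ ``each instance of the test reduces to a finite-dimensional positivity check'' is false precisely in the regime $q>1$, where $U^{(q)}$ is a two-mode squeezer and $U^{(q)}\ket{m}\ket{1}$ has support on infinitely many Fock levels (see Eqs.~\eqref{eq:specN}--\eqref{eq:speccalN}); this is why the paper needs careful tail estimates there rather than a finite matrix computation. The finite-dimensional positivity argument does work for the beam splitter ($q<1$), and is what the paper uses: assuming $\Gamma$ exists, the images of $\proj{0}$ and $\proj{1}$ force linear relations among $\Gamma(\proj{0})$, $\Gamma(\proj{1})$, $\Gamma(\proj{2})$ whose convex combination acquires a negative eigenvalue. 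But even there your assertion that the bad set of $q$ is ``a proper algebraic subset'' coverable by a couple of extra witness states is unsubstantiated: the paper only obtains an analytic contradiction for $\tfrac{1}{\sqrt{2}}\leq q<\tfrac{1}{2}+\tfrac{\sqrt{3}}{6}$, falls back on numerics for the rest of $[\tfrac12,1)$, and disposes of $q\leq\tfrac12$ by an entirely different argument (positive coherent information for a non-Gaussian environment precludes anti-degradability). So while the skeleton of your plan matches the paper's, as written it would not close the theorem.
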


The proof of this theorem, which we give in Appendix \ref{app:thmproof},
is obtained by assuming the existence of a quantum channel $\Gamma$ satisfying the
degradability condition \eqref{degcond} and then showing that this leads to a contradiction.
In particular, for $q\leq 1/2$ the claim follows from the fact that the channel is
Gaussian universally anti-degradable, but has positive coherent information,
and hence cannot be anti-degradable,
for some non-Gaussian environment states \cite{LPGH}.

\begin{corollary}
\label{th3}
The two-mode Gaussian unitary $U^{(q)} : AE \to BF$ with $q\geq 1/2$
is Gaussian universally degradable, and hence its
Gaussian separable passive environment-assisted quantum capacity
is given by the single-letter formula
\begin{equation}
  Q_{GH\ox}(U^{(q)}) = \max_{\eta_G}\sup_{\rho} I_c(\rho_G;{\cal N}_\eta),
\end{equation}
where the optimization can be restricted to Gaussian input states $\rho_G$ (cf.~\cite[Thm.~12.38]{H12}).
Note that for each fixed $\eta_G$, the coherent information $I_c$ is
a concave function of the covariance matrix of $\rho_G$,
thus it is sufficient to find a local maximum which necessarily must be the global one.

For $q \leq 1/2$, the two-mode Gaussian unitary $U^{(q)} : AE \to BF$
is Gaussian universally anti-degradable, and hence its
Gaussian separable passive environment-assisted quantum capacity vanishes,
$Q_{GH\ox}(U^{(q)}) = 0$.
\hfill$\square$
\end{corollary}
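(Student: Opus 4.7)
The plan is to derive both cases from the Gaussian universal (anti-)degradability established in Remarks \ref{obs1} and \ref{obs2}, combined with standard additivity and zero-capacity results, starting from the multi-letter formula.

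First I would specialize Eq.~\eqref{eq:26} to the Gaussian-separable helper class, yielding
\begin{equation*}
Q_{GH\ox}(U^{(q)}) = \sup_n \max_{\eta_1^G\ox\cdots\ox\eta_n^G} \frac{1}{n} Q\bigl({\cal N}_{\eta_1^G}\ox\cdots\ox{\cal N}_{\eta_n^G}\bigr).
\end{equation*}
For $q\geq 1/2$, Remark~\ref{obs1} says each factor ${\cal N}_{\eta_i^G}$ is a degradable Gaussian channel with a Gaussian degrading map; tensoring the degrading maps shows that the product channel is also degradable, so by the additivity theorem of \cite{DS05} the quantum capacity of the product decomposes as $\sum_i Q({\cal N}_{\eta_i^G})$, with each factor equal to the single-letter $\sup_\rho I_c(\rho;{\cal N}_{\eta_i^G})$. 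The sup over $n$ and over product helpers then collapses to $\max_{\eta_G}\sup_\rho I_c(\rho;{\cal N}_{\eta_G})$.

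To restrict the input optimization to Gaussian $\rho_G$, I would invoke Holevo's Gaussian-optimizer theorem \cite[Thm.~12.38]{H12} for degradable Gaussian channels, which asserts that the supremum of $I_c$ is attained by a Gaussian input (imposing, if needed, a second-moment constraint and taking the limit). For the concavity of $I_c(\rho_G;{\cal N}_{\eta_G})$ in the input CM at fixed $\eta_G$, I would combine the affine dependence \eqref{dVtransforms} of the output CMs on the input CM with the fact that the Gaussian von Neumann entropy $S(\boldsymbol{V})=\sum_i g(\nu_i)$ is a concave function of $\boldsymbol{V}$; together with degradability, this yields concavity of the coherent information in the input CM, so any local maximum is global.

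Finally, for $q\leq 1/2$, Remark~\ref{obs2} gives that each ${\cal N}_{\eta_i^G}$ is anti-degradable with a Gaussian anti-degrading map; tensor products remain anti-degradable (tensor the anti-degrading maps), and the standard no-cloning argument forces $Q({\cal N}_{\eta_1^G}\ox\cdots\ox{\cal N}_{\eta_n^G})=0$, so every term in the multi-letter supremum vanishes and $Q_{GH\ox}(U^{(q)})=0$. I expect the main delicate point to be the Gaussian-input concavity claim and the precise invocation of Holevo's Gaussian optimizer in the unconstrained (or limiting) regime; everything else is a direct assembly of degradability/additivity and anti-degradability/no-cloning.
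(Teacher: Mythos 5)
Your proposal is correct and follows essentially the same route as the paper, which states the corollary without a separate proof precisely because it is the direct assembly you describe: Gaussian universal (anti-)degradability from Remarks \ref{obs1} and \ref{obs2} (with explicit Gaussian degrading maps that tensorize), single-letterization via the additivity of coherent information for degradable channels \cite{DS05}, restriction to Gaussian inputs via \cite[Thm.~12.38]{H12}, and vanishing capacity for anti-degradable channels. The only place you go slightly beyond the paper is in sketching a justification for the concavity of $I_c$ in the input covariance matrix, which the paper merely asserts; the correct underlying fact is the concavity of coherent information in the input state for degradable channels, not a naive difference-of-concave-entropies argument, but this does not affect the validity of the corollary.
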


\medskip
Armed with this corollary, we can now proceed to calculate the
Gaussian separable passive environment-assisted quantum capacity of
the two-mode unitaries $U^{(q)}$. Note that for each Gaussian environment
state $\eta_G$, the resulting channel $\mathcal{N}_\eta$ is an OMG,
a one-mode Gaussian channel. Their complete classification is given in \cite{H07}.
In particular, when $\eta=\proj{0}$ is the vacuum state,
$U^{(q)}$ gives rise to an attenuator channel for $q<1$,
and an amplifier channel for $q>1$; for $q=1$, $\mathcal{N}_{\proj{0}}$ is
the identity.

For an OMG channel described by Eq. \eqref{dVtransforms},
the parameters that characterize it are
\begin{equation}
  \label{xandy}
  x:=\sqrt{\det \boldsymbol{X}}, \quad y:= \det \boldsymbol{Y}.
\end{equation}
Furthermore, we define another parameter dependent on these two,
$K := \frac12 \left(y-|1-x|\right)$.

For OMG channels, whenever the coherent information is non-zero, the
supremum over all Gaussian input states is achieved for infinite input
power, $P_A \to\infty$.
It is known from \cite{B14} that the optimised coherent information
(over all Gaussian input states) is given by
\begin{equation}
  \label{eq:Brad}
  \sup_{\rho_G}I_c(\rho_G;{\cal N})
        = \frac{K}{|1-x|}\ln\frac{K}{|1-x|}
            - \frac{K+|1-x|}{|1-x|}\ln\frac{K+|1-x|}{|1-x|}
            + \ln\frac{x}{|1-x|}.
\end{equation}
For $0\leq q\leq 1$, $U^{(q)}$ with the symplectic matrix \eqref{Scan1}
describes a beam splitter with transmissivity $q$.
Considering $1/2\leq q <1$, then from Corollary \ref{th3} we have
\begin{equation}
  Q_{GH\ox}(B(q)) = \max_{{\boldsymbol{ V}}_E} \sup_{{\boldsymbol{ V}}_A} I_c\left({\boldsymbol{ V}}_A; B(q)\right),
\end{equation}
where the maximization over environment states can be restricted to pure one-mode
states given by the covariance matrix
\begin{equation}\label{eq:VEsq}
{\boldsymbol{V}}_E=\begin{pmatrix}
\cosh(2s)+\cos\theta\cosh(2s) & \sin\theta\sinh(2s) \\
\sin\theta\sinh(2s) & \cosh(2s)-\cos\theta\cosh(2s)
\end{pmatrix},
\end{equation}
with $s\in\mathbb{R}$ and $\theta\in[0,2\pi)$.
Eqs. \eqref{Scan1} and \eqref{xandy} yield $x =q$ and $y =1-q$ for all one-mode
squeezed input environment ${\boldsymbol{V}}_E$. Invoking Eq.~\eqref{eq:Brad}, we get
\begin{equation}
  Q_{GH\ox}(B(q))=\ln\frac{q}{1-q}.
\end{equation}
For $q>1$, $U^{(q)}$ is a two-mode squeezing transformation with gain $q$,
which has the symplectic matrix \eqref{Scan2}.
Then from Corollary \ref{th3} we have
\begin{equation}
Q_{H\ox}(A(a))=\max_{{\boldsymbol{V}}_E}\sup_{{\boldsymbol{ V}}_A}
I_c\left( {\boldsymbol{ V}}_A; A(q) \right),
\end{equation}
where the maximization over environment states can again be restricted to states of
the form \eqref{eq:VEsq}.
Eqs. \eqref{Scan2} and \eqref{xandy} yield $x =q$ and $y =q-1$ for all one-mode
squeezed input environment ${\boldsymbol{V}}_E$. Invoking \eqref{eq:Brad}, we get
\begin{equation}
  Q_{GH\ox}(A(q))=\ln\frac{q}{q-1}.
\end{equation}

Both for $q>1$ and $q<1$, the formulas recover the infinite capacity
of the identity channel in the limit $q\rightarrow 1$.


\subsection{Energy-constrained passive-environment assisted quantum capacities}

We now move on to energy-constrained quantum capacities.
Suppose that $P_A$ (resp. $P_E$) is the maximum allowed average energy per mode
on $A$ system (resp. $E$ system). Then we modify the Definition \ref{qcode} as follows.

\begin{definition}
An energy constrained passive environment assisted quantum code of block length $n$ is a triple
$\left( {\cal E}^{A_0\to A^n}, \eta^{E^n}, {\cal D}^{B^n\to B_0}\right)$
such that, $\tr\left[\tr_R{\cal E}
\left(\Phi^{RA_0}\right)\right]H_{A^n}\leq nP_A$ and  $\tr \eta^{(n)}H_{E^n} \leq n P_E$.

Its fidelity is given by
$\tr \Phi^{RA_0}\sigma^{RB_0}$ and its rate by the number of modes of $A_0$ over $n$.

A rate $\tt R$ is called achievable if there are codes for all block
lengths $n$ with fidelity converging to 1 and rate converging
to $\tt R$. The energy constrained passive environment-assisted quantum capacity
of $W$, denoted $Q_H(W;P_A;P_E)$ is the supremum
of all achievable rates.

If the helper is restricted to fully separable states
$\eta^{E^n}$,
i.e. convex combinations of tensor products
$\eta^{E^n} = \eta^{E^1}\otimes \ldots \otimes \eta^{E^n}$,
the supremum of all achievable rates is denoted
$Q_{H\otimes}(W;P_A;P_E)$.
\end{definition}


\begin{theorem}\label{thm:constrQ}
For a Gaussian isometry $W : AE \to BF$,
the energy-constrained passive environment-assisted quantum capacity
is given by
\begin{equation}
\begin{split}
  Q_H(W; P_A; P_E)
      &=\sup_n\sup_{\eta^{(n)}} \frac{1}{n} Q\left({\cal N}_{\eta^{(n)}}^{\ox n},nP_A\right) \\
      &=\sup_n  \sup_{\eta^{(n)}} \max_{\rho^{(n)}} \frac{1}{n} I_c\left(\rho^{(n)};
                                                               {\cal N}_{\eta^{(n)}}^{\ox n}\right),
\end{split}
\end{equation}
where the maximization is over states $\rho^{(n)}$ on $A^n$ with
$\tr \rho^{(n)}H_{A^n}\leq n P_A$ and states
$\eta^{(n)}$ on $E^n$ with  $\tr \eta^{(n)}H_{E^n} \leq n P_E$.

The capacity with separable helper is given by the same
formula, but now varying only over product states $\eta^{(n)}=\eta_1\ox\ldots\ox\eta_n$
and respecting the energy constraints
$\tr \rho^{(n)}H_{A^n} \leq n P_A$ and $\sum_{i=1}^n \tr \eta_i H_{E_i} \leq n P_E$.
Consequently,
$\displaystyle{Q_{H}(W; P_A; P_E) = \lim_{n\to\infty}\frac{1}{n}Q_{H\ox}(W; nP_A; nP_E)}$.
\end{theorem}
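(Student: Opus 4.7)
The plan is to follow the same strategy as the proof of the unconstrained capacity theorem, only now carrying along the energy constraints and using Lemma~\ref{lem:lemma2} to ensure that all the relevant entropies are finite. The key observation is that once the helper fixes a state $\eta^{(n)}$ on $E^n$ with $\tr\eta^{(n)}H_{E^n}\leq nP_E$, we obtain a well-defined single-block channel $\mathcal{N}^{\otimes n}_{\eta^{(n)}}:{\cal T}(A^n)\to{\cal T}(B^n)$, to which Alice and Bob can apply a code that additionally respects the input energy constraint $\tr\rho^{(n)}H_{A^n}\leq nP_A$. The energy constraint on $E^n$ propagates to an output energy constraint on $B^n$ by Lemma~\ref{lem:lemma2}, so that the output entropies and coherent informations appearing in the formula are all finite.

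For the direct part (``$\geq$''), I would fix $n$, a helper state $\eta^{(n)}$ and an input state $\rho^{(n)}$ satisfying the two energy constraints. Treating $\mathcal{N}^{\otimes n}_{\eta^{(n)}}$ as a single block channel and taking many i.i.d.\ copies of it, I would invoke the energy-constrained Lloyd--Shor--Devetak theorem for infinite-dimensional channels (as in \cite{W13} together with the Holevo--Shirokov-type extensions for Bosonic Gaussian channels): for each $m$ there are codes on $m$ copies of the block channel with rate approaching $\frac{1}{n}I_c\bigl(\rho^{(n)};\mathcal{N}^{\otimes n}_{\eta^{(n)}}\bigr)$, fidelity $\to 1$, and input energy per mode bounded by $P_A$. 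Concatenating these inner codes with the helper choice $\eta^{(n)\otimes m}$ produces codes of block length $nm$ satisfying both energy constraints, which proves the inequality.

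For the converse (``$\leq$''), I would adapt the standard argument based on \cite{BNS98,Sch96,SchNi96} as used in \cite{KSWY}. Given any $n$-block code of rate close to $Q_H(W;P_A;P_E)$ with helper state $\eta^{(n)}$ (not necessarily product), one shows that the fidelity criterion forces the coherent information of the optimally chosen input $\rho^{(n)}$ through $\mathcal{N}^{\otimes n}_{\eta^{(n)}}$ to be at least $n{\tt R}-o(n)$; dividing by $n$ and taking the supremum over $\eta^{(n)},\rho^{(n)}$ respecting the two energy constraints yields the upper bound. Lemma~\ref{lem:lemma2} is essential here, since it guarantees that the output entropy $S(\mathcal{N}_{\eta^{(n)}}(\rho^{(n)}))$ is finite, so that the continuity/Fano-type inequalities (the Winter--Shirokov energy-constrained continuity bounds) can be applied to pass from approximate decoding to a bound on $I_c$. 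The separable-helper statement follows by exactly the same argument, the only change being that the supremum is restricted to product $\eta^{(n)}=\eta_1\otimes\cdots\otimes\eta_n$ with $\sum_i\tr\eta_i H_{E_i}\leq nP_E$; the regularized identity $Q_H(W;P_A;P_E)=\lim_n \tfrac{1}{n}Q_{H\otimes}(W;nP_A;nP_E)$ then follows by grouping $n$ factors of the original isometry into a single super-isometry and noting that the sup in the general formula is itself a limit.

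The main obstacle I anticipate is the converse in the presence of a possibly highly entangled helper state $\eta^{(n)}$ together with an unbounded spectrum on the environment: one has to be careful that the standard data-processing / Fano argument still applies when the channel acts on an infinite-dimensional input with only a mean-energy constraint. The clean way around this is to use the energy-constrained continuity of coherent information together with the output-energy bound from Lemma~\ref{lem:lemma2}, which reduces the problem to a statement about channels with effectively bounded output entropy and therefore behaving as in the finite-dimensional proof.
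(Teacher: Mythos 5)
Your proposal follows essentially the same route as the paper: the direct part via the Lloyd--Shor--Devetak theorem applied to the block channel $\mathcal{N}^{\otimes n}_{\eta^{(n)}}$, the converse adapted from \cite{KSWY} and \cite{BNS98,Sch96,SchNi96}, and Lemma \ref{lem:lemma2} to control the output entropy so that all coherent informations are finite. The only ingredient you gloss over is the one the paper's written proof actually dwells on: the compactness of the energy-constrained state set (via $\Tr e^{-\beta H_{A^nE^n}}<\infty$) combined with Shirokov's continuity result, which is what justifies writing $\max_{\rho^{(n)}}$ rather than $\sup_{\rho^{(n)}}$ in the capacity formula.
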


\begin{proof}
Considering the Hamiltonian operator $H_{A^nE^n}=H_{A^n}\ox\1_{E^n}+ \1_{A^n}\ox H_{E^n}$
on the system $A^nE^n$, we have
\begin{equation}
  \Tr \rho^{(n)}\ox\eta^{(n)} H_{A^nE^n} \leq nP_A + nP_E,
\end{equation}
where $\rho^{(n)}\ox\eta^{(n)}$ is an arbitrary allowed input state to the system $A^nE^n$.
Using the fact that
\begin{equation}
 \Tr\exp(-\beta H_{A^n}),\ \Tr\exp(-\beta H_{E^n}) < \infty \text{ for all } \beta>0,
\end{equation}
we get
\begin{equation}
  \Tr\exp(-\beta H_{A^nE^n})
       = \bigl(\Tr\exp(-\beta H_{A^n})\bigr) \bigl(\Tr\exp(-\beta H_{E^n})\bigr)
       < \infty.
\end{equation}
Thus, according to \cite{H12}, the set
$\mathcal{C}=\{ \rho^{(n)}\ox\eta^{(n)} : \Tr(\rho^{(n)}\ox\eta^{(n)} H_{A^nE^n})\leq nP_A + nP_E  \}$
is compact.

Using \cite[Cor.~14]{sh2015} and the fact that
\begin{equation}\label{eq:energyco}
 \sup_{\rho^{(n)}\ox\eta^{(n)} : \Tr (\rho^{(n)}\ox\eta^{(n)}) H_{A^nE^n} \leq   nP_A + nP_E}
                                          S(\mathcal{N}^{\otimes n}(\rho^{(n)}\ox\eta^{(n)}))<\infty,
\end{equation}
coming from Lemma \ref{lem:lemma2}, we see that the coherent information
$I_c\left(\rho^{(n)}; {\cal N}_{\eta^{(n)}}^{\ox n}\right)$, for any fixed $\eta^{(n)}$, is continuous and hence it takes its maximum on the set $\{\rho^{(n)}\,\vert \,  \tr \rho^{(n)}H_{A^n}\leq n P_A\}$.  By applying \eqref{eq:energyco}, we then have
$$-\infty < -S({\cal N}_{\eta^{(n)}}^{\ox n}(\rho^{(n)}))\leq I_c\left(\rho^{(n)}; {\cal N}_{\eta^{(n)}}^{\ox n}\right)\leq S({\cal N}_{\eta^{(n)}}^{\ox n}(\rho^{(n)}))<+\infty.$$
Therefore, the quantity $Q_H(W; P_A; P_E)$ is finite.
\end{proof}

\medskip
\begin{remark}
If $\eta^{(n)}$ is pure, we have $I_c\left(\rho^{(n)}; {\cal N}_{\eta^{(n)}}^{\ox n}\right)
 =I_c\left(\rho^{(n)}\ox \eta^{(n)}; {\cal N}^{\ox n}\right)$ and the latter is
 is continuous with maximum on $\mathcal{C}$.
 As consequence the  $\sup_{\eta^{(n)}}$ in Theorem \ref{thm:constrQ} can be turned into
  $\max_{\eta^{(n)}}$.
\end{remark}

\medskip
Let us evaluate the energy-constrained environment-assisted quantum capacities
for unitaries that are universally degradable with respect to Gaussian environment states.
To do so, recall \cite[Thms.~13 and 14]{WQ16} that for a degradable channel
${\cal N}^{A\to B}$, the energy-constrained quantum capacity is given by
\begin{equation}
  Q({\cal N},P_A) = \sup_{\rho\,:\,\tr \rho H_A\leq P_A}
                          S\bigl({\cal N}(\rho)\bigr) -S\bigl(\widetilde{\cal N}(\rho)\bigr),
\end{equation}
where the supremum is achieved by the Gibbs state $\gamma_A(P_A)$.

In particular for degradable channels ${\cal N}_i$,
\begin{equation}
  Q({\cal N}_1\ox\ldots\ox {\cal N}_n, nP)
    = \max_{\{P_i\}}
        \sum_i S\left({\cal N}_i(\gamma_A(P_i))\right)-S\left(\widetilde{\cal N}_i(\gamma_A(P_i))\right)
          \text{ s.t. } \sum_i P_i = n P_A ,
\end{equation}
an optimization that can be performed by Lagrange multipliers in the cases of interest.

For unitaries that are universally degradable with respect to Gaussian environment states,
the energy-constrained Gaussian separable environment-assisted capacity is bounded below by
\begin{equation}
  Q_{GH\ox}(U,P_A,P_E)
    \geq \max_{\eta_G \,:\, \tr\eta H_E \leq P_E}
                      S\left(\tr_F U_{\eta_G}(\gamma_A(P_A))\right)
                      -S\left( \tr_B U_{\eta_G}(\gamma_A(P_A))\right).
\end{equation}
With this we can find lower bounds for beam splitter and amplifier unitaries,
and additionally also find their upper bounds when letting $P_A\to\infty$.


\section{Classical communication}
\label{sec:classical}

In this section we consider classical communication in the passive
environment-assisted model.
After deriving the classical capacity, we put forward an uncertainty relation
for it, that arises when exchanging the roles of active and passive users.
Finally we will briefly discuss conferencing encoders.

Suppose Alice selects some classical message $m$ from the set of messages
$\{1, 2, \ldots, |M|\}$ to communicate to Bob.
An encoding CPTP map ${\cal E} : M \to {\cal T}(A^n)$ can be realized by preparing states
$\{\alpha_m\}$ to be input across $A^n$ of $n$ instances of the channel.
Here $M$ is an Hilbert space with orthonormal basis $\{|m\rangle\}$.
A decoding CPTP map ${\cal D}: {\cal T}(B^n)\to M$ can be
realized by a positive operator-valued measure (POVM) $\{\Lambda_m\}$.
The probability of error for a particular message $m$ is
\begin{equation}
P_e(m) = 1- \Tr\left[\Lambda_m {\cal N}^{\otimes n}\left(\alpha_m^{A^n}
\otimes \eta^{E^n}\right)\right].
\end{equation}

\begin{definition} A passive environment-assisted classical code of block length n is a family of
triples $\{\alpha_M^{A^n}, \eta^{E^n}, \lambda_m\}$ with the error probability $\overline{P}_e :=
\frac{1}{|M|}\sum_m P_e(m)$ and the rate $\frac{1}{n}\ln |M|$.
A rate $\tt R$ is achievable if there is a sequence of codes over their block length $n$ with
$\overline{P}_e$ converging to 0 and rate converging to $\tt R$. The passive environment-assisted classical capacity of $W$, denoted by $C_H(W)$, is the maximum achievable rate.

If the helper is restricted to fully separable states $\eta^{E^n}$, i.e., convex combinations of tensor products $\eta^{E^n}= \eta^{E_1}\ox \ldots \eta^{E_n}$, the
largest achievable rate is denoted by $C_{H\ox}(W)$.
\end{definition}

Since the error probability is linear in the environment state, without loss of generality the latter may
be assumed to be pure, for both unrestricted and separable helper.

\begin{theorem}
\label{thm:cc}
For a Gaussian isometry $W : AE \to BF$,
the energy-constrained passive environment-assisted classical capacity is given by
\begin{equation}
  C_H(W,P_A,P_E)=\sup_n\max_{\eta^{(n)}}\frac{1}{n}C\left({\cal N}^{\ox n}_{\eta^{(n)}},nP_A\right),
\end{equation}
where the maximization is over environment input states $\eta^{(n)}$ respecting
energy constraint $\tr \eta^{(n)} H_{E^n} \leq nP_E$.

Similarly, the capacity with separable helper is given by the same formula,
\begin{equation}
  C_{H\ox}(W,P_A,P_E)
     =\sup_n\max_{\eta^{(n)}=\eta_1\ox\ldots\ox\eta_n}
                  \frac{1}{n}C\left({\cal N}_{\eta_1}\ox\ldots\ox{\cal N}_{\eta_n},nP_A\right),
\end{equation}
where the maximum is only over product states,
i.e. $\eta^{(n)}=\eta_1\ox\ldots\ox\eta_n$
respecting the energy constraint $\tr \eta^{(n)}H_{E^n} \leq nP_E$.
\end{theorem}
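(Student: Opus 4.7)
The plan is to prove both equalities by establishing the matching achievability and converse bounds via the standard blocking trick, exactly as done for the quantum capacity earlier in this section. The one-shot coding theorems needed on the side of the $n$-letter block channel ${\cal N}^{\ox n}_{\eta^{(n)}}$ are the infinite-dimensional Holevo--Schumacher--Westmoreland (HSW) theorem with input energy constraint and its matching converse.

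For the achievability ``$\geq$'', I fix any $n$ and any $\eta^{(n)}$ on $E^n$ with $\Tr \eta^{(n)} H_{E^n} \leq nP_E$, and view ${\cal N}^{\ox n}_{\eta^{(n)}}$ as a single channel $A^n \to B^n$ with energy observable $H_{A^n}$ and bound $nP_A$. Lemma \ref{lem:lemma2}, together with additivity of $H$ across tensor factors, provides the Gibbs property $\Tr e^{-\beta H_{A^n}} < \infty$ and the finiteness of the output entropy on the energy-constrained input set. The infinite-dimensional HSW theorem under energy constraints then produces a sequence of codes for $k$-fold use of this block channel, with i.i.d.\ helper state $(\eta^{(n)})^{\ox k}$, reaching the rate $\frac{1}{n}C({\cal N}^{\ox n}_{\eta^{(n)}}, nP_A)$; note that the composite helper $(\eta^{(n)})^{\ox k}$ automatically obeys $\Tr (\eta^{(n)})^{\ox k} H_{E^{nk}} \leq nk P_E$. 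Taking the supremum over $n$ and the maximum over admissible $\eta^{(n)}$ yields the stated lower bound.

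For the converse ``$\leq$'', consider any passive environment-assisted code $(\{\alpha_m\}, \eta^{(n)}, \{\Lambda_m\})$ at rate $\mathtt{R}$ with vanishing average error, respecting the average energy constraints $\frac{1}{|M|}\sum_m \Tr \alpha_m H_{A^n} \leq nP_A$ and $\Tr \eta^{(n)} H_{E^n} \leq nP_E$. Regarding ${\cal N}^{\ox n}_{\eta^{(n)}}$ as a fixed channel and treating $\{\alpha_m\}$ as an energy-constrained ensemble, the Fano inequality together with the Holevo bound (both valid under energy constraints on account of finiteness from Lemma \ref{lem:lemma2}) gives $\ln|M| \leq C({\cal N}^{\ox n}_{\eta^{(n)}}, nP_A) + o(n)$, so that dividing by $n$ and letting $n \to \infty$ produces the matching upper bound.

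The separable case is handled identically, replacing the supremum over arbitrary $\eta^{(n)}$ by one over product states $\eta_1 \ox \cdots \ox \eta_n$ with $\sum_i \Tr \eta_i H_{E_i} \leq nP_E$; every step of both direct and converse parts preserves this structural restriction. The main technical obstacle I anticipate is verifying that all continuity and finiteness prerequisites of the infinite-dimensional HSW theorem hold for the block channel ${\cal N}^{\ox n}_{\eta^{(n)}}$ uniformly across admissible helpers; this reduces to extending Lemma \ref{lem:lemma2} tensorially to $n$ modes by subadditivity and concavity of the von Neumann entropy, routine in principle but needed to legitimize the quoted coding theorems.
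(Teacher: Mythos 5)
Your proposal is correct and follows essentially the same route as the paper: achievability via the energy-constrained HSW theorem applied to the block channel ${\cal N}^{\ox n}_{\eta^{(n)}}$, a converse of the Fano/Holevo type (the paper cites the converse of the finite-dimensional environment-assisted result), and well-definedness of the capacity reduced to finiteness of output entropies via Lemma \ref{lem:lemma2} together with subadditivity and concavity of the von Neumann entropy. The only difference is one of detail: the paper explicitly carries out the entropy-finiteness chain of inequalities (using averaged states $\bar\rho$, $\bar\eta$) that you correctly identify but defer as routine.
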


As a consequence of the theorem, we have
$C_H(W,P_A,P_E)=\lim_{n\to\infty}\frac{1}{n}C_{H\ox}(W,nP_A,nP_E)$.

\medskip
\begin{proof}
Consider the Hamiltonian operator $H_{AE}=H_{A}\otimes \1 + \1\otimes H_{E}$ on the system $AE$ together with
\begin{equation}
H_{A^nE^n} := H_{AE}\otimes \cdots\otimes \1
               + \1 \otimes H_{AE}\otimes \cdots\otimes \1
               + \ldots
               + \1\otimes \cdots \otimes H_{AE}.
\end{equation}
Consider further $\eta_{\boldsymbol{i}}^{(n)}=\eta_{i_1}\otimes \eta_{i_2}\otimes \cdots \otimes \eta_{i_n}$, where $\boldsymbol{i}$ is a cyclic permutation. Then we have
\begin{align}
\sup_{\rho^{(n)}: \Tr \rho^{(n)} H_{A^n} \leq n P_A} S\left(  {\cal N}^{\ox n}_{\eta^{(n)}}\left(\rho^{(n)}\right) \right)
&\leq
  \sup_{\rho^{(n)}: \Tr \rho^{(n)} H_{A^n} \leq n P_A}\sum_{\boldsymbol{i} \, cyclic} S\left(  {\cal N}^{\ox n}_{\eta_{\boldsymbol{i}}^{(n)}}\left(\rho^{(n)}\right) \right)\label{eq:87-1}\\
  &\leq\sup_{\rho^{(n)}: \Tr \rho^{(n)} H_{A^n} \leq n P_A} \sum_{\boldsymbol{i} \, cyclic} S\left(  {\cal N}^{\ox n}\left(\rho^{(n)}\otimes {\eta_{\boldsymbol{i}}^{(n)}}\right) \right)\\
  &\leq \sup_{\bar{\rho}: \Tr \bar{\rho} H_{A} \leq  P_A} \sum_{i, j=1}^n  S\left(  {\cal N}\left(\rho_j\otimes \eta_{i, j}\right) \right)\\
   &\leq n\sup_{\rho^{(n)}: \Tr \rho^{(n)} H_{A^n} \leq n P_A}\sum_{i=1}^n   S\left(  {\cal N}\left(\bar{\rho}\otimes \eta_{i}\right) \right)  \label{eq:85} \\
   &\leq n^2 \sup_{\bar{\rho}: \Tr \bar{\rho} H_{A} \leq  P_A }  S\left(  \cal N\left(\bar{\rho}\otimes \bar{\eta}\right) \right)
\end{align}
where $\bar{\rho}=\frac{1}{n}\sum_{i=1}^{n} \rho_i$ and $\bar{\eta}=\frac{1}{n}\sum_{i=1}^{n} \eta_i$.
In getting the above sequence of inequality we exploited the subadditivity and the concavity of the von Neumann entropy,

Next, we have
\begin{align}
\sup_{\eta^{(n)}: \Tr (\eta^{(n)} H_{E^n}) \leq n P_E}\sup_{\rho^{(n)}: \Tr (\rho^{(n)} H_{A^n}) \leq n P_A} S\left(  {\cal N}^{\ox n}_{\eta^{(n)}}\left(\rho^{(n)}\right) \right)
&\leq
n^2 \sup_{\bar{\eta}: \Tr \bar{\eta} H_{E} \leq  P_E}\sup_{\bar{\rho}: \Tr \bar{\rho} H_{A} \leq  P_A }  S\left(  \cal N\left(\bar{\rho}\otimes \bar{\eta}\right)\right) \label{eq:87}
\end{align}
By Lemma \ref{lem:lemma2} the quantity \eqref{eq:87} is finite and so is the l.h.s. of \eqref{eq:87-1}.

Let us consider $\rho=\int p_x\rho_x dx$ as the average input on a single channel use. Clearly we have
\begin{equation}
  \Tr \rho H_A\leq \int \Tr \rho_x H_A p_x dx \leq P_{A}.
\end{equation}
Replacing $\rho$ by $\mathcal{N}_\eta(\rho)$ and using Lemma \ref{lem:lemma2}, the Holevo
$\chi$-quantity
\begin{equation}
   \chi\left( \left\{ p_x, {\cal N}_\eta\left( \rho_x \right)\right\}\right)=S\left(\mathcal{N}_\eta(\rho)\right)
   -\int S(\mathcal{N}_\eta(\rho_x)) p_x dx,
\end{equation}
results finite. Then by means of \eqref{eq:87}, it is clear that
\begin{equation}
C\left({\cal N}^{\ox n}_{\eta^{(n)}},nP_A\right)=\sup_{p_{x^n}, \,\rho_{x^{n}}^{(n)}}
\chi\left( \left\{ p_{x^n}, {\cal N}_\eta^{\otimes n}\left( \rho_{x^{n}}^{(n)} \right)\right\}\right),
\end{equation}
is finite as well and so $C_H$ is correctly defined. Now,
 the proof of the direct parts, i.e. ``$\geq$'',
follows immediately from the Holevo-Schumacher-Westmoreland theorem \cite{H98,SchW97}.

For the converse parts, i.e. ``$\leq$'',
the proof goes like \cite[Thm.~1]{KSWY2}.
\end{proof}

\medskip
For unitaries of most interest, like beam-splitter and amplifier, we can
give a lower bound on the classical capacity with separable helper.
Let us encode classical stochastic variable $m$, distributed according
to a probability density $P_m$, into the quantum states $\rho_m^A$.
The modulation due to encoding is given by $\boldsymbol{V}_{mod}$ and
$\overline{\boldsymbol{V}}_A = \boldsymbol{V}_A + \boldsymbol{V}_{mod}$
gives the average input state after encoding.
We assume that the distribution of the classical messages is a Gaussian
distribution with zero mean whose covariance matrix is given by
$\boldsymbol{V}_{mod}$.
The average energy of the input states in terms of the CM is
given by $P_A = \frac{\tr \overline{\boldsymbol{V}}_A}{4n}-\frac{1}{2}$ ,
and likewise $P_E = \frac{\tr \overline{\boldsymbol{ V}}_E}{4n}-\frac{1}{2}$
for the environment.
Then, for beam splitter and amplifier we get the following form for the
environment-assisted capacities when the helper is restricted to separable
states in the environment,
\begin{equation}
C_{H\ox}(U,P_A,P_E)\geq \max_s \left\{
g\left(|x|P_A+y\cosh(2s)+\frac{|x|-1}{2}\right)
-g\left(y+\frac{|x|-1}{2}\right);
P_A\geq P_{\rm th}\right\},
\end{equation}
where we used the notations $x$ and $y$ from Eq. \eqref{xandy}
with $x \neq 0, 1$. Furthermore, $\cosh(2s)\leq 2P_E+1$ and
$P_{\rm th}=e^{2|s|}+\frac{2y\sinh(2|s|)}{|x|}-1$.
For a general one-mode environment state we can find a symplectic orthogonal
transformation, that makes $\boldsymbol{V}_E$ diagonal (this symplectic
orthogonal transformation is a rotation, thus the effective state is
a squeezed one-mode state), which does not affect the energy constraints
on the input environment. Now using \cite[Thm.~1]{PLM12}, we have
$\boldsymbol{V}_A$ and $\boldsymbol{V}_{mod}$ to be diagonal in the same basis as
$\boldsymbol{V}_E$. In fact we can choose the seed state of the input to be
$\boldsymbol{V}_E$ (in its diagonal form). Then following the calculation
in \cite{SKG13}, we get the claimed result.


\subsection{Capacities uncertainty relation}

For a given isometry $W:AE\to BF$, the following quantity corresponds to the
\emph{product-state capacity with separable helper}
\begin{equation}
  \chi_{H\ox}(W,P_A,P_E)
     =\max_{\rho,\eta\,:\, \tr \rho H_A\leq P_A,\, \tr \eta H_E \leq P_E}
                                    \chi\left( \left\{ p_x dx, {\cal N}_\eta\left( \rho_x \right)\right\}\right),
\end{equation}
where on the r.h.s we have the
Holevo $\chi$ quantity for the effective channel
${\cal N}^{A\to B}_\eta (\rho) := {\cal N}^{AE\to B}(\rho \otimes \eta)$
[see Eq.~\eqref{eq:Neta}] upon inputting the ensemble $\{p_x dx, \rho_x\}$,
and $\rho=\int p_x\rho_x dx$.

Now, besides this channel ${A\to B}$,
we can also define another effective channel
${E\to B}$ by fixing the state of $A$ and tracing over $F$, namely
$\overline{{\cal N}}^{E\to B}_\rho (\eta) := {\cal N}^{AE\to B}(\rho \otimes \eta)$
[see again Eq.~\eqref{eq:Neta}].
For this latter the following quantity corresponds to the
product-state capacity with separable helper
\begin{equation}
\chi_{A\ox}(W,P_A,P_E)
  =\max_{\rho,\eta \,:\, \tr \rho H_A \leq P_A,\, \tr \eta H_E \leq P_E}
   \chi\left( \left\{ p_x dx, \overline{{\cal N}}_\rho\left( \eta_x \right)\right\}\right).
\end{equation}

\begin{theorem}
\label{thm:uncertainty}
Given a Gaussian unitary $W :A \ox E \longrightarrow B \ox F$,
together with
${\rm rank}(\boldsymbol{N}\boldsymbol{\Sigma}_{2N}\boldsymbol{N}^\top)=2N$,
assuming Hamiltonians $H_A$ and $H_E$ for Alice and the helper as in Eq.~\eqref{eq:HX},
with an average photon number per mode constrained by $P_A$ and $P_E$ respectively, we have
\begin{equation}
\label{eq:uncertainty}
  \chi_{A\ox}(W,P_A,P_E) + \chi_{H\ox}(W,P_A,P_E) \geq
  \frac{ \min\{P_A, P_E\}}{2\max\{P_E, P_A\}+1}.
\end{equation}
\end{theorem}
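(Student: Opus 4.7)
The plan is to lower-bound each of $\chi_{H\ox}$ and $\chi_{A\ox}$ by the Holevo quantity of a concretely chosen Gaussian ensemble, then close the argument with a concavity inequality for $g$ and a scalar optimization.

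Without loss of generality assume $P_E\le P_A$ (the other case follows by swapping the roles of Alice and the helper). For $\chi_{H\ox}(W,P_A,P_E)$, I fix the helper to a thermal state of mean energy $P_E$ per mode, and let Alice encode by modulating coherent states with a centered Gaussian of covariance $\boldsymbol V_{\mathrm{mod}}^A = P_A\,\mathbf{1}$ (saturating her energy constraint). For $\chi_{A\ox}(W,P_A,P_E)$, symmetrically, Alice is fixed in a thermal state of energy $P_A$ and the helper modulates coherent states with covariance $\boldsymbol V_{\mathrm{mod}}^E = P_E\,\mathbf{1}$. From the block decomposition \eqref{sympl} of the symplectic matrix $\boldsymbol S$, the average and conditional output CMs on $B$ in the $\chi_{H\ox}$ scenario are
\begin{equation*}
  \overline{\boldsymbol V}_B = (P_A+\tfrac12)\boldsymbol M\boldsymbol M^\top+(P_E+\tfrac12)\boldsymbol N\boldsymbol N^\top,\qquad \boldsymbol V_B^{\mathrm{cond}} = \tfrac{1}{2}\boldsymbol M\boldsymbol M^\top+(P_E+\tfrac12)\boldsymbol N\boldsymbol N^\top,
\end{equation*}
and analogously in the $\chi_{A\ox}$ scenario (same $\overline{\boldsymbol V}_B$, but with $\boldsymbol V_B^{\mathrm{cond}} = (P_A+\tfrac12)\boldsymbol M\boldsymbol M^\top+\tfrac{1}{2}\boldsymbol N\boldsymbol N^\top$). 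The Holevo quantity of such a coherent-state Gaussian ensemble equals $S(\overline{\boldsymbol V}_B)-S(\boldsymbol V_B^{\mathrm{cond}})$, a difference of sums of $g$-values over symplectic eigenvalues.

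The key estimate is the concavity-based inequality
\begin{equation*}
  g(\nu+\delta)-g(\nu) \;\ge\; \frac{\delta}{\nu+\delta+\tfrac{1}{2}},\qquad \nu\ge\tfrac{1}{2},\ \delta\ge 0,
\end{equation*}
derivable from the fact that $g'(x)=\ln\frac{x+1/2}{x-1/2}$ is decreasing together with the elementary bound $\ln(1+z)\ge z/(1+z)$. Applying it in a Williamson symplectic basis of $\overline{\boldsymbol V}_B$, summing over modes, and combining the two Holevo lower bounds via $\sum_i a_i/b_i \ge (\sum_i a_i)/(\sum_i b_i)$ (for positive $a_i,b_i$), the sum $\chi_{H\ox}+\chi_{A\ox}$ is bounded below by
\begin{equation*}
  \frac{P_A\,\mathrm{str}(\boldsymbol M\boldsymbol M^\top)+P_E\,\mathrm{str}(\boldsymbol N\boldsymbol N^\top)}{(P_A+\tfrac12)\,\mathrm{str}(\boldsymbol M\boldsymbol M^\top)+(P_E+\tfrac12)\,\mathrm{str}(\boldsymbol N\boldsymbol N^\top)+\tfrac12},
\end{equation*}
where $\mathrm{str}$ denotes the symplectic trace (sum of symplectic eigenvalues).

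The hard part will be the concluding algebraic step: showing that this scalar ratio is at least $\frac{P_E}{2P_A+1}=\frac{\min(P_A,P_E)}{2\max(P_A,P_E)+1}$ uniformly in $\boldsymbol M$ and $\boldsymbol N$. The hypothesis $\mathrm{rank}(\boldsymbol N\boldsymbol\Sigma_{2N}\boldsymbol N^\top)=2N$ forces $\mathrm{str}(\boldsymbol N\boldsymbol N^\top)>0$, and the symplectic identity $\boldsymbol M\boldsymbol\Sigma_A\boldsymbol M^\top+\boldsymbol N\boldsymbol\Sigma_E\boldsymbol N^\top = \boldsymbol\Sigma_B$ (a consequence of $\boldsymbol S\boldsymbol\Sigma\boldsymbol S^\top=\boldsymbol\Sigma$) prevents both symplectic traces from vanishing simultaneously. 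Treating the ratio as a two-variable function of $a := \mathrm{str}(\boldsymbol M\boldsymbol M^\top)$ and $b := \mathrm{str}(\boldsymbol N\boldsymbol N^\top)$, a direct minimization that uses $P_E\le P_A$ --- checking the boundary limits $a\to 0$ and $b\to 0$ separately and solving the resulting linear fractional optimization in the interior --- yields the claimed universal lower bound.
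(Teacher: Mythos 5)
Your overall strategy---coherent-state Gaussian modulation, lower-bounding each $\chi$ by an entropy difference of output covariance matrices, a first-order estimate on increments of $g$, and reduction to a scalar ratio of symplectic traces---is essentially the paper's, and your inequality $g(\nu+\delta)-g(\nu)\ge\delta/(\nu+\delta+\tfrac12)$ is a clean repackaging of the paper's mean-value-theorem step combined with $x\ln\frac{x+1/2}{x-1/2}\ge1$. However, there is a genuine gap exactly where the paper has to work hardest. Your key estimate produces a denominator containing the symplectic eigenvalue of the average output CM $\overline{\boldsymbol V}_B=(P_A+\tfrac12)\boldsymbol M\boldsymbol M^\top+(P_E+\tfrac12)\boldsymbol N\boldsymbol N^\top$, and you replace it by the linear combination $(P_A+\tfrac12)\str(\boldsymbol M\boldsymbol M^\top)+(P_E+\tfrac12)\str(\boldsymbol N\boldsymbol N^\top)$. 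That replacement needs the symplectic trace to be \emph{sub}additive, but it is \emph{super}additive: for one mode $\str(\boldsymbol A+\boldsymbol B)=\sqrt{\det(\boldsymbol A+\boldsymbol B)}\ge\sqrt{\det\boldsymbol A}+\sqrt{\det\boldsymbol B}$ by Minkowski, and the excess is unbounded (take $\boldsymbol M\boldsymbol M^\top$ and $\boldsymbol N\boldsymbol N^\top$ squeezed along orthogonal directions). So the displayed intermediate ratio is not established---the inequality points the wrong way. This is precisely where the paper instead invokes the canonical Jordan block form of the symplectic dilation from Caruso et al.\ to prove $\str(\boldsymbol M\boldsymbol M^\top+\boldsymbol N\boldsymbol N^\top)\le2\,\str(\boldsymbol M\boldsymbol M^\top)$, and you offer no substitute for that step. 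Relatedly, in the multimode case you cannot apply the $g$-increment bound ``mode by mode in a Williamson basis of $\overline{\boldsymbol V}_B$'': the average and conditional CMs are not simultaneously Williamson-diagonalizable, so the pairing $\nu_i\mapsto\nu_i+\delta_i$ is not available without additional symplectic-eigenvalue monotonicity results.

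Even granting the intermediate ratio, the concluding optimization as you describe it fails: over all $a=\str(\boldsymbol M\boldsymbol M^\top)>0$, $b=\str(\boldsymbol N\boldsymbol N^\top)>0$ the infimum of your ratio is $0$ (let $a,b\to0$ together), so ``the two symplectic traces cannot both vanish'' is not enough. What rescues the minimization is the \emph{quantitative} consequence of $\boldsymbol S\boldsymbol\Sigma\boldsymbol S^\top=\boldsymbol\Sigma$, namely $\det\boldsymbol M+\det\boldsymbol N=1$ per mode and hence $a+b\ge1$ for one mode; with that normalization the linear-fractional minimum on $a+b=1$ is attained at $a=0$, $b=1$ with value $P_E/(P_E+1)\ge P_E/(2P_A+1)$. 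You need to state and use this normalization explicitly. Finally, note that fixing the passive party in a thermal rather than vacuum state means your two scenarios no longer share a common conditional CM, so you lose the cancellation $f(\tfrac12)=h(\tfrac12)$ that the paper exploits when adding the two bounds; this is not fatal, but it is another point where your bookkeeping would need care.
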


\begin{remark}
This is a kind of uncertainty relation for $\chi_{H\ox}$ and $\chi_{A\ox}$,
reminiscent of the entropic uncertainty relations for complementary
observables (see e.g. \cite{WW10})
saying that not both of them can be arbitrary small.
\end{remark}

\medskip

\begin{proof}
Since the involved capacities refer to product states with separable helper,
we can consider single systems $A$, consisting of $N_A$ modes, and $E$,
consisting of $N_E$ modes.

From the relation \eqref{eq:22}, the covariance matrix of input state for
system $A$ changes to the following
\begin{equation}\label{eq:VAVB}
  \boldsymbol{V}_A \mapsto \boldsymbol{V}_B
                            = \boldsymbol{M} \boldsymbol{V}_A \boldsymbol{M}^\top
                              + \boldsymbol{N} \boldsymbol{V}_E \boldsymbol{N}^\top.
\end{equation}
Instead, considering as input the system $E$ and as helper $A$,
the corresponding output is obtained by exchanging $A$ and $E$
in the above expression, namely
\begin{equation}\label{eq:VEVB}
  \boldsymbol{V}_E \mapsto \boldsymbol{V}_B
                           = \boldsymbol{M} \boldsymbol{V}_E \boldsymbol{M}^\top
                             + \boldsymbol{N} \boldsymbol{V}_A \boldsymbol{N}^\top.
\end{equation}
As input ensembles, we consider coherent states subject to
Gaussian distributions with zero mean, whose covariance matrix are given by
$\boldsymbol{V}_{A,mod}$ and $\boldsymbol{V}_{E,mod}$ for
Alice and Helen, respectively. The action of encoding is described as follows:
\begin{equation}
\begin{split}
  \overline{\boldsymbol{V}}_A &= \boldsymbol{V}_A+\boldsymbol{V}_{A,mod},\\
  \overline{\boldsymbol{V}}_E &= \boldsymbol{V}_E+\boldsymbol{V}_{E,mod}.
\end{split}
\end{equation}
The respective average output states are then given by
\begin{align}
  \label{eq:aveAB}
  \overline{\boldsymbol{V}}_A \mapsto \overline{\boldsymbol{V}}_B
                                &= \boldsymbol{M}\overline{\boldsymbol{ V}}_A\boldsymbol{M}^\top
                                    + \boldsymbol{N}\overline{\boldsymbol{V}}_E\boldsymbol{N}^\top,\\
  \label{eq:aveEB}
  \overline{\boldsymbol{ V}}_E \mapsto \overline{\boldsymbol{V}}_B
                                &= \boldsymbol{M}\overline{\boldsymbol{ V}}_E\boldsymbol{M}^\top
                                   + \boldsymbol{N}\overline{\boldsymbol{V}}_A\boldsymbol{N}^\top.
\end{align}
Coherent input states on the systems $A$ and $E$ means
$\boldsymbol{V}_A = \boldsymbol{V}_E = \frac{1}{2} \boldsymbol{I}$.
Then, using Eqs. \eqref{eq:VAVB} and \eqref{eq:aveAB}, we can find
\begin{equation}
\chi_{H\ox}
\geq
S\left( \boldsymbol{M} \left(\frac{1}{2}\boldsymbol{I}
+ {\boldsymbol{ V}}_{A,mod}  \right) \boldsymbol{M}^\top
+\frac{1}{2}\boldsymbol{N}\boldsymbol{N}^\top \right)
- S\left(\frac{1}{2}\boldsymbol{M} \boldsymbol{M}^\top
+\frac{1}{2} \boldsymbol{N}\boldsymbol{N}^\top \right) ,
\end{equation}
and analogously using Eqs. \eqref{eq:VEVB} and \eqref{eq:aveEB}, we can find
\begin{equation}
\chi_{A\ox}
\geq S\left(\boldsymbol{N} \left(\frac{1}{2}\boldsymbol{I}
+ {\boldsymbol{ V}}_{E,mod}  \right) \boldsymbol{N}^\top
+\frac{1}{2}\boldsymbol{M} \boldsymbol{ M}^\top\right)
- S\left(\frac{1}{2}\boldsymbol{M}  \boldsymbol{M}^\top
+\frac{1}{2} \boldsymbol{N}  \boldsymbol{N}^\top\right).
\end{equation}
Choosing ${\boldsymbol{ V}}_{A,mod}= P_A \boldsymbol{I}$
for the channel $\mathcal{N}$ and  ${\boldsymbol{ V}}_{E,mod}= P_E \boldsymbol{I}$ for the channel $\mathcal{M}$, we get
\begin{equation}\label{eq:lbchiN}
\chi_{H\ox}
\geq S\left(   \left(P_A+\frac{1}{2}\right) \boldsymbol{M} \boldsymbol{M}^\top
+\frac{1}{2} \boldsymbol{N}  \boldsymbol{N}^\top\right)
- S\left(\frac{1}{2}\boldsymbol{M}  \boldsymbol{M}^\top
+\frac{1}{2} \boldsymbol{N}  \boldsymbol{N}^\top \right) ,
\end{equation}
and
\begin{equation}\label{eq:lbchiM}
\chi_{A\ox}
\geq S\left( \left(P_E+\frac{1}{2}\right) \boldsymbol{N} \boldsymbol{N}^\top
+\frac{1}{2}\boldsymbol{M}  \boldsymbol{M}^\top\right)
- S\left(\frac{1}{2}\boldsymbol{M}  \boldsymbol{M}^\top
+\frac{1}{2} \boldsymbol{N}  \boldsymbol{N}^\top \right).
\end{equation}

Now define the functions
\begin{equation}
f(t) := \str\left( t \boldsymbol{M} \boldsymbol{M}^\top
+ \frac{1}{2} \boldsymbol{N} \boldsymbol{N}^\top\right),
\quad\text{ and }\quad
h(t) := \str\left( \frac{1}{2} \boldsymbol{M} \boldsymbol{M}^\top
+ t \boldsymbol{N} \boldsymbol{N}^\top\right) = 2t f\left(\frac{1}{4t}\right),
\end{equation}
where $\str$ denotes the symplectic trace, i.e. $\str(A)=\sum_i \nu_i(\boldsymbol{A})$,
with $\nu_i(\boldsymbol{A})$ the symplectic eigenvalues of $\boldsymbol{A}$.
Notice that these functions are strictly increasing with respect to the
parameter $t$, and so they are invertible functions.

By the Cauchy-Lagrange mean value theorem, for the function $g(x)$, we know there exists a
$c\in(a, b)$ such that
\begin{equation}
  \label{eq:Lag}
  g(b)-g(a)=g'(c)(b-a).
\end{equation}
Thus by choosing $t_b=f^{-1}(b)$, $t_a=f^{-1}(a)$ and $c_1=f^{-1}(c)$, we get
\begin{equation}
  g(f(t_b))-g(f(t_a))=g'(f(c_1)) (f(t_b)-f(t_a)).
\end{equation}
Consequently we can write
\begin{align}
 g\left(f\left(P_A+\frac{1}{2}\right)\right)-g\left(f\left(\frac{1}{2}\right) \right)
 &= \left[ f\left(P_A+\frac{1}{2}\right)- f\left( \frac{1}{2}\right) \right]
             \ln\left( \frac{f(c_1)+\frac{1}{2}}{f(c_1)-\frac{1}{2}} \right) \label{eq:14}\\
 &\geq \frac{f\left(P_A+\frac{1}{2}\right)- f\left( \frac{1}{2}\right) }{f(c_1)} \label{eq:15}\\
 &\geq \frac{f\left(P_A+\frac{1}{2}\right)- f\left( \frac{1}{2}\right) }{f\left(P_A+\frac{1}{2}\right)} \label{eq:16}\\
 &\geq \frac{1}{P_A+\frac{1}{2}}\frac{ f\left(P_A+\frac{1}{2}\right)- f\left( \frac{1}{2}\right)}{f\left(\frac{1}{2}\right)}.\label{eq:17}
\end{align}
From Eqs.~\eqref{eq:14} to \eqref{eq:15} we used the elementary relation
$x \ln \frac{x+\frac{1}{2}}{x-\frac{1}{2}}\geq 1$, valid for $x\geq \frac{1}{2}$.
From Eqs.~\eqref{eq:15} to \eqref{eq:17}  we used the property $\str(\boldsymbol{A})\geq \str(\boldsymbol{B})$, valid for symplectic matrices $\boldsymbol{A}$ and $\boldsymbol{B}$ such that  $\boldsymbol{A}\geq\boldsymbol{B}$ \cite{BAT}.
Analogously, by choosing in Eq.~\eqref{eq:Lag},
$t_b=h^{-1}(b)$, $t_a=h^{-1}(a)$ and $c_2=h^{-1}(c)$, we get
\begin{equation}
  g(h(t_b))-g(h(t_a))=g'(h(c_2)) (h(t_b)-h(t_a)).
\end{equation}
As a consequence, we can write
\begin{equation}\label{eq:gh}
 g\left(h\left(P_E+\frac{1}{2}\right)\right)-g\left(h\left(\frac{1}{2}\right) \right)
\geq \frac{1}{P_E +\frac{1}{2}}\frac{ h\left(P_E+\frac{1}{2}\right)- h\left( \frac{1}{2}\right) }{h\left(\frac{1}{2}\right)}.
\end{equation}

Assuming for the moment $P_A\leq P_E$, and taking into account that
$f$ and $h$ are increasing functions, together with the fact that
$f\left( \frac{1}{2}\right) =h\left( \frac{1}{2}\right)$, we obtain from
Eqs. \eqref{eq:17} and \eqref{eq:gh}
\begin{align}
  \label{eq:gfgh}
  g\left(f\left(P_A+\frac{1}{2}\right)\right)&-g\left(f\left(\frac{1}{2}\right) \right) +g\left(h\left(P_E+\frac{1}{2}\right)\right)-g\left(h\left(\frac{1}{2}\right) \right)\nonumber\\
 &\geq \frac{1}{P_E+\frac{1}{2}}  \frac{f\left(P_A+\frac{1}{2}\right)+h\left(P_E+\frac{1}{2}\right)-2   f\left( \frac{1}{2}\right)  }{f\left( \frac{1}{2} \right)}    \nonumber\\
  &\geq \frac{P_A}{P_E+\frac{1}{2}}  \frac{\str \left(\boldsymbol{M}  \boldsymbol{M}^\top \right)
  + \str \left(\boldsymbol{N}\boldsymbol{N}^\top\right) }{\str\left(\boldsymbol{M}  \boldsymbol{M}^\top+\boldsymbol{N}\boldsymbol{N}^\top\right)}.
\end{align}
By means of Eq.~\eqref{eq:gfgh} we immediately arrive at
\begin{equation}\label{eq:21}
  \chi_{H\ox}
 + \chi_{A\ox}
  \geq \frac{P_A}{P_E+\frac{1}{2}}  \frac{\str \left(\boldsymbol{M}  \boldsymbol{M}^\top \right)+ \str \left(\boldsymbol{N}\boldsymbol{N}^\top\right) }{ \str\left(\boldsymbol{M}  \boldsymbol{M}^\top+\boldsymbol{N}\boldsymbol{N}^\top\right)} .
\end{equation}
From \cite{Caruso_2008},
the canonical block form for $n$-mode quantum Gaussian channels where $\boldsymbol{M}$ is nonsingular is as follows
\begin{equation}\label{canonical}
 {\boldsymbol S}=\left(\begin{array}{cccc}
\1_n & 0 & \1_n-\boldsymbol{J}^{\top} & 0  \\
0 & \boldsymbol{J} & 0 & -\boldsymbol{J} \\
\1_n & 0 & \1_n & 0 \\
0 & \1_n-\boldsymbol{J} & 0 & \boldsymbol{J}
\end{array}\right),\quad or \quad
 {\boldsymbol S}=\left(\begin{array}{cccc}
\1_n & 0 & 0 & \1_n-\boldsymbol{J}^{\top}  \\
0 & \boldsymbol{J} & -\boldsymbol{J} & 0 \\
0 & \1_n & \1_n & 0 \\
\1_n-\boldsymbol{J} & 0 & 0 & \boldsymbol{J}
\end{array}\right),
\end{equation}
where $\boldsymbol{J}$ is a $n\times n$ block-diagonal matrix in the real Jordan form.
By assuming that the eigenvalues of $\boldsymbol{J}$ are different from $1$ together with the degradability condition implies to have
\begin{align}
\boldsymbol{J} \boldsymbol{J}^\top -(\1_n -\boldsymbol{J}^{-1})\boldsymbol{J}
\boldsymbol{J}^\top(\1_n -\boldsymbol{J}^{\top})  &\geq 0, \\
(\1_n -\boldsymbol{J}^{-1})(\1_n -\boldsymbol{J}^{\top}) &\leq  \1_n.
\end{align}
Next, using the fact that, if $\boldsymbol{B}\boldsymbol{A}\leq \1$,
then $\boldsymbol{A} \boldsymbol{B}=\boldsymbol{B}^{-1} (\boldsymbol{B}\boldsymbol{A}) \boldsymbol{B} \leq \1$, we get
\begin{equation}
(\1_n -\boldsymbol{J}^{\top})(\1_n -\boldsymbol{J}^{-1})\leq  \1_n.
\end{equation}
Therefore, we have
\begin{equation}
\str\left(\boldsymbol{M}\boldsymbol{M}^\top+ \boldsymbol{N}\boldsymbol{N}^\top\right) \leq 2 \str\left(\boldsymbol{M}\boldsymbol{M}^\top\right).
\end{equation}
Finally, replacing this in \eqref{eq:21}, we arrive at
\begin{equation}\label{eq:131}
  \chi_{H\ox}
 + \chi_{A\ox}
  \geq \frac{P_A}{P_E+\frac{1}{2}}\frac{\str \left(\boldsymbol{M}  \boldsymbol{M}^\top \right)
  + \str \left(\boldsymbol{N}\boldsymbol{N}^\top\right) }{ 2 \str\left(\boldsymbol{M}
  \boldsymbol{M}^\top\right)} \geq \frac{P_A}{2 P_E+1} .
\end{equation}
\end{proof}

\medskip
\begin{remark}
Relaxing the requirements that $\boldsymbol{M}$ be non-singular together with
${\rm rank}(\boldsymbol{N}\boldsymbol{\Sigma}_{2N}\boldsymbol{N}^\top)=2N$, we may have the following:

\begin{itemize}
\item
Relation \eqref{eq:uncertainty} still holds if $\boldsymbol{M}$ and $\boldsymbol{N}$ have diagonal form such that
$$
{\rm str}\left(\boldsymbol{M}\boldsymbol{M}^\top\right)
+{\rm str}\left(\boldsymbol{N}\boldsymbol{N}^\top\right)
={\rm str}\left(\boldsymbol{M}\boldsymbol{M}^\top
+\boldsymbol{M}\boldsymbol{M}^\top\right).
$$

\item
Relation \eqref{eq:uncertainty} still holds if $\boldsymbol{M}\boldsymbol{M}^\top
\leq \boldsymbol{N}\boldsymbol{N}^\top$ or $\boldsymbol{M}\boldsymbol{M}^\top
\geq \boldsymbol{N}\boldsymbol{N}^\top$.

\item
A bound tighter than Eq.~\eqref{eq:uncertainty} exists
if the $N$-mode quantum channel results as the tensor product of identical single mode Gaussian quantum channels (see Appendix \ref{app:OMB}).

\end{itemize}
\end{remark}

\medskip
In conclusion, unless one of the two energy constraints $P_A$ and $P_E$
is zero, the sum of the classical capacities with helper is always strictly
greater than zero.
On the other hand, if one of $P_A$ or $P_B$ is zero, the identity or
the SWAP unitary show that it can happen that both capacities are zero.


\subsection{Conferencing encoders}

Here we consider conferencing encoders, that is a situation where
Alice and the helper can freely communicate classical messages, to prepare
signal states for the transmission of a common message.
The classical capacity with conferencing encoders is then defined
in such a way that the encoders (Alice and the helper) are restricted
to use product states between $A$ and $E$.

An encoding CPTP map ${\cal E} : M \to  {\cal T} (A^n)\otimes {\cal T}(E^n)$ can be thought of as
two local encoding maps performed by Alice and Helen, respectively, and given by
${\cal E}_A : M \to {\cal  T}(A^n)$ and ${\cal E}_H: M \to{\cal T}(E^n)$. These can be realized by preparing pure product states $\{|\alpha_m\rangle\otimes |\eta_m\rangle\}$ to
be input across $A^n$ and $E^n$ of $n$ instances of the channel.
A decoding CPTP map ${\cal D}: {\cal T}(B^n) \to M$
can be realized by a POVM $\{\Lambda_m\}$.
The probability of error for a particular message $m$ is
\begin{equation}
P_e(m)=1-\tr\left(\Lambda_m {\cal N}^{\otimes n} \left(\alpha_m^{A^n}\otimes \eta_m^{E^n}\right)
\right).
\end{equation}

\begin{definition}
A classical code for conferencing encoders of block length $n$ is a family of triples
$\{|\alpha_m\rangle^{A^n}, |\eta_m\rangle^{E^n},\Lambda_m\}$ with the error probability
$\overline{P}_e :=\frac{1}{|M|}\sum_mP_e(m)$ and rate $\frac{1}{n}\ln |M|$. A rate $\tt R$
is achievable if there is a sequence of codes over their block length $n$ with $\overline{P}_e$ converging to 0 and rate converging to $\tt R$. The classical capacity with conferencing encoders of $W$, denoted by $C_{\text{\Phone}}(W)$ is the maximum achievable rate. If the sender and helper
are restricted to fully separable states $\alpha^{A^n}_m$ and $\eta^{E^n}_m$ , i.e., convex combinations of tensor products $\alpha^{A^n}_m = \alpha^{A^1}_{1m}\otimes \ldots\otimes
\alpha^{A^n}_{nm}$ and $\eta^{E^n}_m =\eta^{E^1}_{1m}\otimes \ldots\otimes
\eta^{E^n}_{nm}$, for all $m$, the largest achievable rate is
denoted by $C_{\text{\Phone}\otimes}(W)$ and is henceforth referred to as classical capacity with product
conferencing encoders.
\end{definition}

\begin{theorem}\label{thm:conf}
For a Gaussian isometry $W : AE \to BF$, satisfying the condition
 the classical capacity with
conferencing encoders is given by
\begin{equation}
  C_{\text{\Phone}}(W,P_A,P_E)
    = \sup_n \max_{  \{ p(x^n),  \,  \alpha_{x^n}^{A^n}\otimes\eta_{x^n}^{E^n} \}  }
    \frac{1}{n}\chi \left(\left\{ p(x^n),
    {\cal N}^{\otimes n}(   \alpha_{x^n}^{A^n}\otimes\eta_{x^n}^{E^n}  )\right\} \right),
\end{equation}
where the maximization is over ensembles
respecting energy constraints
$\sum_{x^n} p(x^n) \tr(\alpha_{x^n}^{A^n}H_{A^n})\leq nP_A$
and $\sum_{x^n} p(x^n) \tr(\eta_{x^n}^{E^n}H_{E^n})\leq nP_E$.

Similarly, the product state capacity of conferencing encoders is given by the formula,
\begin{equation}
  C_{\text{\Phone}\ox}(W,P_A,P_E)
    = \max_{  \{ p(x),  \,  \alpha_{x}^{A}\otimes\eta_{x}^{E} \}  }
    \chi \left(\left\{ p(x), {\cal N}(\alpha_{x}^{A}\otimes\eta_{x}^{E})\right\} \right),
\end{equation}
where the maximization is over ensembles
respecting energy constraints
$\sum_{x} p(x) \tr(\alpha_{x}^{A}H_{A})\leq P_A$
and $\sum_{x} p(x) \tr(\eta_{x}^{E}H_{E})\leq P_E$.
\end{theorem}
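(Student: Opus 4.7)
The plan is to reduce this to the Holevo--Schumacher--Westmoreland (HSW) theorem applied to the joint channel $\mathcal{N}^{AE \to B}$, with inputs restricted to the product form $\alpha_{x^n}^{A^n} \ox \eta_{x^n}^{E^n}$ dictated by the conferencing-encoder model. Because Alice and Helen share the classical message $m$ (and hence the label $x^n$) but encode it into separate quantum registers $A^n$ and $E^n$, each signal state is a product across the $A/E$ partition, while the ensemble as a whole is generally classically correlated. Finiteness of the relevant Holevo quantity under the two energy budgets follows from Lemma~\ref{lem:lemma2}, which produces a quadratic output Hamiltonian $H_B$ with bounded expectation; combined with \cite[Cor.~14]{sh2015} this yields continuity of $S(\mathcal{N}(\rho\ox\eta))$ and compactness of the constraint set, hence the supremum defining the candidate capacity is well defined.

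For the direct part (``$\geq$''), I would invoke HSW for $\mathcal{N}^{\ox n}: {\cal T}(A^n E^n) \to {\cal T}(B^n)$ in its energy-constrained version \cite{H98,SchW97}. Given any ensemble $\{p(x^n), \alpha_{x^n}^{A^n} \ox \eta_{x^n}^{E^n}\}$ satisfying both energy constraints, the theorem supplies product-form encoders (implementable separately by Alice via ${\cal E}_A$ and by Helen via ${\cal E}_H$ once $x^n$ is shared through conferencing) and a joint POVM decoder achieving any rate below $\frac{1}{n}\chi(\{p(x^n),\mathcal{N}^{\ox n}(\alpha_{x^n}^{A^n}\ox\eta_{x^n}^{E^n})\})$. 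Taking the supremum over $n$ and over admissible ensembles yields the lower bound. For $C_{\text{\Phone}\ox}(W,P_A,P_E)$, the same argument restricted to $n=1$ produces codes whose rate approaches the single-letter expression.

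For the converse (``$\leq$''), I would follow the template of the converse in Theorem~\ref{thm:cc} (and \cite[Thm.~1]{KSWY2}). Given any sequence of codes $\{|\alpha_m\rangle^{A^n},|\eta_m\rangle^{E^n},\Lambda_m\}$ with vanishing average error, Fano's inequality followed by the Holevo bound applied to the ensemble $\{\tfrac{1}{|M|}, \mathcal{N}^{\ox n}(\alpha_m^{A^n}\ox\eta_m^{E^n})\}$ gives
\begin{equation}
\tfrac{1}{n}\ln |M| \leq \tfrac{1}{n}\chi\bigl(\{\tfrac{1}{|M|},\mathcal{N}^{\ox n}(\alpha_m^{A^n}\ox\eta_m^{E^n})\}\bigr) + o(1).
\end{equation}
Since the codewords have, by construction, the product form $\alpha_m^{A^n}\ox\eta_m^{E^n}$ and respect the per-system energy constraints, the right-hand side is at most the supremum in the claimed formula, proving the converse. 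The product-state statement is obtained by restricting the same argument to $n=1$: the assumed tensor structure $\alpha_m^{A^n}=\bigotimes_k\alpha_{km}^{A^k}$ and $\eta_m^{E^n}=\bigotimes_k\eta_{km}^{E^k}$ together with subadditivity of the von Neumann entropy forces the Holevo quantity to decouple into a sum of single-letter quantities, each bounded by the maximum Holevo $\chi$ at the per-mode energy budget.

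The main obstacle will be justifying the HSW coding theorem and its converse in the continuous-variable, energy-constrained regime for the joint input system $AE$ with ensembles having the correlated-product structure, and ensuring that the supremum is well-posed despite $A^nE^n$ being infinite-dimensional. This reduces to combining Lemma~\ref{lem:lemma2} with the compactness of the constrained set $\{\rho\ox\eta : \tr(\rho\ox\eta)H_{AE}\leq P_A+P_E\}$ used in the proof of Theorem~\ref{thm:cc} and with Shirokov's continuity bounds for the Holevo capacity; everything else is a direct adaptation of standard arguments, since the conferencing structure modifies only the class of admissible input states and not the form of the one-shot Holevo expression.
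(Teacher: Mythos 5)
Your proposal matches the paper's proof in both structure and substance: the direct part is obtained from the energy-constrained HSW theorem applied to ensembles of product signal states $\alpha_{x^n}^{A^n}\otimes\eta_{x^n}^{E^n}$, and the converse follows the template of the classical-capacity converse in \cite{KSWY2}, with finiteness and well-posedness supplied by Lemma~\ref{lem:lemma2} and the compactness/continuity arguments already used for Theorem~\ref{thm:cc}. The paper states exactly this (citing \cite{SchW97,H98} for achievability and \cite[Thm.~4]{KSWY2} for the converse), so your more detailed write-up is a faithful elaboration of the same argument.
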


\begin{proof}
The direct part, i.e. the ``$\geq$" inequality, follows from the HSW Theorem \cite{SchW97,H98}.
For the converse part, i.e. the ``$\leq$" inequality, the proof goes like that of \cite[Thm.~4]{KSWY2}.
\end{proof}

\medskip
A lower bound on the classical capacity with conferencing encoders
follows from the uncertainty relation of Theorem \ref{thm:uncertainty}.
In fact from the definition of the conferencing encoder, we obtain directly
\begin{equation}
C_{\text{\Phone}\ox} \geq \max\left\{ \chi_{H\ox}(W), \chi_{A\ox}(W) \right\},
\end{equation}
and thus
\begin{equation}
C_{\text{\Phone}\ox} \geq \frac{ \chi_{H\ox}(W) + \chi_{A\ox}(W)}{2}
                 \geq \frac{1}{2} \frac{ \min\{P_A, P_E\}}{2\max\{P_E, P_A\}+1}.
\end{equation}
In other words, the classical capacity with conferencing encoders is always positive,
provided the energy is non-zero on both inputs.

\medskip
Consider a symplectic transformation $\boldsymbol{S}$, given in the
block form Eq. \eqref{sympl}. Consider seed states with covariance
matrices ${\boldsymbol{V}}_A$ and ${\boldsymbol{ V}}_E$ with zero vector mean.
Suppose the classical message is encoded by applying displacement operator
to the seed states. We assume that the distribution of the classical messages
is a Gaussian distribution with zero mean whose covariance matrix is given by
$\boldsymbol{V}_{mod}$. The action of encoding is described as follows:
\begin{equation}
\begin{split}
\overline{\boldsymbol{V}}_A &= {\boldsymbol{V}}_A+{\boldsymbol{V}}_{mod},\\
\overline{\boldsymbol{V}}_E &= {\boldsymbol{V}}_E+{\boldsymbol{V}}_{mod}.
\end{split}
\end{equation}
The covariance matrices of the output state and the output averaged state are labelled
${\boldsymbol{V}}_B$ and $\overline{\boldsymbol{V}}_B$ respectively and given by
\begin{equation}
\begin{split}
{\boldsymbol{ V}}_B&=\boldsymbol{M}{\boldsymbol{ V}}_A\boldsymbol{M}^\top
+\boldsymbol{N}{\boldsymbol{ V}}_E\boldsymbol{N}^\top,\\
\overline{\boldsymbol{ V}}_B&=\boldsymbol{M}\overline{\boldsymbol{ V}}_A\boldsymbol{M}^\top
+\boldsymbol{N}\overline{\boldsymbol{ V}}_E\boldsymbol{N}^\top.
\end{split}
\end{equation}
Let us evaluate the transmission of classical information by conference encoders using the seed states
${\boldsymbol{ V}}_A = {\boldsymbol{ V}}_E = \boldsymbol{I}/2$ and ${\boldsymbol{ V}}_{mod} = c\boldsymbol{I}/2$.

Imposing the input energy constraint we have (assuming that Alice and the helper are bounded by same energy) in terms of covariance matrices:
\begin{equation}
\frac{\tr\overline{\boldsymbol{ V}}_A}{2n}\leq P_A+\frac{1}{2}.
\end{equation}

Choosing $c = 2 P_A$
we get the Holevo function of this ensemble to be
\begin{equation}
\sum_{i=1}^n\left[
g\left(\frac{(2P_A+1)\nu_i-1}{2}\right)-g\left(\frac{\nu_i-1}{2}\right)
\right],
\end{equation}
where $\nu_i$ are the symplectic eigenvalues of $\boldsymbol{M}\boldsymbol{M}^\top
+ \boldsymbol{N}\boldsymbol{N}^\top$. As $g$ is concave monotonic in the argument we have
the above quantity non-zero whenever $P_A> 0$.
In particular, for the case of beam-splitter, amplifier and conjugate amplifier,
$\boldsymbol{M}\boldsymbol{M}^\top + \boldsymbol{N}\boldsymbol{N}^\top = \boldsymbol{I}$,
we have the classical
information transmission for the above setting given by $g(P_A)$,
which is the transmission of ideal channel with mean photon number $P_A$.


\section{Continuity of capacities in communication assisted by helper}
\label{sec:continuity}

The quantum and classical capacities assisted by separable helper that we
defined and studied above also satisfy uniform continuity.

\begin{theorem}
For input  and output energy-limited Gaussian
channels ${\cal N}^{AE\to B}$ and ${\cal M}^{AE\to B}$, if $\|{\cal N}^{AE\to B}
-{\cal M}^{AE\to B}\|_{\diamond} \leq 2\epsilon$, then
\begin{align}
|C_{H\ox}({\cal N})-C_{H\ox}({\cal M})| &\leq 28 \sqrt{\epsilon}\,
S\left( \gamma_B\left(\frac{4 P_B}{\sqrt{\epsilon}}\right)\right)+3 g\left(\sqrt{\epsilon}+\frac{1}{2}\right)\\
|Q_{H\ox}({\cal N})-Q_{H\ox}({\cal M})| &\leq 28 \sqrt{\epsilon}\,
S\left( \gamma_B\left(\frac{4 P_B}{\sqrt{\epsilon}}\right)\right)+3
g\left(\sqrt{\epsilon}+\frac{1}{2}\right),
\end{align}
where $g$ is given in Eq. \eqref{eq:gfunction} and
$\gamma_X(P)$ is the Gibbs state of system $X$.
\end{theorem}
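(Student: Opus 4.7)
The plan is to reduce the continuity statement for the helper-assisted capacities to the known uniform continuity bounds for the classical and quantum capacities of energy-constrained continuous-variable channels (in the style of Shirokov and Winter), by passing from the global channel $\mathcal{N}^{AE\to B}$ to the family of effective channels $\mathcal{N}_\eta^{A\to B}$ parameterised by the helper state.

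First I would observe that the diamond norm does not grow under the passage to the effective channel: since $\mathcal{N}_\eta(\rho) = \mathcal{N}(\rho\otimes\eta)$ and any extension $\rho^{RA}$ gives a valid extension $\rho^{RA}\otimes\eta^E$ of a full-channel input, we have $\|\mathcal{N}_\eta - \mathcal{M}_\eta\|_\diamond \leq \|\mathcal{N} - \mathcal{M}\|_\diamond \leq 2\epsilon$. For a product helper state $\eta^{(n)} = \eta_1\otimes\cdots\otimes\eta_n$, the same argument applied factor-wise together with the telescoping triangle inequality gives $\|\mathcal{N}^{(n)}_{\boldsymbol\eta} - \mathcal{M}^{(n)}_{\boldsymbol\eta}\|_\diamond \leq 2n\epsilon$ for the block effective channels $\mathcal{N}^{(n)}_{\boldsymbol\eta} = \bigotimes_i \mathcal{N}_{\eta_i}$. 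Furthermore, Lemma \ref{lem:lemma2} produces an output Hamiltonian $H_B$ with Gibbs partition function $\Tr e^{-\beta H_B}<\infty$ for every $\beta>0$, and an output-energy bound $\Tr \mathcal{N}_\eta(\rho)H_B \leq 2P_A+2P_E =: P_B$ uniformly over admissible $\rho$ and $\eta$. This is exactly the data required to invoke the continuity theorems for energy-constrained capacities.

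Second, I would apply the Shirokov-Winter uniform continuity bound for the energy-constrained classical (respectively quantum) capacity. For a pair of CV channels $\Phi, \Psi:A\to B$ with $\|\Phi-\Psi\|_\diamond \leq 2\epsilon$ and output energy constraint $P_B$, this bound takes exactly the form appearing on the right-hand side of the theorem; it arises from two applications of an Alicki-Fannes-Winter (AFW) style inequality to the two entropies of the Holevo $\chi$ (respectively to the output and environment entropies of the coherent information), with the AFW prefactor controlled by the Gibbs entropy $S(\gamma_B(4P_B/\sqrt\epsilon))$. Since the right-hand side of the resulting estimate depends only on $\epsilon$ and $P_B$, and not on $\eta$ or $n$, taking the supremum on both sides over admissible helper states $\eta^{(n)}$ and over block lengths $n$ transfers the bound directly to $C_{H\otimes}$ and $Q_{H\otimes}$.

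The hard part will be the uniformity in $n$, since the diamond-norm distance between the block effective channels scales as $2n\epsilon$. A naive single-shot application of the continuity bound to $\mathcal{N}^{(n)}_{\boldsymbol\eta}$ would produce a prefactor $\sqrt{n\epsilon}$, which blows up after dividing by $n$. The remedy, which is the technical core of the argument, is to apply the AFW-type inequality not to the block capacity directly but to the Holevo chi (respectively the coherent information) evaluated on each code ensemble, so that the entropy differences are controlled per channel use from the outset; the additivity $S(\gamma_{B^n}(nP_B)) = nS(\gamma_B(P_B))$ of the Gibbs entropy, together with the per-mode output-energy bound of Lemma \ref{lem:lemma2}, then cancels the overall factor of $n$ and yields an $n$-independent estimate of the claimed form.
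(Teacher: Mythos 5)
Your proposal is correct and follows essentially the same route as the paper: reduce to the effective channels $\mathcal{N}_\eta^{A\to B}$ via monotonicity of the diamond norm under the restriction $\rho\mapsto\rho\otimes\eta$, use Lemma \ref{lem:lemma2} to obtain the uniform output-energy bound $P_B=2P_A+2P_E$ with $\Tr e^{-\beta H_B}<\infty$, and then invoke the uniform continuity theorem of \cite[Thm.~9]{W17}. The only difference is one of exposition: the paper delegates the $n$-uniformity issue (and the per-channel-use AFW argument you correctly identify as the technical core) entirely to the cited theorem, whereas you spell it out.
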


\begin{proof}
The proof immediately follows from \cite[Thm.~9]{W17} by noticing that
\begin{equation}
\|{\cal N}^{A\to B}_\eta
-{\cal M}^{A\to B}_\eta\|_{\diamond} \leq
\|{\cal N}^{AE\to B}
-{\cal M}^{AE\to B}\|_{\diamond}\leq 2\varepsilon, \quad \forall \eta\in E
\end{equation}
the channels ${\cal N}^{A\to B}_\eta$, ${\cal M}^{A\to B}_\eta$ being
restrictions of ${\cal N}^{AE\to B}$, ${\cal M}^{AE\to B}$ respectively.
Furthermore, for any $\eta\in E$, the energy limitation for the output state,
according to Eq. \eqref{eq:cohamiltonian} of Lemma \ref{lem:lemma2},
will be as follows
\begin{equation}
  \Tr \mathcal{N}_\eta(\rho)H_{B} = \Tr \mathcal{N}^{AE\to B}(\rho\otimes \eta)H_B  \leq 2P_A+2P_E \equiv P_B,
\end{equation}
thus concluding the proof.
\end{proof}

\begin{remark}
If we take $\boldsymbol{d}_\rho=\boldsymbol{d}_\eta=0$ in Lemma \ref{lem:lemma2},
then we have $\Tr \mathcal{N}_\eta(\rho)H_{B} \leq P_A+c c_E P_E$.
By choosing $c>0$ such that $c c_E\leq \alpha$ together with
$H_B=c \hat{\boldsymbol{r}}_B\hat{\boldsymbol{r}}_B^\top$, the quantity
$P_B=P_A + \alpha P_E$ plays the role of $\tilde{E}=\alpha E + E_0$ in \cite{W17}.
\end{remark}


\section{Conclusion}
\label{sec:conclusion}

We have created a model of communication via infinite-dimensional channels
defined by a bipartite unitary, when assisted by a passive helper in the
environment. In this model, we have investigated quantum and classical
capacities, proving various general capacity theorems, the former without and
and with energy constraints, the latter with energy constraints, with respect
to natural assumptions on the Hamiltonians involved.

In particular, in Bosonic Gaussian systems, where the Hamiltonian is that
of several quantum harmonic oscillators and with a Gaussian unitary defining
the interaction, we showed that the capacity formulas lead to simple
expressions, when the helper is restricted to Gaussian states. Furthermore,
for the classical capacity we showed a tradeoff (``uncertainty'') relation between
the capacity of Alice assisted by the helper, and that of the helper assisted
by Alice in terms of the respective input powers, and a lower bound on the
classical capacity with conferencing encoders Alice and helper.

Practically all of our general capacity formulas are multi-letter, and
it remains to find bipartite unitaries for which any of them is both
non-trivial and explicitly computable, or at least a single-letter formula.
In that respect, although we proved the impossibility of having a universally
(anti-)degradable Gaussian unitary, it remains open the possibility that
for every environment state $\eta$ the effective channel ${\cal N}_\eta$
has a well defined degradability property (not the same for all $\eta$).
More generally, we would like to know unitaries that are universally
degradable (not just for Gaussian helper inputs), for a single-letter quantum
capacity, and likewise unitaries resulting in universally additive channels
for the Holevo capacity.
The lower bound on conferencing encoders based on the capacity uncertainty
relation seems very weak, and it remains open to prove better bounds.

Finally, it could be interesting to turn the role of helper into that of an adversary and study
how the quantum communication capabilities between Alice and Bob will be hampered
by this adversary and its energy power.
In this sense the presented model paves the way to investigate arbitrarily varying quantum channels also in infinite dimensional spaces. A topic of particular relevance for the secrecy of practical (in fiber and free space) quantum communication.


\section*{Acknowledgments}
SM and AW acknowledge fruitful discussions with Siddharth Karumanchi
in the early stages of the present project.
AW furthermore thanks I. Thompson and C. Dexter for insights into
the influence of the environment on quantum and classical systems.

SM acknowledges the financial support of the Horizon-2020 Programme of
the European Commission, under the FET-Open grant agreement QUARTET, number 862644.
AW acknowledges financial support by the Spanish MINECO
(projects FIS2016-86681-P and PID2019-107609GB-I00/AEI/10.13039/501100011033)
with the support of FEDER funds, and the Generalitat de Catalunya
(project 2017-SGR-1127).


\appendix

\section{Proof of Theorem \ref{thm:nongaussian}}
\label{app:thmproof}

The proof is divided into two parts, one concerning the case $q<1$ and the another the case $q>1$.
In the former, for $\frac{1}{\sqrt{2}} \leq q < \frac{1}{2}+\frac{\sqrt{3}}{6}$,
we obtain  a special convex combination of quantum density matrices which has an image
through $\Gamma$  with some negative eigenvalues.  Since the method for other cases within
the interval $1/2 \leq q<1$ is similar, we just numerically show the negativity of some
eigenvalues for the images through $\Gamma$ of special convex combinations of density matrices.
In contrast, the case $q>1$ is different, as a contradiction is achieved based on the fact
that the quantum relative entropy cannot increase by quantum operations.

\subsection{Case $q<1$}
\label{subsec:q<1}

\begin{proposition}
\label{prop}
The two-mode Gaussian unitaries $U^{(q)}$ for $\frac{1}{\sqrt{2}} \leq q < \frac{1}{2}+\frac{\sqrt{3}}{6}$
are neither universally degradable, nor universally anti-degradable.
\end{proposition}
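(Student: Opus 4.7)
The plan is a proof by contradiction for the degradability claim, combined with a direct observation for anti-degradability. The universal anti-degradability half is immediate in this range: since the interval $[1/\sqrt{2},\,1/2+\sqrt{3}/6)$ lies strictly inside $q>1/2$, already for a Gaussian helper state (say the vacuum) the resulting attenuator has strictly positive coherent information by \eqref{eq:Brad}, so it is not Gaussian anti-degradable, let alone universally anti-degradable. For the main claim I would assume a CPTP map $\Gamma^{B\to F}$ exists such that $\widetilde{\mathcal{N}}^{A\to F}_{\eta_E} = \Gamma^{B\to F} \circ \mathcal{N}^{A\to B}_{\eta_E}$ for every $\eta_E$, and exhibit a legitimate density operator whose image under this putative $\Gamma$ has a strictly negative eigenvalue.

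First I would exploit the fact that both sides of the degradability identity are linear in $\eta_E$, while the single $\Gamma$ does not depend on $\eta_E$: by varying $\eta_E$ through the Gaussian pure states, $\Gamma$ is pinned down on a large linear subspace of $\mathcal{T}(B)$. Concretely, Remark \ref{obs1} gives, for each Gaussian pure $\eta_G$, the identity $\Gamma\circ\mathcal{N}^{A\to B}_{\eta_G}(\rho) = \widetilde{\mathcal{N}}^{B\to F}_{\eta_G,\,(2q-1)/q}(\mathcal{N}^{A\to B}_{\eta_G}(\rho))$ for all $\rho$, and displacement-covariance of the beam splitter combined with the density of Gaussian-channel outputs in $\mathcal{T}(B)$ essentially fixes $\Gamma$ on its relevant domain (up to the freedom still permitted by complete positivity).

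Second I would exhibit a specific non-Gaussian $\eta_E$, obtained as a convex combination of Fock or displaced-Fock states, together with an input $\rho$ on $A$, so that $\mathcal{N}^{A\to B}_{\eta_E}(\rho)$ is a valid density operator supported on a low-dimensional Fock-truncated subspace of $B$, while the forced value of $\Gamma$ on this output—computed by expanding $\eta_E$ in a Gaussian resolution and applying the Gaussian-pinned form of $\Gamma$ term by term—produces a Hermitian matrix whose leading principal minor is strictly negative precisely on the interval $q\in[1/\sqrt{2},\,1/2+\sqrt{3}/6)$. This is the ``special convex combination whose image through $\Gamma$ has negative eigenvalues'' promised in the proof outline.

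The main obstacle will be identifying the right ensemble, truncation dimension, and input $\rho$ so that the witnessed minor remains negative across the full interval rather than at isolated values of $q$. The algebraic endpoints $1/\sqrt{2}$ and $1/2+\sqrt{3}/6$ suggest that the negativity is controlled by an explicit low-degree polynomial in $q$ that changes sign at these two values; choosing the weights and Fock cutoffs so that precisely this polynomial surfaces as the relevant minor is the genuine technical step, likely requiring a direct calculation in a suitably adapted Fock basis on $B$ and $F$, with the remaining cases within $1/2\le q<1$ handled numerically by the same scheme as mentioned in the introductory remark to this appendix.
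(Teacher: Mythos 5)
Your handling of the anti-degradability half is fine: for $q>1/2$ the vacuum environment yields an attenuator with strictly positive coherent information by \eqref{eq:Brad}, which cannot be anti-degradable. The problem is in the degradability half, and it is a quantifier error. Universal degradability, as defined in the paper, asks that \emph{for each} environment state $\eta_E$ \emph{there exist} a CPTP map $\Gamma$ satisfying \eqref{degcond}; the degrading map is allowed to depend on $\eta_E$, and indeed must — the Gaussian degrading maps of Remark \ref{obs1} are $\widetilde{\mathcal{N}}^{B\to F}_{\boldsymbol{V}_E,(2q-1)/q}$, which already vary with the environment covariance matrix. Your argument instead posits a \emph{single} $\Gamma$ working simultaneously for all $\eta_E$, uses the Gaussian family to ``pin it down'', and then evaluates that pinned-down map on a non-Gaussian output. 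Refuting the existence of such a uniform $\Gamma$ is strictly weaker than refuting universal degradability; moreover, a uniform $\Gamma$ already fails for trivial reasons at essentially every $q\neq 1$ (the Gaussian degrading maps are mutually inconsistent), so your scheme cannot isolate the interval $\bigl[\tfrac{1}{\sqrt2},\tfrac12+\tfrac{\sqrt3}{6}\bigr)$. What is needed, and what the paper does, is to exhibit one fixed $\eta_E$ for which \emph{no} $\Gamma$ whatsoever satisfies \eqref{degcond} for that single channel.

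Concretely, the paper takes $\eta_E=\proj{1}$, computes $U^{(q)}\ket{0}\ket{1}$ and $U^{(q)}\ket{1}\ket{1}$ explicitly, and observes that $\mathcal{N}_q(\proj{0})$, $\mathcal{N}_q(\proj{1})$ and their complementary outputs are all diagonal in the Fock basis. Linearity of the putative $\Gamma$ applied to these two input--output pairs yields linear relations expressing $\Gamma(\proj{1})$ and $\Gamma(\proj{2})$ in terms of $\Gamma(\proj{0})$; the combination $\frac{1-q}{q}\Gamma(\proj{1})+\frac{2(1-q)^2}{2q^2-1}\Gamma(\proj{2})$, with nonnegative coefficients for $q\geq\tfrac{1}{\sqrt2}$, is chosen precisely to cancel the unknown $\Gamma(\proj{0})$, leaving an explicit diagonal operator whose $\proj{1}$ coefficient $1+\frac{(1-2q)^3}{q(2q^2-1)}$ is negative exactly on the stated interval — contradicting positivity of $\Gamma$. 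Your intuition that the endpoints arise from a low-degree polynomial in $q$ is correct, but without fixing a single non-Gaussian environment and deriving the constraints on $\Gamma$ from that one channel (rather than from a Gaussian ``resolution'' of it, which is both analytically delicate and logically unavailable here), the proof does not go through.
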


\begin{proof}
It is enough to prove that there exists a state $\eta_E$ for which the channel
${\cal N}^{A\to B}_{\eta_E}$ is anti-degradable. In view of Remark \ref{obs1},
this state is necessarily non-Gaussian.

The $U^{(q)}$ corresponding to \eqref{Scan1} turns out to be
\begin{equation}
U^{(q)}=e^{\arccos\sqrt{q}\, {\left(\hat{a}^\dagger \hat{b} -\hat{a} \hat{b}^\dagger\right)}},
\end{equation}
for $q\in(0,1)$. Then, for the Fock state $  \vert n\rangle \vert 1\rangle$, we have
\begin{equation}
  U^{(q)}  \vert n\rangle \vert 1\rangle =-\frac{1}{\sqrt{(n+1)(1-q)}}\sum_{\ell=0}^{n+1} (-1)^\ell \sqrt{\binom{n+1}{\ell}} (1-q)^{\ell/2} q^{\frac{n-\ell}{2}}((n+1)(1-q)-\ell) \vert n+1-\ell\rangle\vert \ell\rangle.
\end{equation}
By selecting $n=0, 1$, we get
\begin{equation}\label{eq:68}
U^{(q)}  \vert 0\rangle \vert 1\rangle = -\frac{1}{\sqrt{1-q}}\left(  (1-q) \vert 1\rangle \vert 0\rangle+ \sqrt{q(1-q)} \vert 0\rangle \vert 1\rangle\right),
\end{equation}
and
\begin{equation}
U^{(q)}  \vert 1\rangle \vert 1\rangle =-\frac{1}{\sqrt{2(1-q)}}\left(2\sqrt{q}(1-q) \vert 2\rangle \vert 0\rangle -\sqrt{2}\sqrt{1-q}(1-2q)\vert 1\rangle \vert 1\rangle -2(1-q)\sqrt{q} \vert 0\rangle \vert 2\rangle \right).
\end{equation}

Consider now the channel with environment in the Fock state $|1\rangle\langle 1|$, i.e.
\begin{equation}
\mathcal{N}_q(\rho)=\Tr_E\left( U^{(q)} (\rho\otimes \vert 1\rangle \langle 1\vert){U^{(q)}}^\dagger \right).
\end{equation}
Let us assume that there exists a channel $\Gamma$ such that
\begin{equation}
\label{degrad}
\Gamma\circ\mathcal{N}(\rho)=\widetilde{\mathcal{N}}(\rho).
\end{equation}
Inputting $\rho= \vert 0\rangle \langle 0\vert$, we find that
\begin{equation}
\mathcal{N}_q( \vert 0\rangle \langle 0\vert)= q  \vert 0\rangle\langle 0\vert   +(1-q)  \vert 1\rangle\langle 1\vert,
\end{equation}
and
\begin{equation}
\widetilde{\mathcal{N}}_q( \vert 0\rangle \langle 0\vert)=q  \vert 1\rangle\langle 1\vert   +(1-q)  \vert 0\rangle\langle 0\vert.
\end{equation}
 Therefore, according to \eqref{degrad}, we should have
\begin{equation}\label{eq:11}
 q  \Gamma(\vert 0\rangle\langle 0\vert)   +(1-q) \Gamma( \vert 1\rangle\langle 1\vert)= q  \vert 1\rangle\langle 1\vert   +(1-q)  \vert 0\rangle\langle 0\vert.
\end{equation}

Analogously, inputting $\rho= \vert 1\rangle \langle 1\vert$, we get
\begin{equation}
\mathcal{N}_q( \vert 1\rangle \langle 1\vert)=\frac{1}{2-2q}\left( 4q (1-q)^2 \vert 0\rangle \langle 0\vert  +2 (1-q) (1-2q)^2 \vert 1\rangle \langle 1\vert + 4q (1-q)^2 \vert 2\rangle \langle 2\vert \right),
\end{equation}
and
\begin{equation}
\widetilde{\mathcal{N}}_q( \vert 1\rangle \langle 1\vert)=\frac{1}{2-2q}\left(  4q (1-q)^2 \vert 2\rangle \langle 2\vert  +2 (1-q) (1-2q)^2  \vert 1\rangle \langle 1\vert + 4q (1-q)^2 \vert 0\rangle \langle 0\vert \right).
\end{equation}
Hence, according to \eqref{degrad}, we should have
\begin{multline}
\label{Gamma1}
\frac{1}{2-2q}\left( 4q (1-q)^2  \Gamma(\vert 0\rangle \langle 0\vert)  +2 (1-q) (1-2q)^2\Gamma(\vert 1\rangle \langle 1\vert) +4q (1-q)^2\Gamma(\vert 2\rangle \langle 2\vert) \right)=\\
\frac{1}{2-2q}\left(  4q (1-q)^2   \vert 2\rangle \langle 2\vert  +2 (1-q) (1-2q)^2 \vert 1\rangle \langle 1\vert +4q (1-q)^2\vert 0\rangle \langle 0\vert \right).
\end{multline}

Now, from \eqref{eq:11}, we derive
\begin{equation}
\Gamma(|1\rangle\langle 1|)= \frac{q}{1-q} \vert 1\rangle\langle 1\vert   +  \vert 0\rangle\langle 0\vert- \frac{q}{1-q}   \Gamma(\vert 0\rangle \langle 0\vert),
\end{equation}
which, inserted into \eqref{Gamma1}, yields
\begin{multline}
 q\left(1-2q^2\right)  \Gamma(\vert 0\rangle \langle 0\vert) +2q(1-q)^2 \Gamma(\vert 2\rangle \langle 2\vert) = \\
  2q (1-q)^2 \vert 2\rangle \langle 2\vert  +  (1-2q)^3 \vert 1\rangle \langle 1\vert +  (1-q)(-1+6q-6q^2)\vert 0\rangle \langle 0\vert.
\end{multline}
 Isolating the term $\Gamma(\vert 2\rangle \langle 2\vert)$ at l.h.s., we arrive at
\begin{equation}\label{Gamma2}
  \Gamma(\vert 2\rangle \langle 2\vert)=  -\frac{ 1-2q^2}{2(1-q)^2} \Gamma(\vert 0\rangle \langle 0\vert)+\vert 2\rangle \langle 2\vert+\frac{(1-2q)^3}{2q(1-q)^2}\vert 1\rangle \langle 1\vert +
  \frac{-1+6q-6q^2}{2q(1-q)} \vert 0\rangle \langle 0\vert.
\end{equation}

 At this point, taking a convex combination of $\Gamma(\vert 1\rangle \langle 1\vert)$
and $\Gamma(\vert 2\rangle \langle 2\vert)$ must give a positive operator, given that $\Gamma$ is a CPTP map. Consider then
\begin{equation}
 \frac{1-q}{q}\Gamma(\vert 1\rangle \langle 1\vert)+\frac{2(1-q)^2}{ 2q^2-1}\Gamma(\vert 2\rangle \langle 2\vert),
\end{equation}
with ${q \geq \frac{1}{\sqrt{2}}}$,
we get
\begin{align}
 \frac{1-q}{q}\Gamma(\vert 1\rangle \langle 1\vert)+\frac{2(1-q)^2}{ 2q^2-1}\Gamma(\vert 2\rangle \langle 2\vert) =&
  \frac{2(1-q)^2}{ 2q^2-1}\vert 2\rangle \langle 2\vert \notag\\
  & +\left[1+ \frac{(1-2q)^3}{q(2q^2-1)} \right]\vert 1\rangle \langle 1\vert  \notag\\
  & + \left[ \frac{1-q}{q}+\frac{(1-q)(-1+6q-6q^2)}{q(2q^2-1)}\right]\vert 0\rangle \langle 0\vert.
  \end{align}
 Now, if we analyze the coefficients at r.h.s. (which correspond to the eigenvalues of the
convex combination of $\Gamma(\vert 1\rangle \langle 1\vert)$
and $\Gamma(\vert 2\rangle \langle 2\vert)$) we have
\begin{align}
\frac{2(1-q)^2}{ 2q^2-1} &\geq 0 \qquad \text{for}
\qquad {\frac{1}{\sqrt{2}} \leq q < 1}, \\
  1+ \frac{(1-2q)^3}{q(2q^2-1)} &< 0 \qquad \text{for} \qquad
  {\frac{1}{\sqrt{2}} \leq q < \frac{1}{2}+\frac{\sqrt{3}}{6}},  \\
  \frac{1-q}{q}+\frac{(1-q)(-1+6q-6q^2)}{q(2q^2-1)}
  &{ > 0 \qquad \text{for} \qquad
  \frac{1}{\sqrt{2}} \leq q < 1}.
  \end{align}
 Thus we can conclude that the channel $\Gamma$ does not exist (at least for
{$\frac{1}{\sqrt{2}} \leq q < \frac{1}{2}+\frac{\sqrt{3}}{6}$)} because its eigenvalues should have been positive. This in turn means that in the above range of $q$ values the Gaussian unitaries are neither universally degradable nor universally anti-degradable.
\end{proof}

\begin{remark}
\label{re:prop2}
Numerical investigations (see below) suggests that the statement of
Proposition \ref{prop} holds actually true for $q$ between $1/2$ and $1$.
\end{remark}

Let us consider the convex combination of states as
\begin{equation}
  \label{eq:coefficient}
  \frac{k_m}{k_m+k_n}\Gamma(\vert n\rangle\langle n\vert)+ \frac{ k_n}{k_m+k_n}\Gamma(\vert m\rangle\langle m\vert),
\end{equation}
where  $k_m$ and  $k_n$ are the coefficients in front of $\Gamma(\vert 0\rangle\langle 0\vert)$
for the expressions of $\Gamma(\vert m\rangle\langle m\vert)$ and $\Gamma(\vert n\rangle\langle n\vert)$, respectively.

Define  $c(k_n, k_m)$ the coefficient of $\vert 1\rangle\langle 1\vert$ for the
combination \eqref{eq:coefficient}.
Figures \ref{fig1} and \ref{fig2} show that there is always a negative $c(k_n, k_m)$
for $q\in [\frac{1}{2},1)$.

\begin{figure}[H]
	\centering
	\includegraphics[width=0.4\textwidth]{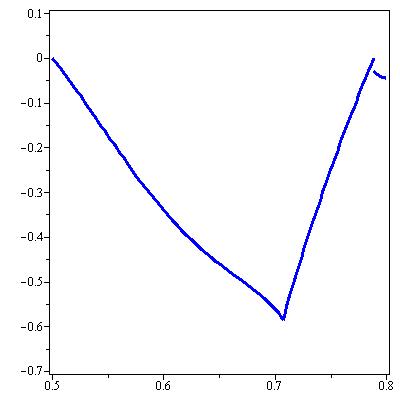}
		\caption{ Quantities $c(k_n, k_m)$ vs $q$. In particular, in the range
		 $[1/2, 1/\sqrt{2}]$ it is plotted $c(k_4, -k_2)$. In the range $[ 1/\sqrt{2}, 1/2 +\sqrt{3}/6)$ it is plotted $c(k_2, -k_1)$, according to Proposition \ref{prop}.
		 Finally, in the range $[1/2 +\sqrt{3}/6, 0.8]$ it is plotted $c(-k_4, k_2)$. In the point $q=1/2 +\sqrt{3}/6$, it is $c(k_2, -k_1)=0$ while $c(-k_4, k_2)=-0.0303$. }
	\label{fig1}
\end{figure}

\begin{figure}[H]
	\centering
	\includegraphics[width=0.4\textwidth]{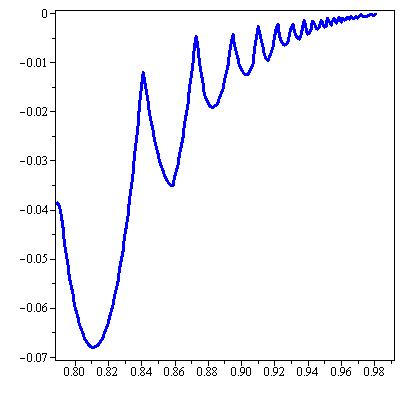}
		\caption{  The quantity $\min_{n,m \leq 50} C( \vert  k_n\vert,  \vert k_m\vert)$
		vs $q$ when $k_n k_m<0$.}
	\label{fig2}
\end{figure}


\subsection{Case $q>1$}
\label{prop3}

The $U^{(q)}$ corresponding to \eqref{Scan2} turns out to be
\begin{equation}
  U^{(q)}= e^{ i\,{\rm arccosh}\sqrt{q} \left(\hat{a}^\dagger \hat{b}^\dagger + \hat{a}\hat{b}\right)}.
\end{equation}
Using the disentangling formula for the $SU(1,1)$ group, it is possible to rewrite it as
\begin{equation}\label{eq:Usplitted}
U^{(q)}=e^{r \hat{a}^\dagger \hat{b}^\dagger} e^{-s\left(\hat{a}^\dagger \hat{a} +\hat{b} \hat{b}^\dagger\right)} e^{r \hat{a}\hat{b}},
\end{equation}
where
\begin{equation}
r=i\,\sqrt{\frac{q-1}{q}}, \quad s= \ln{\sqrt{q} }.
\end{equation}
Let us now compute the action of $U^{(q)}$ on the Fock state $\vert m 1\rangle$. It results
\begin{eqnarray}
 U^{(q)}\vert m 1\rangle &=&   e^{\hat{a}^\dagger \hat{b}^\dagger r} e^{-\left(\hat{a}^\dagger \hat{a}
 +\hat{b} \hat{b}^\dagger\right) s } \left(\sum_{n=0}^{\infty} \frac{ r^n \left(\hat{a}\hat{b}\right)^n}{n!}     \vert m 1\rangle\right) \\
   &=& e^{\hat{a}^\dagger \hat{b}^\dagger r} e^{-\left(\hat{a}^\dagger \hat{a} +\hat{b} \hat{b}^\dagger\right) s }\left( \vert m 1\rangle+\sqrt{m} r\vert(m-1)0\rangle  \right)\\
   &=&    e^{\hat{a}^\dagger \hat{b}^\dagger r} \left(\sum_{n=0}^{\infty} \frac{(-1)^n s^n \left(
   \hat{a}^\dagger \hat{a} +\hat{b} \hat{b}^\dagger\right)^n}{n!}   \right)\left(\vert m 1\rangle+\sqrt{m} r\vert(m-1)0\rangle \right)\\
   &=&   e^{\hat{a}^\dagger \hat{b}^\dagger r}    \left(\sum_{n=0}^{\infty} \frac{(-1)^n s^n (m+2)^n}{n!} \vert m 1\rangle +\sqrt{m}r\sum_{n=0}^{\infty} \frac{(-1)^n s^n (m-1+1)^n}{n!}   \vert(m-1)0\rangle\right)   \\
   &=& e^{\hat{a}^\dagger \hat{b}^\dagger r}    \left(e^{-(m+2)s}  \vert m 1\rangle +\sqrt{m}re^{-ms}   \vert(m-1)0\rangle     \right) \\
   &=&   e^{-(m+2)s}  \sum_{n=0}^{\infty}  \sqrt{n+1}\sqrt{\binom{n+m}{m}} r^n\vert (n+m)(n+1)\rangle\notag\\
    &&+\sqrt{m}re^{-ms}  \sum_{n=0}^{\infty} \sqrt{\binom{n+m-1}{m-1}} r^n\vert(n +m -1)n\rangle\\
    &=&\sqrt{m}re^{-ms}   \vert(m-1)0\rangle  \notag  \\
     &&
     +\sum_{n=0}^{\infty} \left( e^{-(m+2)s} \sqrt{n+1}\sqrt{\binom{n+m}{m}} +\sqrt{m}r^2 e^{-ms} \sqrt{\binom{n+m}{m-1}} \right)r^{n}\vert(n+m)(n+1)\rangle. \nonumber\\
      \end{eqnarray}
Then, we can get
  \begin{multline}\label{eq:specN}
   \mathcal{N}(\vert m\rangle\langle m\vert)= m \vert r\vert^2 e^{-2ms} \vert m-1\rangle\langle m-1\vert\\
    +  \sum_{n=0}^{\infty}\left( e^{-(m+2)s} \sqrt{n+1}\sqrt{\binom{n+m}{m}}-\sqrt{m}\frac{q-1}{q} e^{-ms} \sqrt{\binom{n+m}{m-1}} \right)^2 \vert r\vert^{2n}\vert n+m\rangle\langle n+m\vert,
      \end{multline}
and
  \begin{multline}\label{eq:speccalN}
 \widetilde{\mathcal{N}}(\vert m\rangle\langle m\vert)= m \vert r\vert^2 e^{-2ms} \vert 0\rangle\langle 0\vert\\
  +    \sum_{n=0}^{\infty}\left( e^{-(m+2)s} \sqrt{n+1}\sqrt{\binom{n+m}{m}}-\sqrt{m}\frac{q-1}{q} e^{-ms} \sqrt{\binom{n+m}{m-1}}  \right)^2 \vert r\vert^{2n}\vert n+1\rangle\langle n+1\vert.
      \end{multline}

\bigskip
It is known that for any completely  positive map $\Gamma$ and two density matrices
$\rho$ and $\sigma$,
the following inequality for quantum relative entropy holds true (contractive property)
\begin{equation}
 D(\Gamma(\rho)\Vert \Gamma(\sigma))\leq D(\rho\Vert \sigma).
\end{equation}
By assuming the degradability condition for ${\cal N}$, we should have
\begin{equation}\label{eq:Dineq}
   D\left(\widetilde{\cal N}(\vert m_1\rangle\langle m_1\vert)\Big\Vert \widetilde{\cal N}(\vert m_2\rangle\langle m_2\vert)\right)
   \leq D\left({\cal N}(\vert m_1\rangle\langle m_1\vert)\big\Vert{\cal  N}(\vert m_2\rangle\langle m_2\vert)\right),
\end{equation}
for all $ m_1> m_2\in\mathbb{N}$.
From Eqs.~\eqref{eq:specN} and \eqref{eq:speccalN}, we have
\begin{equation}
D\left({\cal N}(\vert m_1\rangle\langle m_1\vert)\big\Vert{\cal  N}(\vert m_2\rangle\langle m_2\vert)\right)=m_1 \vert r\vert^2e^{-2m_1 s} \ln\frac{m_1 \vert r\vert^2e^{-2m_1 s} }{c^q_{m_2 (m_1- m_2-1)} }+\sum_{n=0}^{\infty} c^q_{m_1 n}\ln\frac{c^q_{m_1 n}}{c^q_{m_2 (m_1-m_2+n)}},
\end{equation}
and
\begin{equation}
D\left(\widetilde{\cal N}(\vert m_1\rangle\langle m_1\vert)\Big\Vert \widetilde{\cal N}(\vert m_2\rangle\langle m_2\vert)\right) =m_1 \vert r\vert^2e^{-2m_1 s} \ln\frac{m_1 \vert r\vert^2e^{-2m_1 s} }{m_2 \vert r\vert^2e^{-2m_2 s} }+\sum_{n=0}^{\infty} c^q_{m_1 n}\ln\frac{c^q_{m_1 n}}{c^q_{m_2 n}},
\end{equation}
where, according to \eqref{eq:specN} and \eqref{eq:speccalN}, we have defined
\begin{equation}
  \label{eq:cmndef}
  c_{mn}^{q}
   :=\left( e^{-(m+2)s} \sqrt{n+1}\sqrt{\binom{n+m}{m}}
                        -\sqrt{m}\frac{q-1}{q} e^{-ms} \sqrt{\binom{n+m}{m-1}} \right)^2 \vert r\vert^{2n}.
\end{equation}
By simple calculations, we get
\begin{equation}\label{eq:cmndefs}
  c_{mn}^q:=\frac{(n+1-m(q-1))^2}{(n+1)q^{m+2}}\binom{n+m}{m} \left(\frac{q-1}{q}\right)^n.
\end{equation}
Then, Eq.~\eqref{eq:Dineq} reads
\begin{align}
  \label{eq:60}
  m_1 \vert r\vert^2e^{-2m_1 s} \ln\frac{m_1 \vert r\vert^2e^{-2m_1 s} }{m_2 \vert r\vert^2e^{-2m_2 s} }+\sum_{n=0}^{\infty} c_{m_1 n}^{{q}}\ln\frac{c_{m_1 n}^{{q}}}{c_{m_2 n}^{{q}}}
  &\leq   m_1 \vert r\vert^2e^{-2m_1 s}
  \ln\frac{m_1 \vert r\vert^2e^{-2m_1 s} }{c_{m_2 (m_1-m_2-1)}^{{q}} }\nonumber\\
  &+\sum_{n=0}^{\infty} c_{m_1 n}^{{q}}\ln\frac{c_{m_1 n}^{{q}}}{c_{m_2 (m_1-m_2+n)}^{{q}}},
\end{align}
or more simply
\begin{equation}
  \label{eq:61}
  m_1 \vert r\vert^2e^{-2m_1 s} \ln\frac{m_2 \vert r\vert^2e^{-2m_2 s} }{c_{m_2 (m_1 - m_2-1)}^{{q}} }+\sum_{n=0}^{\infty} c_{m_1 n}^{{q}}\ln\frac{c_{m_2 n}^{{q}}}{c_{m_2 (m_1 - m_2+n)}^{{q}}}\geq 0.
\end{equation}
However this inequality can be violated.
In fact, it happens that $c_{m_2 n}^{{q}}=0$ when
\begin{equation}
   n=m(q-1)-1.
\end{equation}
It is then clear that \eqref{eq:61} may be violated when $q$ is close to integer numbers.
More precisely the following result holds true.

\begin{theorem}
For an arbitrary $q>1$, there exist integers $m_1 > m_2$ such that
Eq.~\eqref{eq:61} is not true.
\end{theorem}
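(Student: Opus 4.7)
My approach exploits the explicit factorisation from Eq.~\eqref{eq:cmndefs}: writing $c_{mn}^q = F(m,n,q)\cdot\bigl(n+1 - m(q-1)\bigr)^2$ with the strictly positive factor $F(m,n,q) = \binom{n+m}{m}\bigl((q-1)/q\bigr)^n / \bigl((n+1)q^{m+2}\bigr)$, we see that $c_{mn}^q$ vanishes exactly on the line $n+1 = m(q-1)$. I will pick $(m_1,m_2)$ and an index $n_0$ so that $c_{m_2 n_0}^q = 0$ (or is tiny), forcing one summand of the sum in \eqref{eq:61} to $-\infty$ while all other contributions remain finite. The two natural regimes are rational and irrational $q-1$, treated separately.

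\textbf{Rational case $q-1 = p/r$ in lowest terms.} Set $m_2 := r$, $n_0 := p-1$, and $m_1 := p + r + 2$. A direct inspection of \eqref{eq:cmndefs} yields: $c_{m_2, n_0}^q = 0$ exactly; $c_{m_1, n_0}^q > 0$ (its quadratic factor is $(p(p+2)/r)^2$); the denominator $c_{m_2,\,m_1-m_2+n_0}^q = c_{r,\,2p+1}^q$ in the $n_0$-summand is strictly positive; no $n \ge 0$ satisfies $m_1-m_2+n = n_0$ (indeed $m_1-m_2 = p+2 > n_0$, so the would-be index is $-3$); and the first-term denominator $c_{m_2,\,m_1-m_2-1}^q = c_{r,\,p+1}^q$ has quadratic factor $4 > 0$. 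Consequently, the $n = n_0$ summand of \eqref{eq:61} equals $c_{m_1, n_0}^q \cdot \ln 0 = -\infty$, while every other summand and the first term in \eqref{eq:61} are finite; hence the LHS of \eqref{eq:61} is $-\infty < 0$, violating the inequality. Equivalently, comparing the supports of the output states, one checks that $|n_0+1\rangle = |p\rangle \in \operatorname{supp}\widetilde{\cal N}(|m_1\rangle\langle m_1|) \setminus \operatorname{supp}\widetilde{\cal N}(|m_2\rangle\langle m_2|)$, so $D(\widetilde{\cal N}(\rho_1)\|\widetilde{\cal N}(\rho_2))=+\infty$, whereas $\operatorname{supp}{\cal N}(\rho_1)\subseteq\operatorname{supp}{\cal N}(\rho_2)$ (since $p+r+1 > r+p-1$) keeps $D({\cal N}(\rho_1)\|{\cal N}(\rho_2)) < +\infty$, directly breaking data processing.

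\textbf{Irrational case.} Here $c_{mn}^q > 0$ for every integer pair, so both relative entropies are finite and \eqref{eq:61} becomes a genuine numerical statement. For any fixed $(m_1,m_2)$, LHS of \eqref{eq:61} is continuous in $q$ in every region where none of the involved $c$'s vanish. Using the choice $(m_1, m_2) = (p+r+2, r)$ associated with some rational $q_0 = 1 + p/r$ as above, the LHS tends to $-\infty$ as $q \to q_0$, because $c_{m_2, n_0}^q$ has a double zero at $q_0$ while the other terms admit a continuous nonzero limit. Hence the inequality already fails on an open interval $(q_0 - \Delta(q_0), q_0 + \Delta(q_0))$ around $q_0$. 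Since the rationals $\{1 + p/r\}$ are dense in $(1,\infty)$, every irrational $q > 1$ lies in such an interval for a suitable $(p,r)$, and the corresponding pair $(p+r+2, r)$ serves as the required $(m_1, m_2)$.

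\textbf{Main obstacle.} The delicate point is to match the width $\Delta(q_0)$ of the open violation-neighborhood around $q_0$ with the Diophantine quality of the rational approximation. A local expansion gives LHS $\sim c_{m_1, n_0}^{q_0}\cdot 2\ln|q-q_0| + O(1)$, so $\Delta(q_0) \sim \exp\bigl(-O(1)/c_{m_1, n_0}^{q_0}\bigr)$, which can shrink as the denominator $r$ grows. The hard part will be to verify that for every irrational $q > 1$ some rational $q_0 = 1 + p/r$ lies within $\Delta(q_0)$ of $q$; I expect this to follow either from a careful Dirichlet-type argument exploiting the freedom in $r$, or from a covering argument showing that the union of the violation neighborhoods over all coprime pairs $(p,r)$ in fact exhausts $(1,\infty)$.
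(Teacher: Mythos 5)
Your strategy is essentially the one the paper uses: for rational $q$ you force an exact zero $c_{m_2 n_0}^{q}=0$ so that one summand of \eqref{eq:61} (equivalently, the relative entropy on the $F$ side) blows up while the $B$-side relative entropy stays finite, and for irrational $q$ you try to inherit the violation by continuity from a nearby rational. Your treatment of the rational case is correct and in fact more explicit than the paper's: the concrete choice $m_2=r$, $n_0=p-1$, $m_1=p+r+2$, the verification that the quadratic factors of $c_{m_1 n_0}^{q}$, $c_{m_2 (m_1-m_2-1)}^{q}$ and $c_{m_2 (m_1-m_2+n_0)}^{q}$ are all nonzero, and the support comparison together give a clean, complete argument there. (You should still say a word about why the tail of the sum in \eqref{eq:61} cannot contribute $+\infty$ and spoil the conclusion ``LHS $=-\infty$''; this follows from the boundedness of $\ln\bigl(c_{m_2 n}^{q}/c_{m_2 (m_1-m_2+n)}^{q}\bigr)$ for large $n$, as in the paper's Eq.~\eqref{eq:64}.)

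The genuine gap is the irrational case, and you have named it yourself: you never prove that every irrational $q>1$ lies inside one of the violation neighbourhoods $(q_0-\Delta(q_0),\,q_0+\Delta(q_0))$. Density of the rationals alone does not give this, because, as you observe, $\Delta(q_0)$ behaves like $\exp\bigl(-O(1)/c_{m_1 n_0}^{q_0}\bigr)$, and $c_{m_1 n_0}^{q_0}$ can decay exponentially in $m_1=p+r+2$ as the denominator $r$ of the approximating rational grows; so the neighbourhoods may shrink much faster than the Dirichlet approximation quality $1/r^{2}$ improves. The ``Dirichlet-type'' or covering argument that would close this is announced but not supplied, so your proposal does not establish the theorem for irrational $q$. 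For what it is worth, the paper's own proof has exactly the same structure and ends with the same unquantified step (``since the set of rational numbers is dense \dots the proof follows''), so you have correctly located the crux; but as written, your argument, like the paper's, is rigorous only for rational $q$ and for $q$ in some uncontrolled punctured neighbourhood of each rational.
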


\begin{proof}
Let us  consider a fixed rational number $q=\frac{x}{y}>1$.
By selecting $m_2=y$ and $n'=x-y-1$, we have
\begin{equation}
n'=m_2 (q-1)-1,
\end{equation}
that,  from   \eqref{eq:cmndefs}, guarantees $c_{m_2 n'}^{{q}}=0$. On the other hand, if  there exists  $n''$ such that $c_{m_2 ( n''+m_1 - m_2)}^{{q}}=0$ for such $q$,  then we should have
\begin{equation}
  \frac{n''+m_1 - m_2+1}{m_2}=q-1 \quad \Rightarrow \quad     \frac{n''+m_1 - m_2+1}{y}=\frac{x-y}{y}.
\end{equation}
We choose $m_1$ such that $m_1  - y\neq x-y- n''-1$ for any integer $n''=0, 1,\ldots,$  in order to have
$c_{m_2 ( n''+m_1 - m_2)}^{{q}}=0$. We also have $c_{m_1 n'}^{{q}}\neq 0$ to ensure that
\begin{equation}
   c_{m_1 n'}^{{q}}\ln \frac{c_{m_2 n'}^{{q}}}{c_{m_2 (m_1 - m_2+n')}^{{q}}}=-\infty,
\end{equation}
By considering the above descriptions, we  show that relation \eqref{eq:61} violates
for  given small radius $\varepsilon>0$ and  $q< q'< q+ \varepsilon$.
In other words, it is clear that $c_{m_2 n}^{{q'}}\neq 0$, $\forall\; n$ \footnote{
This holds true for $q'$ irrational number.
If $q'$ is a rational number such that $c_{m_2 n}^{ q}=0$, we should have
\begin{equation}
  \frac{k+1}{m_2}=q'-1.
\end{equation}
On the other hand, we have $\vert q-q'\vert \leq \varepsilon$ and hence
\begin{equation}
  \vert q-q' \vert \leq \varepsilon \Rightarrow \left\vert \frac{x}{m_2}-\frac{k+1+ m_2}{m_2}\right\vert \leq \varepsilon,
\end{equation}
which implies that $x=k+m_2+1$ and so $q'=q$. }  and so we have the condition  ${\rm supp}\left({\cal N}(\vert m_1\rangle\langle m_1 \vert)\right) \subseteq {\rm supp}
\left({\cal N}(\vert m_2\rangle\langle m_2 \vert)\right)$.
 Now, for each  $n\geq m_2(q-1)+m_1$, we have
\begin{eqnarray}
  \frac{c_{m_2 n}^{q'}}{c_{m_2 (m_1 - m_2+n)}^{q'}} &=& \frac{     \frac{ (n+1-m_2(q'-1))^2}{(n+1)q'^{m_2+2}}\binom{n+m_2}{m_2} \left( \frac{q'-1}{q'}\right)^n  }{    \frac{ (n+m_1-m_2+1-m_2(q'-1))^2}{(n+m_1-m_2+1)q'^{m_2+2}}\binom{n+m_1-m_2+m_2}{m_2} \left( \frac{q'-1}{q'}\right)^{n+m_1-m_2}     }  \\
  &\leq &  \frac{     \frac{ (n+1-m_2(q'-1))^2}{(n+1)}\binom{n+m_2}{m_2}   }{    \frac{ (n+m_1-m_2+1-m_2(q'-1))^2}{(n+m_1-m_2+1)}\binom{n+m_1}{m_2} \left( \frac{q'-1}{q'}\right)^{m_1-m_2}     }\\
   &\leq& \frac{c_{m_2 n}^{q}}{c_{m_2 (m_1-m_2+n)}^{q}} \left(\frac{ \frac{q-1}{q}}{\frac{q'-1}{q'}}\right)^{m_1-m_2} \label{eq:66}\\
    &\leq&  \frac{c_{m_2 n}^{q}}{c_{m_2 (m_1-m_2+n)}^{q}}.\label{eq:66-1}
\end{eqnarray}
The  \ref{eq:66}  derives from
\begin{equation}
 \frac{n+1-m_2(q'-1)}{n+m_1-m_2+1-m_2(q'-1)} \leq  \frac{n+1-m_2(q-1)}{n+m_1-m_2+1-m_2(q-1)},
 \end{equation}
taking into account that $ n\geq m_2(q-1)+m_1$.

On the other hand, we have
\begin{equation}\label{eq:64}
  \lim_{n\to\infty} \frac{c_{m_2 n}^{ {q}}}{c_{m_2 (m_1-m_2+n)}^{{q}}}=\frac{1}{\vert r\vert^{2m_1}}=\left(\frac{{q}}{{q}-1}\right)^{m_1-m_2}.
\end{equation}
Therefore, for a given $\eta>0$ the exists a number $N_\eta$ such that for any $n\geq N_\eta\geq  m_2(q-1)+m_1$, it is
\begin{equation}
\frac{c_{m_2 n}^{ {q}}}{c_{m_2 (m_1-m_2+n)}^{ {q}}} \leq \left(\frac{{q}}{{q}-1}\right)^{m_1-m_2} +\eta.
\end{equation}
It then follows, using \eqref{eq:61} and the fact $\Tr(\mathcal{N}(\vert m_1\rangle\langle m_1 \vert))=1$, that
\begin{equation}\label{eq:65}
\sum_{n =  N_\eta}^{\infty} c_{m_1 n}^{ {q}}\ln\frac{c_{m_2 n}^{ {q}}}{c_{m_2 (m_1-m_2+n)}^{ {q}}}\leq   \ln\left\{\left(\frac{{q}}{{q}-1}\right)^{m_1-m_2} +\eta\right\}  \sum_{n=0}^{\infty} c_{m_1 n}\leq    \ln\left\{\left(\frac{{q}}{{q}-1}\right)^{m_1-m_2} +\eta\right\}.
\end{equation}
Using relations \eqref{eq:66-1} and \eqref{eq:65}, we can get
\begin{eqnarray}
 \sum_{n =  N_\eta}^{\infty} c_{m_1 n}^{  q'}\ln\frac{c_{m_2 n}^{  q'}}{c_{m_2 (m_1-m_2+n)}^{  q'}} &\leq & \sum_{n =  N_\eta}^{\infty} c_{m_1 n}^{  q'}\ln\frac{c_{m_2 n}^{  q}}{c_{m_2 (m_1-m_2+n)}^{  q}} \\
   &\leq & \sum_{n =  N_\eta}^{\infty} c_{m_1 n}^{  q'}\ln\frac{c_{m_2 n}^{  q}}{c_{m_2 (m_1-m_2+n)}^{  q}}\\
    &\leq & \ln\left\{\left(\frac{ q}{ q-1}\right)^{m_1-m_2} +\eta\right\}  \sum_{n=N_{\eta}}^{\infty} c_{m_1 n}^{q'}   \\
    &\leq &\ln\left\{\left(\frac{ q}{ q-1}\right)^{m_1-m_2} +\eta\right\} \label{eq:73} .
\end{eqnarray}

Finally, we find that Eq. \eqref{eq:61} holds for $q' \leq \Theta_1+\Theta_2$, where
\begin{align}
  \Theta_1&\equiv
  m_1 \vert r\vert^2e^{-2m_1 s} \ln\frac{m_2 \vert r\vert^2e^{-2m_2 s} }{c_{m_2 (m_1 -m_2-1)}^{{q'}} }+\sum_{n=0, n\neq n'}^{N_\eta-1} c_{m_1 n}^{{q'}}\ln\frac{c_{m_2 n}^{{q'}}}{c_{m_2 (m_1 -m_2+n)}^{{q'}}}
  +\ln\left\{\left(\frac{ q}{ q-1}\right)^{m_1-m_2} +\eta\right\},  \label{eq:198}   \\
\Theta_2&\equiv  c_{m_1 n'}^{{q'}}\ln\frac{c_{m_2 n'}^{{q'}}}{c_{m_2 (m_1 -m_2+n')}^{{q'}}}\label{eq:199}.
\end{align}
Now, when $\varepsilon$ goes to $0$, the quantity $\Theta_1$ will remain finite  (it is  continuous with respect to $q$), while the quantity $\Theta_2$ diverges to $-\infty$. Therefore, for any rational number $q$ we can find a set $(q,q+\epsilon)$, for $q'$, which violates \eqref{eq:61}.
Since the set of rational numbers is dense into the set of reals, the proof follows.
\end{proof}


\section{Bounds on capacities uncertainty}
\label{app:OMB}

In this appendix we derive,  on the basis of relations \eqref{eq:lbchiN} and \eqref{eq:lbchiM},
tighter lower bounds on the sum $\chi_{H\ox}  + \chi_{A\ox}$ than the one from Theorem \ref{thm:uncertainty} for one-mode Gaussian channels. The reasoning is
based on the classification of OMG channels given in \cite{H07},
and the bounds are derived by using coherent states encoding.

\medskip

\begin{itemize}
\item
\textbf{Class A1:} $\boldsymbol{M}=0$, $\boldsymbol{N} \boldsymbol{N}^\top =\boldsymbol{I}$.

We have
\begin{equation}
\chi_{H\ox} \geq S\left(\frac{1}{2}\boldsymbol{I}\right)- S\left(\frac{1}{2}\boldsymbol{I}\right) = 0 ,
\end{equation}
and
\begin{equation}
\chi_{A\ox} \geq S\left(\left(P_E+\frac{1}{2}\right) \boldsymbol{I}\right)
- S\left(\frac{1}{2}\boldsymbol{I}\right) \geq g\left(P_E+\frac{1}{2}\right).
\end{equation}
Therefore, we get
\begin{equation}
\chi_{H\ox}  + \chi_{A\ox} \geq g\left(P_E+\frac{1}{2}\right).
\end{equation}

\medskip

\item
\textbf{Class A2:} $\boldsymbol{M}=\left(
                       \begin{array}{cc}
                         1 & 0 \\
                         0 & 0 \\
                       \end{array}
                     \right)$, $\boldsymbol{N} \boldsymbol{N}^\top =\boldsymbol{I}$.

We have
\begin{equation}
  \chi_{H\ox} \geq S\left(\left(
                       \begin{array}{cc}
                         P_A+1 & 0 \\
                         0 & \frac{1}{2} \\
                       \end{array}
                     \right)\right)- S\left(\left(
                       \begin{array}{cc}
                         1 & 0 \\
                         0 & \frac{1}{2} \\
                       \end{array}
                     \right)\right) ,
\end{equation}
and
\begin{equation}
  \chi_{A\ox} \geq S\left(\left(
                       \begin{array}{cc}
                         P_E+1 & 0 \\
                         0 & P_E +\frac{1}{2}\\
                       \end{array}
                     \right)\right)- S\left(\left(
                       \begin{array}{cc}
                         1 & 0 \\
                         0 & \frac{1}{2} \\
                       \end{array}
                     \right)\right) .
 \end{equation}
Therefore, we get
\begin{equation}
\chi_{H\ox}  + \chi_{A\ox}
\geq g\left( \sqrt{\left(P_A+1 \right)\frac{1}{2}} \right) + g\left( \sqrt{\left(P_E+1 \right)\left(P_E+\frac{1}{2}\right)} \right) - 2g\left(\sqrt{\frac{1}{2}} \right).
\end{equation}

\medskip

\item
\textbf{Class B1:} $\boldsymbol{M}=\boldsymbol{I}$,
$\boldsymbol{N} \boldsymbol{N}^\top =\frac{1}{2N_0+1}  \left(
                       \begin{array}{cc}
                         1 & 0 \\
                         0 & 0 \\
                       \end{array}
                     \right)  $.

We have
\begin{equation}
  \chi_{H\ox}\geq S \left(\left(
                       \begin{array}{cc}
                         P_A+\frac{1}{2}+\frac{1}{2N_0+1} & 0 \\
                         0 &P_A+\frac{1}{2}\\
                       \end{array}
                     \right)\right) - S \left(\left(
                       \begin{array}{cc}
                         \frac{1}{2}+\frac{1}{ 4N_0+2} & 0 \\
                         0 &\frac{1}{2}\\
                       \end{array}
                     \right)\right) ,
\end{equation}
and
\begin{equation}
  \chi_{A\ox}\geq S \left(\left(
                       \begin{array}{cc}
                         \frac{P_E+\frac{1}{2}}{2N_0+1}+\frac{1}{2} & 0 \\
                         0 &\frac{1}{2}\\
                       \end{array}
                     \right)\right) - S \left(\left(
                       \begin{array}{cc}
                         \frac{1}{2}+\frac{1}{ 4N_0+2} & 0 \\
                         0 &\frac{1}{2}\\
                       \end{array}
                     \right)\right) .
\end{equation}
Therefore, we get
\begin{align}
  \chi_{H\ox}  + \chi_{A\ox}
     &\geq  g\left( \sqrt{ \left(P_A+\frac{1}{2}+\frac{1}{2N_0+1}\right)\left( P_A+\frac{1}{2}  \right) }  \right) \notag\\
     &+g\left( \sqrt{ \frac{P_E+\frac{1}{2}}{4 N_0+2} +\frac{1}{2}  } \right) \notag\\
     &-2 g\left(\sqrt{\frac{1}{4}+ \frac{1}{4N_0+2} }        \right) .
\end{align}

\medskip

\item
\textbf{Class B2:} $\boldsymbol{M}=\boldsymbol{I}$,
$\boldsymbol{N} \boldsymbol{N}^\top =\frac{N_0}{N_0+\frac{1}{2}}\boldsymbol{I}$.

We have
\begin{equation}
\chi_{H\ox}  \geq S\left(  \left(P_A+\frac{1}{2}+\frac{N_0}{2N_0+1}\right)
\boldsymbol{I}    \right)- S\left(\left(\frac{1}{2}+ \frac{N_0}{2N_0+1}\right)\boldsymbol{I}          \right) ,
\end{equation}
and
\begin{equation}
\chi_{A\ox} \geq S\left(  \left(\frac{(P_E+\frac{1}{2}) N_0}{N_0+\frac{1}{2}}+\frac{1}{2}\right)\boldsymbol{I}    \right)- S\left(\left(\frac{1}{2}+ \frac{N_0}{2N_0+1}\right)\boldsymbol{I}          \right).
\end{equation}
Therefore, we get
\begin{equation}
\chi_{H\ox}  + \chi_{A\ox}
\geq g\left(  P_A+\frac{1}{2}+\frac{N_0}{2N_0+1}    \right)+g\left(  \frac{(P_E+\frac{1}{2}) N_0}{N_0+\frac{1}{2}}+\frac{1}{2}   \right)-2 g\left(\frac{1}{2}+ \frac{N_0}{2N_0+1}         \right) .
\end{equation}

\medskip

\item
\textbf{Class C Att:} $\boldsymbol{M}=\sqrt \kappa \boldsymbol{I}$,
$\boldsymbol{N}^\top \boldsymbol{N} =(1- \kappa)\boldsymbol{I}$,  $0<\kappa<1$

We have
\begin{equation}
\chi_{H\ox}  \geq S\left(  \left(\left(P_A+\frac{1}{2}\right)\kappa+ 1- \kappa\right)
\boldsymbol{I}    \right)- S\left(\frac{1}{2} \boldsymbol{I} \right) ,
\end{equation}
and
\begin{equation}
\chi_{A\ox} \geq S\left(  \left(\left(P_E+\frac{1}{2}\right)(1- \kappa)+ \kappa\right)\boldsymbol{I}    \right)- S\left(\frac{1}{2}\boldsymbol{I}\right).
\end{equation}
Therefore, we get
\begin{equation}\label{eq:boundCatt}
\chi_{H\ox}  + \chi_{A\ox} \geq  g\left(  \left(P_A+\frac{1}{2}\right) \kappa + 1- \kappa    \right)+ g\left(  \left(P_E+\frac{1}{2}\right)(1-\kappa)+ \kappa   \right).
\end{equation}

\medskip

\item
\textbf{Class C Amp:} $\boldsymbol{M}=\sqrt{\kappa}\boldsymbol{I}$,
$\boldsymbol{N} \boldsymbol{N}^\top =(\kappa-1)\boldsymbol{I}$, for $\kappa > 1$.

We have
\begin{equation}
\chi_{H\ox} \geq S\left(  \left(\left(P_A+\frac{1}{2}\right)\kappa+ \kappa-1\right)\boldsymbol{I}    \right)- S\left(\left(\kappa-\frac{1}{2}\right)\boldsymbol{I} \right) ,
\end{equation}
and
\begin{equation}
\chi_{A\ox} \geq S\left(  \left(\left(P_E+\frac{1}{2}\right)(\kappa-1)+\kappa\right)\boldsymbol{I}    \right)- S\left(\left(\kappa-\frac{1}{2}\right)\boldsymbol{I} \right).
\end{equation}
Therefore, we get
\begin{equation}\label{eq:boundCampl}
\chi_{H\ox}  + \chi_{A\ox}
\geq  g\left( \left(P_A+\frac{1}{2}\right)\kappa+ \kappa-1 \right)
      +g\left( \left(P_E+\frac{1}{2}\right)(\kappa-1)+\kappa \right) - 2 g\left(\kappa-\frac{1}{2}\right).
 \end{equation}
\medskip

\item
\textbf{Class D:} $\boldsymbol{M}=\sqrt{-\kappa}\boldsymbol{Z}$,
$\boldsymbol{N}^\top \boldsymbol{N} =(1-\kappa)\boldsymbol{I}$, $\kappa\in(-\infty, 0)$

We have
\begin{equation}
\chi_{H\ox} \geq S\left(  \left(\left( P_A+\frac{1}{2}\right)\vert\kappa\vert+\vert 1-\kappa\vert\right)
\boldsymbol{I}    \right)- S\left(\frac{\vert\kappa\vert+\vert 1-\kappa\vert}{2}\boldsymbol{I}         \right) .
\end{equation}
and
\begin{equation}
\chi_{A\ox} \geq S\left( \left(\left( P_E+\frac{1}{2}\right)\left(\vert 1-\kappa\vert\right)+\vert\kappa\vert\right)
\boldsymbol{I}    \right)- S\left(\frac{\vert\kappa\vert+\vert 1-\kappa\vert}{2}\boldsymbol{I}    \right).
\end{equation}
Therefore, we obtain
\begin{equation}
\chi_{H\ox}  + \chi_{A\ox} \geq   g\left(   \left(P_A+\frac{1}{2}\right)\vert\kappa\vert +1-\kappa   \right)+ g\left(  \left( P_E+\frac{1}{2}\right)( 1-\kappa)+\vert\kappa\vert    \right)- 2g\left(\frac{\vert\kappa\vert+\vert 1-\kappa\vert}{2}   \right).
\end{equation}
\end{itemize}

\begin{remark}
For an easy comparison with the bound in Theorem \ref{thm:uncertainty}, let us consider the class C.
The r.h.s. of \eqref{eq:boundCatt} and \eqref{eq:boundCampl} can be put together as
\begin{equation}\label{eq:boundtog}
  g\left( \left(P_A+\frac{1}{2}\right)\kappa+|1-\kappa| \right)
    +g\left(  \left(P_E+\frac{1}{2}\right)|1-\kappa|+\kappa \right) - 2 g\left(\frac{|1-\kappa|+\kappa}{2}\right).
\end{equation}
Due to the properties of the function $g$ defined in \eqref{eq:gfunction}, it is
\begin{align}\label{eq:newboundtog}
\text{Eq.\eqref{eq:boundtog}} &\geq g\left(  \left(\min\left\{P_A,P_E\right\}+\frac{1}{2}\right)\kappa+|1- \kappa|    \right)+g\left(  \left(\min\left\{P_A,P_E\right\}+\frac{1}{2}\right)|1-\kappa|+\kappa    \right) \notag\\
&-2 g\left(\frac{|1-\kappa|+\kappa}{2}\right).
\end{align}
Still referring to the properties of the function $g$, we have that the quantity \eqref{eq:newboundtog}
grows, in terms of $\min\left\{P_A,P_E\right\}$, faster than \eqref{eq:uncertainty}. Thus, the minimum difference between the two bounds (\eqref{eq:boundtog} and \eqref{eq:uncertainty}) is achieved when
$\min\left\{P_A,P_E\right\}$ goes to zero and results
at least as much big as
\begin{align}
&g\left(1-\frac{1}{2}\kappa\right)+g\left(\frac{1+\kappa}{2}\right), \qquad \kappa<1 \\
&2g\left(\frac{3}{2}\kappa-1\right)-2g\left(\kappa-\frac{1}{2}\right),\qquad \kappa>1.
\end{align}
\end{remark}
These two quantities being positive, this shows the tightness of
\eqref{eq:boundtog} with respect to \eqref{eq:uncertainty}.


\end{document}